\newcommand\mv[1]{\langle #1 \rangle}
\newcommand\ep{\varepsilon}
\newcommand\pd[2]{\tfrac{\partial #1}{\partial #2}}
\renewenvironment{thebibliography}[1]{\begin{list}{\arabic{enumi}.}
{\usecounter{enumi}\setlength{\parsep}{0pt}}}{\end{list}}
\newtheorem{thm}{Theorem}[section]
\newtheorem{prop}{Proposition}[section]
\newtheorem{lem}{Lemma}[section]
\newtheorem{rem}{Remark}[section]
\newtheorem{cor}{Corollary}[section]
\numberwithin{equation}{section}
\renewcommand{\i}{\mathrm{i}}
\renewcommand{\t}{\theta}
\newcommand{\RR}{\mathbb{R}}
\newcommand{\ZZ}{\mathbb{Z}}
\newcommand{\NN}{\mathbb{N}}
\newcommand{\CC}{\mathbb{C}}
\newcommand{\B}{\mathcal{B}}
\newcommand{\G}{\mathcal{G}}
\renewcommand{\H}{\mathcal{H}}
\newcommand{\norm}[1]{\| #1\|}
\newcommand{\e}{\varepsilon}
\renewcommand{\div}{\mathrm{div}\,}
\newcommand{\curl}{\mathrm{curl}\,}
\author[1,2]{\vspace{-.2cm}Shane Cooper}
\author[1]{Ilia V. Kamotski}
\affil[1]{\footnotesize Department of Mathematics, University College London, Gordon Street, London WC1E 6BT, UK.}
\affil[2]{\footnotesize Corresponding author. Email: s.cooper@ucl.ac.uk}
\title{{On photonic band gaps in two-dimensional photonic crystal fibres. Analysis in the vicinity of the low-dielectric light line.}}
\begin{document}
\maketitle
\begin{abstract} 
We consider  `off-axis' electromagnetic wave propagation down the homogeneous direction of a low-loss two-dimensional  periodic dielectric (or photonic crystal fibre) near the light line of the low-dielectric material constituent. Numerous physical and numerical experiments demonstrate the presence of photonic band gaps in the vicinity of this `critical' light line.  We mathematically analyse the existence of  photonic band gaps near the line and characterise them in terms of frequency gaps in the spectrum of the Maxwell equations restricted to the line. We apply the results to both one-dimensional photonic crystal fibres, and a genuinely two-dimensional photonic crystal fibres with `thin' inclusions, namely ``ARROW'' fibres. In the case of ARROW fibres, by an asymptotic analysis,  in terms of the small inclusion parameter, we demonstrate the existence of low frequency photonic band  gaps. It is important to note that our analysis does not assume any specific ratio between the dielectric contrasts; as a result, moderate or even low contrast models are fully encompassed. Similarly, no limiting assumptions are imposed on the wave propagation constant along the fibre.
\end{abstract}

\section{Introduction}
A two-dimensional photonic crystal fibre is a low-loss periodic dielectric with material properties that are homogeneous along one of its axes. Photonic crystal fibres (PCFs) are known to possess photonic band gaps; that is, some PCFs prevent certain frequency ranges of light from propagating in some, or all, directions. Such devices are of practical importance, as they can be used to construct waveguides with optical performances that surpass traditional optical waveguides (which operate on the principle of total internal reflection), such as omnidirectional reflectors and narrow-band filters (see \cite{MFL} and references therein). Consequently, an important mathematical task is to analytically characterise photonic band gaps in terms of a PCF's material and geometric properties.

In physical literature, the origin of a photonic band gap is explained as follows (cf. \cite{Ru03}): suppose we consider plane-wave solutions to the Maxwell system with frequency $\omega$ and wave number $k$ down a given axis, through a given homogeneous dielectric material with constant dielectric permittivity $\epsilon$. Then, the $(\omega,k)$-plane is divided into two regions by the `light line' $\omega(k) = c k / \sqrt{\epsilon}$, for $c$ the speed of light; above this line, waves propagate in the direction perpendicular to the axis, and below this line, waves are evanescent in the perpendicular direction. For propagation down a photonic crystal fibre, with wave number $k$ and frequency $\omega$, the key observation of the above argument is this: if $(\omega,k)$ is above the light line for the low-dielectric matrix phase, the whole PCF can support the wave, but just below the low-dielectric matrix light line (and above the high-dielectric inclusion light line) the matrix phase can no longer support propagating EM waves. It is argued that, for this reason, one should expect photonic band gaps to appear near the low-dielectric inclusion light line. This heuristic argument is supported in literature through numerous physical and numerical experiments (\cite{Ru03}, \cite{2Lim}, \cite{FoPCF}). We henceforth refer to the low-dielectric inclusion light line as the critical (light) line. 

In this work we provide a rigorous quantitative mathematical analysis of the above heuristics. Before discussing this article's results, we review previous related mathematical results.  The rigorous mathematical theory of band gaps has largely focused on trivial wave number $k$ regimes or the high-contrast between material parameters regime. In classical theory, $k=0$ is assumed and the Maxwell system decouples to scalar Helmholtz equation due to high symmetries that allow polarisation. In the setting $k \rightarrow \infty$ the Maxwell system asymptotically decouples once more to the scalar regime and this was rigorously analysed in \cite{Bon} via asymptotic methods. As for the high-contrast regime, various asymptotic approaches have been used to yield explicit spectral gap results (e.g., \cite{FiKu}; \cite{AmKaLi}, \cite{BoBoFe},  \cite{zhikov2}) and the \cite{LiVi} considered how large the contrast must be to induce spectral gaps.

In this work, we do not assume the high-contrast regime; in fact, for non-trivial wave propagation down the photonic crystal fibre, i.e., $k\neq 0$ nor $k\not\rightarrow \infty$ , the EM field does not admit polarisations; consequently, EM propagation directly above the critical line reduces the Maxwell system to a system of two second-order elliptic differential spectral equations with coefficients that artificially" become highly contrasting and an ellipticity constant that degenerates to zero as one approaches the critical light line ($k \rightarrow 1$). In \cite{SCth}, the author developed further two-scale resolvent convergence methods, of \cite{zhikov1}, \cite{zhikov2}, and \cite{IVKVPS}, to analyse the asymptotics of the Bloch spectrum of the associated degenerating high-contrast differential operator, as one approached the critical light line. It was shown that the Bloch spectrum converged to the spectrum of some `limit' operator, and that photonic band gaps near the critical line will appear if the determined limit operator's spectrum possessed gaps.

Here, we improve on the results of \cite{IVKVPS} in several key ways. First, we establish that the limit operator, found in \cite{SCth}, is precisely the Maxwell operator restricted to the critical line. Furthermore, we utilise some methods of \cite{SCIKVPS23} to improve the above convergence results to provide operator-type error estimates between the Maxwell operators on and above the critical light line, respectively, in terms of the distance to the line. This implies, among other things, that if the Maxwell operator on the critical light line has gaps in its spectrum, then photonic band gaps exist in a quantifiable neighbourhood of the restricted Maxwell spectral gaps. Finally, we study the particular problem of ARROW fibres, i.e., PCFs with thin inclusions, to demonstrate the existence of low-frequency gaps in this restricted Maxwell spectrum and therefore, by the above results, prove the existence of low-frequency photonic band gaps for ARROW PCF.

 More precisely, the results and structure of the article are as follows. We consider, see Section \ref{secPr}, electromagnetic waves
\begin{equation*}
%	 \label{ipropsol}
	e^{{\rm i}\omega (kx_3 - t)}E(x_1,x_2), \ \ \ \  e^{{\rm i}\omega (kx_3-t)}H(x_1,x_2), 
\end{equation*} 
 with $\theta$-quasiperiodic  electric and magnetic amplitudes, $E$ and $H$ respectively, 
propagating down the homogeneous $x_3$-direction of non-magnetic two-dimensional PCF with (normalised) dielectric permittivity $\epsilon$ equal to $\epsilon_0>1$ in the inclusion phase and $1$ in the surrounding `matrix' phase.

A photonic band gap, by definition, consists of non-discrete regions (or `holes') in the  $(\omega, k)$-plane for which there are no non-trivial solutions of the above form  to the Maxwell system.  Now, such EM waves are solutions to the Maxwell system (for speed of light $c=1$) if, and only if, the amplitudes $E$ and $H$ satisfy the following system:
 
 \begin{equation}
 	\label{iMax1}
 	\begin{aligned}
 		\i 	\omega	\left( \begin{array}{cc}
 			k & - \epsilon  \\-	1  & k
 		\end{array}\right) \left( \begin{array}{c}
 			H_1 \\  E_2
 		\end{array}\right) = \left( \begin{array}{c}
 			H_{3,1} \\E_{3,2} 
 		\end{array}\right), & \hspace{.5cm} &
 		\i	\omega	\left( \begin{array}{cc}
 			k & \epsilon\\ 1 & k 
 		\end{array}\right) \left( \begin{array}{c}
 			H_2 \\ E_1
 		\end{array}\right) = \left( \begin{array}{c}
 			H_{3,2} \\E_{3,1}  
 		\end{array}\right),
 	\end{aligned}
 \end{equation}
 and
 \begin{equation*}
% 	\label{iMax2}
 	\begin{aligned}
 		E_{1,2} - E_{2,1}&=& \i \omega  H_3,& \hspace{1cm} &  	 H_{2,1} - H_{1,2} = \i \omega {\epsilon} E_3.
 	\end{aligned}
 \end{equation*}
   In our notation, being on the critical (low-dielectric constituent) light line corresponds to $k=1$ and being above this line corresponds to $k<1$. For $k=1$, the matrices in \eqref{iMax1} are clearly not invertible in matrix constituent (where $\epsilon =1$). This Maxwell system is `degenerate' and, in  Section \ref{s:k1}, we establish that plane wave solutions solve the Maxwell system on the critical light line ($k=1$)  if, and only if, $u = (H_3,E_3)$ solves the spectral problem
   \[
   \mathcal{B}(\t) u = \omega^2 u,
   \]
for some positive differential operator $\mathcal{B}(\t)$ whose domain is a subset of  $\t$-quasiperiodic vector fields $\phi$ that satisfy the Cauchy-Riemann-type constraints
\begin{equation}\label{iCR}
    \phi_{1,1} +  \phi_{2,2} = 0 \quad \text{and} \quad  \phi_{2,1}- \phi_{1,2}= 0 \quad \text{in $\square \backslash R$.}
\end{equation}
   Here $\square$ and $R$ are, respectively, the (two-dimensional) periodic reference cell and inclusion cross section of the PCF.    The above constraint is due to the non-trivial kernels of the matrices in \eqref{iMax1}; the fact that $(H_3,E_3)$ satisfies this constraint is precisely the Fredholm alternative for the algebraic systems in \eqref{iMax1}. 
   
   Above the critical line, the matrices in \eqref{iMax1} are invertible and, just above the critical line, precisely for $k=\sqrt{1- \epsilon}$,  plane wave solutions solve the Maxwell system if, and only if, $u=(H_3,E_3)$  solves an (unconstrained) spectral problem
    \[
    \mathcal{B}_\ep(\t) u = \omega^2 u,
    \]
    for some positive elliptic second-order differential operator (see Section \ref{s2.2}).      We prove, in Section \ref{s.resolventestimates}, that the Maxwell system above the critical line converges to the Maxwell system on the line in a nice continuous manner; or  more precisely, the pseudo-resolvent of $\B(\t)$ and resolvent of  $\B_\ep(\t)$ are, unformily in $\t$, close with respect to $\ep$ in the uniform operator topology:
    \begin{lem}
There exists a constant $C>0$ such that 
\[
\| (\B_\ep(\t)+I)^{-1} -(\B(\theta)+I)^{-1} P  \|_{L^2(\square;\CC^2) \rightarrow L^2(\square;\CC^2)} \le C \ep, \quad \forall \t \in \square^*, \forall\, 0<\ep<1.
\]
    \end{lem}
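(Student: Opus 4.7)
The plan is to regard $\mathcal{B}_\ep(\t)$ as a penalised version of $\mathcal{B}(\t)$ and to reduce the statement to a single norm estimate on $(\mathcal{B}_\ep+I)^{-1}$ restricted to the $L^2$-orthogonal complement of the constraint space. Let $V(\t)$ denote the natural $\t$-quasiperiodic $H^1$-type form domain of $\mathcal{B}_\ep(\t)$, and $V_0(\t)\subset V(\t)$ the subspace satisfying the Cauchy--Riemann constraints \eqref{iCR} (equivalently, the form domain of $\mathcal{B}(\t)$); $P$ is the $L^2$-orthogonal projection onto the closure of $V_0(\t)$. A Schur-complement computation on the matrices in \eqref{iMax1} --- whose determinants in the matrix phase (where $\epsilon=1$) are exactly $-\ep$ --- should yield the penalty decomposition
\[
\mathfrak{b}_\ep(u,v)\;=\;\mathfrak{b}_0(u,v)\;+\;\tfrac{1}{\ep}\,\mathfrak{p}(u,v),\qquad u,v\in V(\t),
\]
where $\mathfrak{b}_0$ extends the form of $\mathcal{B}(\t)$ to $V(\t)$ and $\mathfrak{p}\ge 0$ is supported on $\square\setminus R$ with kernel exactly $V_0(\t)$.

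The key first reduction is that $u_0:=(\mathcal{B}(\t)+I)^{-1}Pf\in V_0(\t)$ lies in $\ker\mathfrak{p}$, so $(\mathcal{B}_\ep(\t)+I)u_0=(\mathcal{B}(\t)+I)u_0=Pf$ and hence
\[
(\mathcal{B}_\ep(\t)+I)^{-1}f - (\mathcal{B}(\t)+I)^{-1}Pf \;=\; (\mathcal{B}_\ep(\t)+I)^{-1}(I-P)f.
\]
Thus it suffices to show that for every $g\in L^2$ with $g\perp V_0(\t)$, the function $w:=(\mathcal{B}_\ep(\t)+I)^{-1}g$ satisfies $\|w\|_{L^2}\le C\ep\|g\|_{L^2}$ with $C$ independent of $\t$.

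Testing $(\mathcal{B}_\ep+I)w=g$ against $w$, using $(g,w)=(g,(I-P)w)$ (since $Pw$ lies in the $L^2$-closure of $V_0$ and $g\perp V_0$), and invoking the $\t$-uniform Poincar\'e-type coercivity $\mathfrak{p}(v,v)\ge c_P\|(I-P)v\|_{L^2}^2$ on $V(\t)$ yields
\[
\|w\|_{L^2}^2 + \tfrac{c_P}{\ep}\|(I-P)w\|_{L^2}^2 \;\le\; \|g\|_{L^2}\|(I-P)w\|_{L^2},
\]
which gives the sharp bound $\|(I-P)w\|_{L^2}\le C\ep\|g\|$. To control $\|Pw\|_{L^2}$, test against $v=Pw\in V_0(\t)$, where the penalty vanishes: the identity $\mathfrak{b}_0(w,Pw)+\|Pw\|^2=0$ exhibits $Pw$ as the unique solution in $V_0(\t)$ of a variational problem driven by the functional $v\mapsto -\mathfrak{b}_0((I-P)w,v)$, and Lax--Milgram applied to $\mathfrak{b}_0+I$ on $V_0(\t)$, combined with the (bootstrapped) smallness of $(I-P)w$ in the energy norm of $\mathfrak{b}_0$, upgrades $\|Pw\|_{L^2}$ to $O(\ep)$; duality in $f,g$ then delivers the asserted operator-norm estimate uniformly in $\t$.

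\emph{Main obstacle.} The principal technical difficulties are (i) establishing the clean penalty decomposition $\mathfrak{b}_\ep=\mathfrak{b}_0+\ep^{-1}\mathfrak{p}$ with $\ker\mathfrak{p}=V_0(\t)$ by the Schur-complement computation, and (ii) securing the $\t$-uniform Poincar\'e coercivity $\mathfrak{p}(v,v)\ge c_P\|(I-P)v\|^2$, which reduces to the closed-range property of the Cauchy--Riemann operator on $\square\setminus R$ with $\t$-quasiperiodic boundary conditions. The most delicate analytical step is the sharpening from the naive $O(\sqrt{\ep})$ rate on $\|Pw\|_{L^2}$ to the required linear rate $O(\ep)$: it hinges on showing that $(I-P)w$ is small not only in $L^2$ but also in the energy norm of $\mathfrak{b}_0$, effectively chaining two applications of the Poincar\'e inequality through the decomposed variational structure.
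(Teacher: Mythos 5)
Your ``key first reduction'' is where the argument breaks. You claim that because $u_0:=(\B(\t)+I)^{-1}Pf$ lies in $V(\t)$ (the kernel of the penalty), one has $(\B_\ep(\t)+I)u_0=(\B(\t)+I)u_0=Pf$, and hence
\[
(\B_\ep(\t)+I)^{-1}f-(\B(\t)+I)^{-1}Pf=(\B_\ep(\t)+I)^{-1}(I-P)f.
\]
This is false. The operator $\B(\t)$ is a \emph{constrained} operator: its defining identity $b(u_0,\tilde v)+c(u_0,\tilde v)=c(Pf,\tilde v)$ is only known to hold for $\tilde v\in V(\t)$. By contrast, $(\B_\ep(\t)+I)u_0=Pf$ would require $\ep^{-1}a(u_0,\phi)+b_\ep(u_0,\phi)+c(u_0,\phi)=c(Pf,\phi)$ for all $\phi\in H^1_\t(\square;\CC^2)$; testing against $\phi$ in the orthogonal complement $W(\t)$ produces extra terms that the constrained equation says nothing about. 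In particular, $u_0$ need not belong to $\mathrm{dom}(\B_\ep(\t))$ at all, and even formally $\B_\ep(\t)$ and $\B(\t)$ do not act the same way on $u_0$. If your claim were true it would say that $(\B_\ep(\t)+I)^{-1}$ and $(\B(\t)+I)^{-1}P$ agree \emph{exactly} on $\mathrm{ran}\,P$ for every $\ep$, which is much stronger than the Lemma and clearly wrong (the two operators have different spectra). Everything downstream of this reduction therefore does not stand: in particular, the bound $\|(\B_\ep+I)^{-1}(I-P)\|_{L^2\to L^2}\le C\ep$, even if true, no longer yields the Lemma.

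There are two further, subsidiary problems. First, the clean decomposition $\mathfrak b_\ep=\mathfrak b_0+\ep^{-1}\mathfrak p$ with $\mathfrak b_0$ independent of $\ep$ is not what the Schur complement produces: the form generating $\B_\ep(\t)$ is $\ep^{-1}a+b_\ep$, and $b_\ep$ (see \eqref{bepform}) itself depends on $\ep$. This is not fatal but must be handled --- the paper does so via a relative form bound (Proposition \ref{relboundb}). Second, you test against $Pw$; but $P$ is the $L^2$-orthogonal projection onto $\H$, and $Pw$ is only known to be in $L^2$, not in the form domain $H^1_\t(\square;\CC^2)$, so the pairing $\mathfrak b_0(w,Pw)$ is not defined as written.

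The paper's route is genuinely different where it matters. It introduces an intermediate $W_{\ep,\t}\in H^1_\t$ solving $\ep^{-1}a(W_{\ep,\t},\phi)+b(W_{\ep,\t},\phi)+c(W_{\ep,\t},\phi)=\langle F,\phi\rangle$ for all $\phi\in H^1_\t$ (replacing $b_\ep$ by the $\ep$-independent $b$, keeping the penalty and the full test space), then proves (i) $\|U_{\ep,\t}-W_{\ep,\t}\|_{H^1}\le C\ep\|F\|_{H^{-1}}$ from the relative bound on $b_\ep-b$, and (ii) $\|W_{\ep,\t}-V_\t\|_{H^1}\le C\ep\|F\|_{H^{-1}}$ by invoking the abstract degenerate-variational framework of \cite{SCIKVPS23}, whose crucial hypothesis --- uniform-in-$\t$ coercivity of $a$ on the orthogonal complement of its kernel --- is established via the Lipschitz continuity of $\t\mapsto V(\t)$ in Theorem \ref{lem:spcom1}. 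Step (ii) is precisely the genuine content of comparing the penalised and constrained problems; it is where the correct replacement for your false reduction lives, and it requires real work (Section \ref{secblochlim}) that your sketch flags as ``the closed-range property'' but does not engage with.
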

    Here, $P$ is a projection arising from the degeneracy of the Maxwell system on the critical line  (i.e. the above Cauchy-Riemann-type constraint).  This operator estimate, amongst other things, implies that the Bloch spectrum $\sigma_\ep$, of the Maxwell system on lines  above the critical line, converges with rate to the Bloch  spectrum $\sigma_0$, of the Maxwell system on the critical line, as you approach it: 
 \begin{lem}
% 	\label{thmSpecFell}
 	For every compact $[a,b]$ in $\RR$, 
 	one has
 	\[
 	d_{[a,b]} \left( \sigma_\ep, \sigma_0\right) \le C_b \ep,
 	\]
 	for some positive constant $C_b$ independent of $\ep$ and $a$.  Here,
 	\[
 	d_{[a,b]}(X,Y) : = \max\{{\rm{dist}([a,b] \cap X, Y)}, \rm{dist}( [a,b]\cap Y,X) \}
 	\]
 	where $\rm{dist}(X,Y) := \sup_{x\in  X} \rm{dist}(x,Y)$ is the non-symmetric distance of  set $X$ from set $Y$ (adopting the usual convention that $\rm{dist}(\emptyset,A)=\rm{dist}(A,\emptyset) =0$ for any set $A$).
 \end{lem}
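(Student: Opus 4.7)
The strategy is to turn the uniform operator-norm estimate of the previous lemma into a Hausdorff-type bound on spectra via functional calculus, then take the union over $\t \in \square^*$. Write $S_\ep(\t) := (\B_\ep(\t)+I)^{-1}$ and $T(\t) := (\B(\t)+I)^{-1}P$. Both are bounded self-adjoint operators on $L^2(\square;\CC^2)$, and by the preceding lemma $\|S_\ep(\t) - T(\t)\| \le C\ep$ uniformly in $\t$. The standard fact (Weyl-type perturbation for bounded self-adjoint operators) that $d_H(\sigma(A),\sigma(B)) \le \|A-B\|$ then gives a $C\ep$ Hausdorff estimate between $\sigma(S_\ep(\t))$ and $\sigma(T(\t))$, uniformly in $\t$.

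Next I would translate these spectra back to the original operators via the spectral mapping $\lambda \mapsto (\lambda+1)^{-1}$. One has
\[
\sigma(S_\ep(\t)) = \{(\lambda+1)^{-1} : \lambda \in \sigma(\B_\ep(\t))\}, \qquad \sigma(T(\t)) = \{0\}\cup\{(\lambda+1)^{-1}:\lambda \in \sigma(\B(\t))\},
\]
the extra point $0$ arising from $P$ killing the complement of the Cauchy--Riemann-constrained subspace \eqref{iCR}. Given $\mu \in \sigma(\B_\ep(\t)) \cap [a,b]$, Hausdorff closeness yields some $\nu \in \sigma(T(\t))$ with $|\nu - (\mu+1)^{-1}| \le C\ep$. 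Since $(\mu+1)^{-1} \ge (b+1)^{-1}$, for $\ep$ small enough $\nu$ cannot be the spurious $0$, so $\nu = (\lambda+1)^{-1}$ for some $\lambda \in \sigma(\B(\t))$. Moreover the same inequality bounds $\lambda$ by some constant $M_b$ depending only on $b$, so
\[
|\mu - \lambda| = (\mu+1)(\lambda+1)\,|\nu - (\mu+1)^{-1}| \le (b+1)(M_b+1)\, C\ep =: C_b \ep.
\]
The reverse direction is symmetric: from $\lambda \in \sigma(\B(\t)) \cap [a,b]$ one finds $\mu \in \sigma(\B_\ep(\t))$ with $|\mu-\lambda|\le C_b\ep$, again using that the Hausdorff distance between the resolvent spectra excludes $0$ once $\ep$ is small.

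Finally, because all constants above depend on $b$ but are \emph{uniform} in $\t$, I can take the union over $\t \in \square^*$: the estimates $\mathrm{dist}(\mu, \sigma(\B(\t))) \le C_b\ep$ and $\mathrm{dist}(\lambda, \sigma(\B_\ep(\t))) \le C_b\ep$ pass to $\mathrm{dist}(\mu, \sigma_0) \le C_b\ep$ and $\mathrm{dist}(\lambda, \sigma_\ep) \le C_b\ep$. Combined this gives $d_{[a,b]}(\sigma_\ep, \sigma_0) \le C_b\ep$, as claimed. For the range of $\ep$ where the smallness assumption on $\ep$ fails, the estimate is trivial by enlarging the constant, so no restriction on $\ep$ is needed.

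\textbf{Main obstacle.} The only non-routine point is handling the spurious $0$ in $\sigma(T(\t))$ arising from the projection $P$: since $T(\t)$ is only a pseudo-resolvent, the Hausdorff bound between resolvent spectra does not directly translate to the original operators. This is resolved by restricting to a bounded window $[a,b]$, which keeps the corresponding resolvent spectral values uniformly away from $0$, thereby legitimising the inverse of the spectral map $\lambda \mapsto (\lambda+1)^{-1}$ and producing the $b$-dependent (but $a$- and $\ep$-independent) constant $C_b$.
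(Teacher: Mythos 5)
Your proposal is correct and lands on the same underlying tool the paper uses — perturbation of the spectrum of a bounded self-adjoint operator by its operator norm — but the route is slightly different in detail. The paper deduces the statement by first passing through its Lemma 3.3, an \emph{index-matched} eigenvalue comparison
\[
\Bigl|\tfrac{1}{\lambda_{n,\ep}(\t)+1} - \tfrac{1}{\lambda_n(\t)+1}\Bigr| \le C\ep \qquad \forall n,\t,\ep,
\]
obtained from the min--max/Weyl inequality $|\mu_n(A)-\mu_n(B)|\le\|A-B\|$ applied to the compact self-adjoint operators $(\B_\ep(\t)+I)^{-1}$ and $(\B(\t)+I)^{-1}P$; the remark in the paper that $V(\t)$ is infinite-dimensional is precisely what ensures that the decreasingly-ordered eigenvalues of $(\B(\t)+I)^{-1}P$ coincide with $(\lambda_n(\t)+1)^{-1}$ and are never ``interrupted'' by the infinite-multiplicity zero eigenvalue coming from $\ker P$. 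Having Lemma 3.3 in hand, the spectral-window argument and the union over $\t$ are then as you describe, and the paper cites \cite{SCIKVPS23} for these routine details. You instead invoke the \emph{set-valued} form $d_H(\sigma(A),\sigma(B))\le\|A-B\|$ directly, bypassing the ordered-eigenvalue intermediate; this forces you to exclude the spurious $0\in\sigma\big((\B(\t)+I)^{-1}P\big)$ by hand, which you do correctly by observing that the relevant resolvent spectral values stay uniformly above $(b+1)^{-1}$. Both are valid applications of the same perturbation principle; the paper's route is marginally cleaner because the index-matched Lemma 3.3 automatically sidesteps the spurious zero (it never appears among the top $n$ ordered eigenvalues for any $n$) and is in any case a statement worth recording separately. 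One small point of hygiene that you handle correctly but could state more explicitly: since $\sigma_\ep,\sigma_0\subset[0,\infty)$, the window $[a,b]$ effectively reduces to $[\max\{a,0\},b]$, which is what makes $C_b$ genuinely independent of $a$.
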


 An immediate consequence of this spectral estimate is our first main result, which states that, if there are frequencies of light that are forbidden to propagate on the critical light line (i.e., there are gaps in the spectrum $\sigma_0$) then a quantified neighbour of these frequencies are  photonic band gaps:
 \begin{thm}\label{tocite}
 	Assume that $I=(a,b)$ is a gap of $\sigma_0$. Then, the set
 \[
 \G = \bigcup_{0\le \ep < \min\{1,(b-a)/2C_b\}} \{ (\omega, \sqrt{1-\ep}) \, | \, \omega^2 \in (a+C_b \ep, b - C_b\ep) \}
 \]
 is a photonic band gap.  
% Furthermore,  $ I \times \{1\}  \subset \partial \G$.
 \end{thm}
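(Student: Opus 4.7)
\textbf{Proof plan for Theorem \ref{tocite}.} The plan is to reduce the non-existence of a non-trivial plane-wave solution at a given $(\omega,k)$ to the exclusion $\omega^2\notin\sigma_\ep$ (for the corresponding $\ep$ with $k=\sqrt{1-\ep}$), and then to invoke the Hausdorff-type spectral estimate $d_{[a,b]}(\sigma_\ep,\sigma_0)\le C_b\ep$ of the preceding lemma in order to rule this membership out uniformly throughout $\G$.

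First, I would fix an arbitrary $(\omega,k)\in\G$. By the defining parameterization, $k=\sqrt{1-\ep}$ for some $\ep\in[0,\min\{1,(b-a)/2C_b\})$ and $\omega^2\in(a+C_b\ep,b-C_b\ep)$; the upper bound on $\ep$ guarantees that this latter interval is non-empty (indeed shrinks continuously from $(a,b)$ at $\ep=0$ to a point at $\ep=(b-a)/2C_b$), so that $\G$ is genuinely a two-dimensional region in the $(\omega,k)$-plane rather than a discrete set. From Sections \ref{s:k1} and \ref{s2.2}, the existence of a non-trivial $\t$-quasiperiodic plane wave at $(\omega,k)$ is equivalent to $\omega^2\in\sigma(\B_\ep(\t))$ (respectively $\omega^2\in\sigma(\B(\t))$ when $\ep=0$); so, supporting such a wave for some $\t\in\square^*$ is equivalent to $\omega^2\in\sigma_\ep$. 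At $\ep=0$ this is immediately ruled out by the hypothesis that $(a,b)$ is a gap of $\sigma_0$.

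Finally, for $\ep>0$, I would argue by contradiction: assume $\omega^2\in\sigma_\ep$. Since $\omega^2\in(a+C_b\ep,b-C_b\ep)\subset[a,b]$, the preceding spectral-convergence lemma yields
\[
\mathrm{dist}(\omega^2,\sigma_0)\le\mathrm{dist}([a,b]\cap\sigma_\ep,\sigma_0)\le d_{[a,b]}(\sigma_\ep,\sigma_0)\le C_b\ep.
\]
On the other hand, since $(a,b)$ is a gap of $\sigma_0$, one has $\sigma_0\cap(a,b)=\emptyset$, so $\sigma_0\subset(-\infty,a]\cup[b,+\infty)$, and therefore
\[
\mathrm{dist}(\omega^2,\sigma_0)\ge\min\{\omega^2-a,\,b-\omega^2\}>C_b\ep
\]
by the defining bounds $a+C_b\ep<\omega^2<b-C_b\ep$. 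This contradicts the previous inequality, forcing $\omega^2\notin\sigma_\ep$, and completes the proof. The substantive work has already been done in proving the operator-norm resolvent estimate and deducing the Hausdorff spectral estimate; granted those two lemmas, the theorem is essentially an immediate corollary, the only subtlety being the elementary bookkeeping which ensures that the symmetric shrinking of the gap endpoints by $C_b\ep$ is precisely matched to the rate in the spectral estimate.
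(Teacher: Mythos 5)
Your proof is correct and follows essentially the same route as the paper: reduce non-existence of propagating modes at $(\omega,\sqrt{1-\ep})$ to $\omega^2\notin\sigma_\ep$, then exclude this via the Hausdorff spectral estimate $d_{[a,b]}(\sigma_\ep,\sigma_0)\le C_b\ep$. The paper states this more tersely (asserting that $(a+C_b\ep,\,b-C_b\ep)$ is a gap of $\sigma_\ep$ and concluding), while you unpack the contradiction argument explicitly, but the content and the invoked lemma are identical.
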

 
 This result naturally leads to the question if whether or not $\sigma_0$ has gaps for a given two-dimensional PCF; this is the topic of the remainder of the  article.
 
  As $\mathcal{B}(\theta)$ has discrete spectrum, denoted $\{ \lambda_n(\t) \}_{n=1}^\infty$, then determining gaps boils down to studying neighbouring spectral band functions $E_n(\t) : = \lambda_n(\t)$. Unlike the standard situation of Floquet-Bloch operators associated with periodic differential operators defined in the whole space (where the spectral band functions are easily seen to be  continuous) it is not clear if $E_n$ are continuous in $\t$; this is due to the fact that  functions in the domain of $\mathcal{B}(\t)$ are $\t$-quasiperiodic functions that satisfy the constraint \eqref{iCR}. In particular, constant functions are in the domain of $\mathcal{B}(0)$ and it is a non-trivial fact that constants can be approximated by elements of $\mathrm{dom}(\mathcal{B}(\t))$ for small $\t$.  We prove, in Section \ref{secblochlim}, that the family of spaces $\mathrm{dom}(\mathcal{B}(\t)$ (and consequently the operators $\mathcal{B}(\t)$, and the spectral band functions) are continuous in  $\t$. 
 Turning back to the question of gaps, the regularity of the band functions gives the representation
 \[
 \sigma_0  = \overline{\sum_{n=1}^\infty [a_n,b_n]}, \quad a_n : = \min_{\t\in\square^*} \lambda_n(\t), b_n : = \max_{\t\in\square^*} \lambda_n(\t),
 \]
 Thus, for a general two-dimensional PCF, determining gaps in $\sigma_0$ (and consequently photonic band gaps near the critical line) is equivalent to establishing  $b_n < a_{n+1}$ for some $n$. In general,   it is unclear if gaps exist.
 
 In Section   \ref{s.1d}, we demonstrate, for one-dimensional photonic crystal fibres, that $\mathcal{B}(\t)$ is the one-dimesional Laplacian with mixed boundary conditions (that contain  the spectral parameter), and we demonstrate that gaps in $\sigma_0$ appear at the edges of the Brillouin zone for this non-standard Floquet problem.   For  genuinely two-dimensional PCFs, it is unclear if gaps exists in the spectrun $\sigma_0$; however, for the case of a two-dimensional PCF with small inclusions; namely $R = \delta \Omega$, for some domain $\Omega \subset \RR^2$ and small parameter $\delta$, we establish   the following asymptotics in Section \ref{s.Arrow}:
 \begin{thm}One has the following, uniform in $\t$, asymptotic behaviour:
 	\[
 	\lambda_{n}(\t) =\frac{1}{\delta^{2}\big( g_\t(0) - (2\pi)^{-1} \ln{\delta}\big)} \Lambda_n + \mathcal{O}(\delta^{-2}|\ln{\delta}|^{-3/2}), \quad n=1,2.
 	\]
 	and
 	\[
 	\lambda_n(\t) =\delta^{-2} \Lambda_n + \mathcal{O}(\delta^{-2} |\ln{\delta}|^{-1/2}), \quad n \ge 3, 
 	\]
 	as $\delta \rightarrow 0$.
 \end{thm}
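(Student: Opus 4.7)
The plan is to derive both asymptotic expansions via a matched asymptotics ansatz, validated through the min-max principle for the upper bound and a two-scale compactness argument for the lower bound. First I would write the quadratic form $b_\t[u,u] := \langle \B(\t)u,u\rangle$ as a sum of an interior contribution on $R=\delta\Omega$ and an exterior contribution on $\square\setminus R$, where admissible fields satisfy the Cauchy--Riemann-type constraints \eqref{iCR} outside and $\t$-quasi-periodic conditions on $\partial\square$. Rescaling the interior by $y=x/\delta$ yields an $O(1)$ quadratic form on $\Omega$ against an $L^2$-mass of order $\delta^2$, which accounts for the global $\delta^{-2}$ prefactor in both expansions. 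Because of \eqref{iCR}, any admissible $u$ is uniquely determined outside $\delta\Omega$ by its trace on $\partial(\delta\Omega)$ together with finitely many quasi-periodic data, so the essential task becomes understanding the harmonic/holomorphic extension map and its exterior Dirichlet energy.

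Second, I would construct explicit trial functions and apply min-max for the upper bound. For $n\ge 3$, take the $n$-th eigenfunction of the natural limit operator on $\Omega$ --- expected to be of Dirichlet type on $\partial\Omega$ for these ``multipole'' modes --- rescale it to $\delta\Omega$, and glue on the unique admissible exterior extension, whose energy will turn out to be a factor $|\ln\delta|^{-1}$ smaller than the interior one. For $n=1,2$ the limit eigenfunctions are ``monopole'' constants on $\Omega$ (one for each component of $u$), and the only admissible exterior continuation must be proportional to $\nabla g_\t$, where $g_\t$ is the Green's function of $-\Delta$ on $\square$ with $\t$-quasi-periodic boundary conditions, having leading singularity $-(2\pi)^{-1}\ln|x|$ and regular part $g_\t(0)$ at $x=0$. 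Matching the interior constant to the exterior logarithmic profile at $|x|\sim\delta$ forces the amplitude to scale inversely with $g_\t(0)-(2\pi)^{-1}\ln\delta$, and computing the Rayleigh quotient reproduces the stated leading term. The matching lower bound follows from a two-scale compactness argument on quasi-eigenfunctions with $\delta^2\lambda_n(\t)$ bounded; the rescaled limit solves a limit spectral problem on $\Omega$ with eigenvalues $\Lambda_n$, and quantitative control of the extension operator together with the regularity of $g_\t$ away from its singularity delivers the remainders $O(\delta^{-2}|\ln\delta|^{-3/2})$ and $O(\delta^{-2}|\ln\delta|^{-1/2})$ respectively.

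The main obstacle is the uniformity in $\t$, particularly near $\t=0$. At $\t=0$ the constants belong to $\mathrm{dom}(\B(0))$, so $\lambda_1(0)=\lambda_2(0)=0$, while $g_\t(0)\to+\infty$ as $|\t|\to 0$ since the zero mode of $-\Delta$ on $\square$ degenerates; the stated formula is consistent with this only because its denominator diverges there. To keep the estimates genuinely uniform I would split $\square^*$ into a small $\delta$-dependent neighbourhood of $\t=0$ --- where the would-be zero mode is incorporated directly into the trial functions and a separate expansion is used --- and its complement, on which $g_\t(0)$ is uniformly bounded and the matched ansatz above applies without modification. Reconciling the two regimes and propagating the error bounds across them, while invoking the continuity of the domain family $\mathrm{dom}(\B(\t))$ established in Section~\ref{secblochlim} to control mode-mixing near the edges of the Brillouin zone, is the principal technical challenge.
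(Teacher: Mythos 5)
The physical picture you sketch --- the scale separation $y=x/\delta$, the $\delta^{-2}$ prefactor from the rescaled $L^2$ mass, and the log-matching producing $g_\t(0)-(2\pi)^{-1}\ln\delta$ in the denominator for the two monopole bands --- is essentially correct and aligned with the paper. But two of your concrete claims would not survive being carried out, and they are exactly where the paper's actual argument lives.

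First, the ``limit operator on $\Omega$'' for $n\ge 3$ is not of Dirichlet type. After rescaling, the Cauchy--Riemann constraints force the exterior field to be a decaying solution of $\div u=\curl u=0$ on $\RR^2\setminus\Omega$, coupled to the interior through transmission conditions, not a Dirichlet condition. The paper never writes a local boundary-value problem on $\Omega$: instead it represents every admissible field as $v=-\nabla A_\t f_1-\nabla^\perp A_\t f_2$ for a source density $f\in L^2(R;\CC^2)$ (Lemma~\ref{lem:spcom2}), so that the Rayleigh quotient becomes a ratio of the explicit nonlocal forms \eqref{20.6.24e1}--\eqref{20.6.24e2} in $f$, and the $n\ge 3$ eigenvalues $\Lambda_n$ come from the form $\mathfrak{b}[g]/\mathfrak{c}_2[g]$ on zero-mean $g\in L^2_0(\Omega;\CC^2)$, built from the free-space Newtonian potential $A$ --- a genuinely nonlocal problem, not a Dirichlet Laplacian. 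Your claim that the exterior energy is $|\ln\delta|^{-1}$ smaller is also not the right heuristic for multipole sources (zero-mean $f$): there the exterior energy is comparable in order to the interior and must be retained; it is precisely absorbed into $A$, and the $|\ln\delta|$ hierarchy shows up only through the monopole/remainder splitting via the projection $N$.

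Second, and more seriously, the quantitative remainders $\mathcal{O}(\delta^{-2}|\ln\delta|^{-3/2})$ and $\mathcal{O}(\delta^{-2}|\ln\delta|^{-1/2})$ cannot be extracted from a two-scale compactness argument, which yields convergence but not rates. The missing ingredient is a quantitative comparison between the quasi-periodic potential $A_\t$ and the free-space potential $A_\delta$; the paper proves (Lemma~\ref{mainestimate}) that
\[
\big|\langle f_i,A_\t f_j\rangle-\langle f_i,A_\delta f_j\rangle-g_\t(0)\,d_i\overline{d_j}\big|
\le 3\,\frac{\delta}{\alpha}\,\sqrt{\langle f_i,A_\delta f_i\rangle+g_\t(0)|d_i|^2}\;\sqrt{\langle f_j,A_\delta f_j\rangle+g_\t(0)|d_j|^2},
\]
via the harmonic expansion inequalities \eqref{harm1}--\eqref{19.3.24e2}. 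This is where $g_\t(0)$ enters, and together with the bound $\nu_{\t,\delta}\ge(2\pi)^{-1}\ln(\pi/\delta)$ it turns into the relative error estimates of Theorem~\ref{thmArrowAsymptotics} and Proposition~\ref{7.6e.241}, which then chain to give the theorem. Without an estimate of this type your matched-ansatz upper bound and compactness lower bound would not close with the stated precision.

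Finally, the regime-splitting near $\t=0$ that you identify as ``the principal technical challenge'' is not needed and is a symptom of not having the potential comparison in hand. All of the paper's error bounds are relative and uniform because $\nu_{\t,\delta}\ge(2\pi)^{-1}\ln(\pi/\delta)$ for every $\t$, and as $|\t|\to 0$ one has $g_\t(0)\to\infty$ so the leading term itself degenerates gracefully. No separate expansion near $\t=0$ is required.
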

Here, $g_\t(0)$ is the regular part of the $\t$-quasi periodic Green's function, evaluated at the origin; the numbers  $\Lambda_n$ (given precisely in Section \ref{examples}) are the eigenvalues of some particular $\theta$ and $\delta$ independent operator. Given that  the first two bands grow asymptotically slower than the higher bands,  this asymptotic analysis  immediately implies  that there is a low frequency gap in $\sigma_0$ (and therefore a low frequency photonic band gap) for sufficiently small $\delta$:
 \begin{cor}
 	There exists a $\delta_0>0$ such that
$
 b_2 < a_3,
$
 	for all $\delta<\delta_0$. Therefore $(a,b)$ is a gap in the spectrum $\sigma_0$ and, by Theorem \ref{tocite},  $\G$ is a photonic band. 
 \end{cor}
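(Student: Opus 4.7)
The plan is to read off the corollary directly from the uniform-in-$\theta$ asymptotics of the preceding theorem, by comparing the growth rates of the first two bands with those of the higher bands as $\delta\to 0$.

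First I would estimate $b_2$ from above. Since the $n=1,2$ asymptotic is stated uniformly in $\t$, and the regular part $g_\t(0)$ of the quasi-periodic Green's function is bounded uniformly in $\t\in\square^*$ (this is a standard property one expects to verify in the construction leading to the theorem, and it is the reason the asymptotic can be uniform), the denominator $g_\t(0)-(2\pi)^{-1}\ln\delta$ behaves like $-(2\pi)^{-1}\ln\delta = (2\pi)^{-1}|\ln\delta|$ as $\delta\to 0$. Hence, uniformly in $\t$,
\[
\lambda_2(\t) \;\le\; \frac{2\pi\,\Lambda_2}{\delta^{2}|\ln\delta|}\bigl(1+o(1)\bigr) + \mathcal{O}\!\bigl(\delta^{-2}|\ln\delta|^{-3/2}\bigr),
\]
so that $b_2 = \max_\t \lambda_2(\t) \le C\,\delta^{-2}|\ln\delta|^{-1}$ for some $C>0$ and all sufficiently small $\delta$.

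Next I would estimate $a_3$ from below. From the $n\ge 3$ asymptotic, uniformly in $\t$,
\[
\lambda_3(\t) \;\ge\; \Lambda_3\,\delta^{-2} - C'\,\delta^{-2}|\ln\delta|^{-1/2},
\]
and therefore $a_3 = \min_\t \lambda_3(\t) \ge \Lambda_3\,\delta^{-2}\bigl(1-o(1)\bigr)$, provided $\Lambda_3>0$ (which is part of the input from Section \ref{examples}, where the limit operator's spectrum and the ordering of the $\Lambda_n$ are described). Dividing, one has
\[
\frac{b_2}{a_3} \;\le\; \frac{C}{\Lambda_3}\,|\ln\delta|^{-1}\bigl(1+o(1)\bigr) \;\longrightarrow\; 0 \quad \text{as } \delta\to 0.
\]
Consequently there exists $\delta_0>0$ such that $b_2<a_3$ for all $\delta<\delta_0$, which is the desired inequality. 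Choosing any interval $(a,b)\subset(b_2,a_3)$ produces a gap in $\sigma_0$ by the band-function representation $\sigma_0=\overline{\bigcup_n[a_n,b_n]}$ recalled in the introduction, and Theorem \ref{tocite} then delivers the associated photonic band gap $\G$.

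The only non-routine point is ensuring the uniformity in $\t$ of both asymptotics, and in particular the uniform boundedness of $g_\t(0)$; but this is built into the statement of the theorem (the remainder terms are stated without any $\t$-dependence), so here it can simply be invoked. The rest is just a comparison of the rates $\delta^{-2}|\ln\delta|^{-1}$ versus $\delta^{-2}$, which separate by a factor of $|\ln\delta|\to\infty$.
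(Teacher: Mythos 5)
Your argument is correct and is essentially the intended (and indeed the only reasonable) derivation from the asymptotic theorem: compare the rate $\delta^{-2}|\ln\delta|^{-1}$ for $b_2$ against $\delta^{-2}\Lambda_3$ for $a_3$, and note that their ratio vanishes. The key inputs you flag — uniformity in $\t$, a quantitative bound on $g_\t(0)$, and strict positivity of $\Lambda_3$ — are exactly the ones the paper supplies: $g_\t(0)\ge(2\pi)^{-1}\ln\pi>0$ from~\eqref{19.3.24e2}, and $\Lambda_3\ge\kappa_3>0$ from the elliptic-regularity inequality $\kappa_3\mathfrak{c}_2[f]\le\mathfrak{b}[f]$.

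One small inaccuracy worth fixing. You assert that $g_\t(0)$ is \emph{bounded} uniformly in $\t\in\square^*$ and use this to say the denominator ``behaves like $(2\pi)^{-1}|\ln\delta|$''. In fact $g_\t(0)$ is \emph{not} bounded above: it blows up like $|\t|^{-2}$ as $\t\to 0$ (the zero Fourier mode of the quasi-periodic Green's function diverges), which is precisely why the leading term $\Lambda_n\big(\delta^2(g_\t(0)-(2\pi)^{-1}\ln\delta)\big)^{-1}$ retains the $\t$-dependence rather than being replaced by $2\pi\Lambda_n/(\delta^2|\ln\delta|)$ in the theorem. What your estimate actually needs — and what is true — is only the uniform \emph{lower} bound $g_\t(0)\ge(2\pi)^{-1}\ln\pi>0$, so that
\[
g_\t(0)-(2\pi)^{-1}\ln\delta \;\ge\; (2\pi)^{-1}\ln(\pi/\delta),
\]
from which $\lambda_2^{(\delta)}(\t)\le 2\pi\Lambda_2\,\delta^{-2}\big(\ln(\pi/\delta)\big)^{-1}+\mathcal{O}(\delta^{-2}|\ln\delta|^{-3/2})$ uniformly in $\t$, and hence $b_2\le C\delta^{-2}|\ln\delta|^{-1}$. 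With this correction your comparison $b_2/a_3\to 0$ goes through unchanged, and the remainder of the argument (invoking the band representation of $\sigma_0$ and Theorem \ref{tocite}) is fine.
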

  We note here, that for the case of cylindrical inclusions, i.e. $\Omega$ the ball of radius one centered at the origin, one has 
 \[
 \Lambda_1  = \Lambda_2 =  \big(\frac{1}{\epsilon_0-1}+\frac{1}{2}\big),
 \]
 that is  the first two spectral band functions $\lambda_1(\t)$, $\lambda_2(\t)$ asymptotically coincide for small $\delta$.

The Appendix contains the proofs of the two key theorems highlighted above and other intermediary technical lemmas.
\section{Problem formulation}
\label{secPr}
We consider a two-dimensional non-magnetic periodic dielectric photonic crystal fibre that is homogeneous in the $x_3$ co-ordinate direction and has optically denser inclusions, see Figure \ref{fig:pcf}.
\begin{figure}
	\centering
	\includegraphics[width=0.7\linewidth]{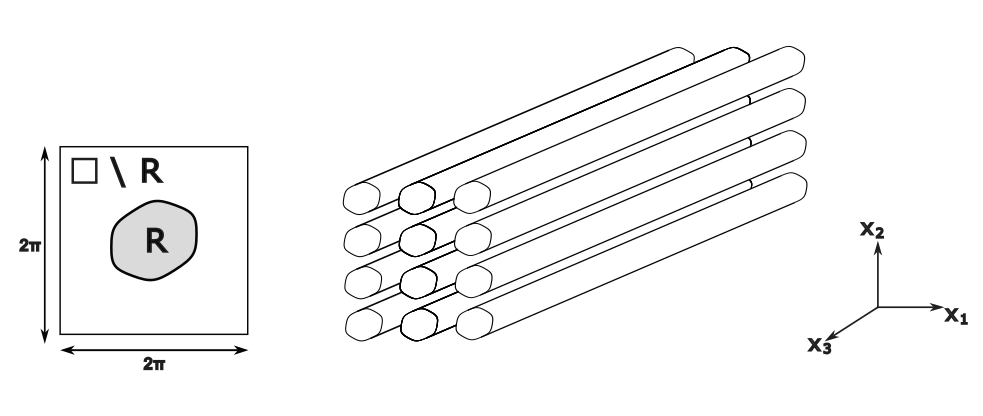}
	\label{fig:pcf}
	\caption{A two-dimensional photonic crystal fibre that is homogeneous in the $x_3$-direction with a cylindrical cross-section $R$ that is  repeated periodically with respect to the reference cell $\square=[-\pi,\pi)$ in the $(x_1,x_2)$-plane.}
\end{figure}

 That is, the photonic crystal fibre has magnetic permeability\footnote{Considering non-magnetic material is made for simplicity of the exposition. The subsequent  analysis and results hold for real  $\mu \neq 1$.} $\mu \equiv 1$  and a real-valued scalar dielectric permittivity  $\epsilon$ that is constant in $x_3$ and  periodic in the cross-sectional $(x_1,x_2)$-plane with respect to the periodic reference cell $\square : = [-\pi,\pi)^2$, i.e.
\begin{equation}\label{coef}
	\begin{aligned}
		\epsilon\big(x_1,x_2,x_3\big) =\left\{ \begin{matrix}
			\epsilon_0 & & \text{if $(x_1,x_2)\in \bigcup_{z\in \ZZ^2} (R+z) $,} \\[0.5em]
			1 & & \text{otherwise,}
		\end{matrix} \right.
		\end{aligned}
	\end{equation}
for\footnote{Without loss of generality, we have set the dielectric permittivity $\epsilon_1$ of the background media is unity; for $\epsilon_1 \neq 1$ we can simply consider the rescaled electric field $\sqrt{\epsilon_1} E$ and re-scaled frequency $\sqrt{\epsilon_1} \omega$ to arrive at the formulation \eqref{coef}-\eqref{propsol}.} $\epsilon_0>1$ and $R$ open such that $\overline{R} \subset \square$. We are interested in establishing the existence of electro-magnetic waves that  propagate down the fibres.  That is, for given  $\omega \in \RR$, $k \in [0,\infty)$ and $\t \in \square^* : = [-\frac{1}{2},\frac{1}{2})^2$,
we are interested in establishing the existence of solutions of the form\footnote{In our notation, $\omega$ is the frequency and $\omega k$ is the wave number, i.e. $k$ is the reciprocal of wave speed. This choice of notation is natural as we shall be considering straight lines in the frequency-wave number parameter space.} 
\begin{equation} \label{propsol}
	\tilde{E}(x_1,x_2,x_3)=e^{{\rm i}\omega (kx_3 - t)}E(x_1,x_2), \ \ \ \  \tilde{H}(x_1,x_2,x_3)=e^{{\rm i}\omega (kx_3-t)}H(x_1,x_2), 
\end{equation} 
with  $\t$-quasi-periodic amplitudes $E,H$,  to  the Maxwell system 
\begin{equation}
	\label{Maxwellsystem}
	\begin{aligned}
		\nabla \times \tilde{H} &= - \, \epsilon\, \pd{\tilde{E}}{t}, \\
		\nabla \times \tilde{E} &=  \, \pd{\tilde{H}}{t}.
	\end{aligned}
\end{equation}
Recall that $E,H$ being $\t$-quasi-periodic means
\begin{equation}\label{qpbc}
E(y + e_j) = e^{\i \t_j} E(y),  \quad \& \quad H(y+e_j) = e^{\i \t_j} H(y), \quad y \in \square, \, j=1,2,
\end{equation}
where $e_j$ is $j^\mathrm{th}$ Euclidean basis vector, or  equivalently, $E,H$ are of the form $e^{\i \t \cdot y} U(y)$ for some $\square$-periodic complex-valued vector field $U$.

Upon substituting \eqref{propsol} into \eqref{Maxwellsystem}, under \eqref{coef}, we find that $E=(E_1,E_2,E_3)$, $H=(H_1,H_2,H_3)$ necessarily solve  the following system of equations 

\begin{equation}
	\label{Max1}
	\begin{aligned}
\i 	\omega	\left( \begin{array}{cc}
		k & - \epsilon  \\-	1  & k
		\end{array}\right) \left( \begin{array}{c}
			H_1 \\  E_2
		\end{array}\right) = \left( \begin{array}{c}
			H_{3,1} \\E_{3,2} 
		\end{array}\right), & \hspace{.5cm} &
\i	\omega	\left( \begin{array}{cc}
			 k & \epsilon\\ 1 & k 
		\end{array}\right) \left( \begin{array}{c}
			 H_2 \\ E_1
		\end{array}\right) = \left( \begin{array}{c}
		H_{3,2} \\E_{3,1}  
		\end{array}\right),
	\end{aligned}
\end{equation}
and
\begin{equation}
	\label{Max2}
	\begin{aligned}
 E_{1,2} - E_{2,1}&=& \i \omega  H_3,& \hspace{1cm} &  	 H_{2,1} - H_{1,2} = \i \omega {\epsilon} E_3,
	\end{aligned}
\end{equation}
where $f_{,j}$ denotes the partial derivative of $f$ with respective to the $j^\mathrm{th}$ variable.

\subsection{The Maxwell system on the critical light line.}\label{s:k1}
In what follows, 
\[
\div u :=  u_{1,1} +  u_{2,2} \quad \text{and} \quad \curl u  :=u_{2,1}- u_{1,2}
\]
respectively denote  the two-dimensional divergence and curl of a vector-valued function $u$. We shall show that solutions of the form \eqref{qpbc} solve the Maxwell system, when $k=1$, if, and only if,  the $\t$-quasi-periodic vector field $u = (H_3,E_3)$ satisfying 
\begin{equation}\label{uCR}
\div u = 0 \quad \text{and} \quad \curl u = 0 \quad \text{in $\square \backslash R$,}
\end{equation}
and
\begin{equation}\label{degp}
b(u,\phi) = \lambda\, c(u,\phi) ,  \qquad \text{where\ } \lambda = \omega^2,
\end{equation}
for all $\t$-quasi-periodic test functions $\phi$ such that $\div{\phi}=\curl{\phi} =0 $ in $\square\backslash R$. Here, $b$, $c$ are the symmetric sesquilinear forms
%\begin{equation}\label{bform}
%b(u,\phi) : = \int_{\square } \nabla u_2 \cdot \overline{\nabla \phi_2} + \frac{1}{\gamma}\int_C \div{u}\, \overline{\div{\phi}} + \curl{u}\, \overline{\curl{\phi}},
%\end{equation}
\begin{equation}\label{prebform}
	\begin{aligned}
b(u,\phi) &: =  -\frac{1}{2} \int_{\square \backslash R} \Big( \nabla^\perp u_1 \cdot \overline{\nabla \phi_2} +\nabla u_2 \cdot\overline{\nabla^\perp \phi_1} \Big)  +\\&\frac{1}{\gamma} \int_R \Big( \nabla u_1 \cdot \overline{\nabla \phi_1} + \epsilon_0 \nabla u_2 \cdot \overline{\nabla \phi_2}  +   \nabla^\perp u_1 \cdot \overline{\nabla \phi_2} +\nabla u_2 \cdot\overline{\nabla^\perp \phi_1} \big),
	\end{aligned}
\end{equation}
and
\begin{equation}\label{cform}
c(u,\phi) :=   \int_{\square} u \cdot \overline{\phi} + \gamma \int_R  u_2  \overline{\phi_2},
\end{equation}
for positive constant  $\gamma : = \epsilon_0-1$ and for  $\nabla^\perp$ denoting the rotation of the gradient vector by $\pi/2$, i.e. $\nabla^\perp p = (-p_{,2}, p_{,1})$.

We shall show sufficiency here, and necessity is left until Lemma \ref{BthetaIsmaxwell} in the appendix.

Multiplying the first and second equations in \eqref{Max2} by $\t$-quasi-periodic test functions $\phi_1$  and $\phi_2$ respectively, integrating these equations over the periodic cell $\square$ and adding them together gives, after a simple application of integration by parts, the following identity:
\begin{equation*}
\int_\square  \left( \begin{array}{c}
	H_1 \\  E_2
\end{array}\right)  \cdot \overline{ \left( \begin{array}{c}
	\phi_{2,2} \\  \phi_{1,1}
\end{array}\right) } -  \left( \begin{array}{c}
H_2 \\  E_1
\end{array}\right)  \cdot \overline{ \left( \begin{array}{c}
	\phi_{2,1} \\  \phi_{1,2}
\end{array}\right) } = \i \omega\left( \int_\square  H_3 \overline{\phi_1}+E_3 \overline{\phi_2} + \gamma \int_R E_3 \overline{\phi_2} \right).
\end{equation*}
Since the systems in \eqref{Max1} degenerate in $\square \backslash R$ (as $k=1$), we study the left integral, above, in the regions $\square \backslash R$ and $R$ separately.

\textbullet\,  In $\square\backslash  R,$ we notice that the matrices in \eqref{Max1} have non-trivial kernels. This provides constraints on $E_3,H_3$.   Indeed, we see that
\[
	\left( \begin{array}{cc}
	k & - \epsilon \\ -1  & k
\end{array}\right) = 	\left( \begin{array}{cc}
1  & -1 \\-1 &  1
\end{array}\right )\quad \text{in $\square \backslash R$}
\]
which has zero as an eigenvalue with eigenvector $ \left( \begin{array}{c}
	1 \\  1
\end{array}\right)$. Therefore, 
$ \left( \begin{array}{c}
	H_{3,1}\\ E_{3,2} 
\end{array}\right)$ must be perpendicular to the vector $ \left( \begin{array}{c}
1 \\  1
\end{array}\right)$ in $\square\backslash R$, i.e.
\begin{equation*}\label{CR1}
H_{3,1}+E_{3,2}  = 0 \quad \text{ in $\square \backslash R$}.
\end{equation*}
Then, from \eqref{Max1}, it follows that 
\[
 \left( \begin{array}{c}
	H_1 \\  E_2
\end{array}\right) = -\frac{\i}{\omega} \frac{1}{2}  \left( \begin{array}{c}
H_{3,1} \\  E_{3,2}
\end{array}\right) + C_1  \left( \begin{array}{c}
1 \\  1
\end{array}\right) \quad \text{in $\square \backslash R$}
\]
for some function $C_1$. Similarly, by the second system in \eqref{Max1}, it follows that  
\begin{equation*}\label{CR2}
E_{3,1}  -H_{3,2}= 0 \quad \text{ in $\square \backslash R$},
\end{equation*}
and
\[
\left( \begin{array}{c}
	H_2 \\  E_1
\end{array}\right) = -\frac{\i}{\omega} \frac{1}{2}  \left( \begin{array}{c}
	H_{3,2} \\  E_{3,1}
\end{array}\right) + C_2  \left( \begin{array}{c}
-	1 \\  1
\end{array}\right) \quad \text{in $\square \backslash R$}
\]
for some $C_2$.
Consequently, for $\phi = (\phi_1,\phi_2)$ satisfying 
\[
\div \phi = 0, \quad \curl \phi = 0, \quad \text{in $\square \backslash R$,}
\] 
a simple calculation gives 
\begin{equation*}
%	\label{17dec1}
\int_{\square \backslash R}  \left( \begin{array}{c}
	H_1 \\  E_2
\end{array}\right)  \cdot \overline{ \left( \begin{array}{c}
		\phi_{2,2} \\  \phi_{1,1}
	\end{array}\right) } -  \left( \begin{array}{c}
	H_2 \\  E_1
\end{array}\right)  \cdot \overline{ \left( \begin{array}{c}
		\phi_{2,1} \\  \phi_{1,2}
	\end{array}\right) } =  -\frac{\i}{\omega} \frac{1}{2} \int_{\square \backslash R} \Big( \nabla^\perp u_1 \cdot \overline{\nabla \phi_2} +\nabla u_2 \cdot\overline{\nabla^\perp \phi_1} \Big) 
% = \frac{\i}{\omega} \int_{\square \backslash C} \nabla u_2 \cdot \overline{ \nabla \phi_2},
\end{equation*}
for $u = (H_3,E_3)$.

\textbullet \, In $R$, the matrices in  \eqref{Max1} are invertible, so:
\[
 \left( \begin{array}{c}
	H_1 \\  E_2
\end{array}\right) = \frac{\i}{\omega}\frac{1}{\gamma}\left( \begin{array}{c}
H_{3,1} +\epsilon_0E_{3,2} \\ H_{3,1}+E_{3,2}
\end{array}\right), \quad   \left( \begin{array}{c}
H_2 \\  E_1
\end{array}\right) = \frac{\i}{\omega}\frac{1}{\gamma}\left( \begin{array}{c}
H_{3,2}-\epsilon_0E_{3,1} \\ E_{3,1}-H_{3,2}
\end{array}\right), \quad \text{in $R$}.
\]
Then, a simple computation (for any test function $\phi$) gives
\begin{equation*}
%	\label{17dec2}
	\begin{aligned}
\int_{R}  \left( \begin{array}{c}
	H_1 \\  E_2
\end{array}\right)  \cdot \overline{ \left( \begin{array}{c}
		\phi_{2,2} \\  \phi_{1,1}
	\end{array}\right) } -  & \left( \begin{array}{c}
	H_2 \\  E_1
\end{array}\right)  \cdot \overline{ \left( \begin{array}{c}
		\phi_{2,1} \\  \phi_{1,2}
	\end{array}\right) }  \\
&= \frac{\i}{\omega}\frac{1}{\gamma} \int_R \Big( \nabla u_1 \cdot \overline{\nabla \phi_1} + \epsilon_0 \nabla u_2 \cdot \overline{\nabla \phi_2}  +   \nabla^\perp u_1 \cdot \overline{\nabla \phi_2} +\nabla u_2 \cdot\overline{\nabla^\perp \phi_1} \big).
	\end{aligned}
\end{equation*}

Putting all this together, shows that $u= (H_3,E_3)$ satisfies \eqref{uCR} and \eqref{degp}.

\subsection{The Maxwell system above the critical light line.}\label{s2.2}
Here, we consider $k = \sqrt{1 -\ep}$ for fixed $0<\ep<1$.  We shall show that the problem \eqref{qpbc}-\eqref{Max2} is equivalent  to $\t$-quasi-periodic $u = (H_3,E_3)$ and $\lambda = \omega^2$ satisfying 
\begin{equation}\label{nondp}
\ep^{-1} a(u,\phi) + b_{\ep}(u,\phi) = \lambda c(u,\phi)
\end{equation}
for the symmetric\footnote{Symmetry of $b_\ep$ can be seen, for example, from the identity $\nabla f\cdot \nabla^\perp g = - \nabla g \cdot \nabla^\perp f$.} sesquilinear forms
\begin{equation}\label{aform}
	\begin{aligned}
		%		&a_\t(u,\phi) := a(e^{\i \t \cdot y}u, e^{\i \t \cdot y}\phi)\quad \text{for} \quad 	
		a(u,\phi) : = \int_{\square \backslash R} \div u\, \overline{\div\phi} + \curl {u} \,\overline{\curl \phi},
	\end{aligned}
\end{equation}
\begin{equation}\label{bepform}
	\begin{aligned}
		%		&b_{\ep,\t}(u,\phi)  = 		b_{\ep}(e^{\i \t \cdot y}u, e^{\i \t \cdot y}\phi), \ \text{for}\\
		&b_{\ep}(u,\phi) :=	\int_{\square \backslash R} \frac{\sqrt{1-\ep}-1}{\ep}\Big(  \nabla^\perp u_1 \cdot \overline{\nabla \phi_2} + \nabla u_2 \cdot \overline{\nabla^\perp \phi_1}\Big)\  +\\
		&\frac{1}{\gamma+ \ep} \int_{ R} \Big(  \nabla  u_1 \cdot \overline{\nabla \phi_1} + \epsilon_0 \nabla  u_2 \cdot \overline{\nabla \phi_2} 
		+ \sqrt{1-\ep}( \nabla^\perp u_1 \cdot \overline{\nabla \phi_2} + \nabla u_2 \cdot \overline{\nabla^\perp \phi_1})\Big),
	\end{aligned}
\end{equation}
and $c$ as in \eqref{cform}.

To show equivalence, note that the matrices in \eqref{Max1} are everywhere invertible and upon solving \eqref{Max1} for $E_1,E_2,H_1$ and $H_2$, and then substituting the resulting expressions into \eqref{Max2} leads to the following PDE  system for $(H_3,E_3)$:
%Upon substituting \eqref{9}, \eqref{10} into \eqref{5}, and  \eqref{11}, \eqref{12} into \eqref{8}, we arrive at the following system of equations for $(E_3,H_3)$
\begin{align}
	 -\Big(\frac{1}{A} H_{3,1} \Big)_{,1} - \Big(\frac{1}{A} H_{3,2} \Big)_{,2} - \Big(\frac{k}{{A}} E_{3,2} \Big)_{,1} + \Big(\frac{k}{{A}} E_{3,1} \Big)_{,2}& =   \omega^2  H_3,  \label{14} \\
- \Big(\frac{ {\epsilon}}{{A}} E_{3,1} \Big)_{,1} - \Big(\frac{{\epsilon}}{{A}} E_{3,2} \Big)_{,2} +\Big(\frac{k}{{A}} H_{3,2} \Big)_{,1} - \Big(\frac{k}{{A}} H_{3,1} \Big)_{,2}  & =   \omega^2 {\epsilon} E_3  \label{13} 
\end{align} 
for 
\begin{equation*}
	A \,\, : = \left\{ \begin{matrix}
\ep & & \text{in}\ \square \backslash R, \\
\gamma + \ep  & & \text{in}\  R,
	\end{matrix} \right. 
%	\label{ay}
\end{equation*} 
where we recall $\gamma = \epsilon_0 -1>0$. 
Upon setting $u=(H_3,E_3)$,  multiplying \eqref{14} and \eqref{13} by $\t$-quasi-periodic test functions $\phi_1$  and $\phi_2$, respectively, integrating over the periodic cell $\square$ and adding the resulting integral equations together, along with  the identity
\begin{equation}\label{15.6.24e2}
	\div{\phi} \,\div{\psi} +\curl \phi \, \curl{\psi} = \nabla \phi : \nabla \psi+  \nabla^\perp \phi_1 \cdot \nabla \psi_2 + \nabla \phi_2 \cdot \nabla^\perp \psi_1,
\end{equation} gives \eqref{nondp}.
\subsection{Operator theoretic formulation of the spectral problems.}\label{s:o}

\textbf{On the critical line: $k=1$.} Here, we rewrite  the spectral problems \eqref{degp} and \eqref{nondp} in operator-theoretic language. To that end, we recall the Lebesgue space $L^2(\square;\CC^2)$ of  complex vector-valued functions whose components belong to $L^2(\square)$; we recall the Sobolev spaces  $H^1(\square; \CC^2)$ of complex  vector-valued functions whose components belong to $H^1(\square)$, the closed subspace $H^1_{per}(\square;\CC^2)$  of  $H^1(\square;\CC^2)$ functions that are $\square$-periodic and  $H^1_\t(\square;\CC^2)$  the Sobolev space of  $\t$-quasi-periodic $H^1(\square;\CC^2)$ functions, i.e. $H^1_\t(\square;\CC^2) : = \{ e^{\i\t \cdot y}v(y) \, | \, v \in H^1_{per}(\square;\CC^2)\}$. All these spaces are equipped with the standard norms. 

We introduce  the family of linear subspaces
\begin{equation}
	\label{Vtheta}
	V(\t) : = \{ v\in H^1_\t(\square;\CC^2)\ | \ \div v=0 \quad \& \quad \curl  v =0  \ \text{in $\square\backslash R$}\}, \quad \t \in \square^*. 
\end{equation}
\begin{rem}
	$v \in V(\theta)$ is `equivalent' to the complex function $F(y_1 + \i y_2) = v_1(y) - i v_2(y)$  being complex-differentiable in $\square \backslash R$.
\end{rem}
Notice that, for each $\t$,  $V(\theta)$ is a closed subspace of $H^1_\theta(\square;\CC^2)$. Furthermore,  by \eqref{Vtheta}, one has
\[
 \nabla^\perp \tilde{v}_1 = - \nabla \tilde{v}_2  \ \text{in $\square\backslash R$ for all  $ \tilde{v} \in V(\t)$},
\]
which, along with  \eqref{15.6.24e2},  shows that $b$ (cf. \eqref{prebform}) can be re-arranged on $V(\t)$ as follows: 
\begin{equation}\label{bform}
b(v,\tilde{v}) : = \int_{\square } \nabla v_2 \cdot \overline{\nabla \tilde{v}_2} + \gamma^{-1}\int_\square \big( \div{v}\, \overline{\div{\tilde{v}}} + \curl{v}\, \overline{\curl{\tilde{v}}} \big), \quad \forall v,\tilde{v} \in V(\t).
\end{equation}
%\[
%b(v,\tilde{v}) = \int_{\square} \nabla v_2   \cdot \overline{\nabla\tilde{v}_2}	+\gamma^{-1}\int_{\square} (  \div v \, \overline{ \div \tilde{v}} + \curl{v}\, \overline{\curl \tilde{v}}   ), \quad \forall v,\tilde{v} \in V(\t), 
%\]
Furthermore, integration by parts gives
\begin{equation}\label{ibp}
	\int_{ \square} (  \div u \, \overline{ \div \phi} + \curl{u}\, \overline{\curl \phi}   )  = \int_\square \nabla u : \overline{\nabla \phi}, \qquad \forall u,\phi \in H^1_\t(\square;\CC^2).
\end{equation}
Thus, 
\begin{equation}\label{betapositive}
	\begin{aligned}
		b(v,\tilde{v}) &= \int_{\square}\nabla v_2   \cdot \overline{\nabla\tilde{v}_2}	+\gamma^{-1}  \nabla v : \overline{\nabla \tilde{v}}, \quad \forall v,\tilde{v} \in V(\t).
	\end{aligned}
\end{equation}
Additionally, notice that $c$ (cf. \eqref{cform}) is an equivalent $L^2(\square;\CC^2)$ inner product. Therefore, $b+c$ is an equivalent $H^1$-inner product on $V(\t)$.  
Let  $\H$ denote the closure of $V(\theta)$ in $L^2(\square;\CC^2)$, i.e.
\begin{equation*}
	\label{Ht}
	\mathcal{H} : = \{ h \in L^2(\square;\CC^2) \, | \, \div{h} = \curl{h} = 0 \text{ in $\square\backslash R$}  \},
\end{equation*}
 and consider  $\B(\t)$ be the non-negative self-adjoint operator in  $(\H,c)$ generated by the form $b$ with form domain $V(\t)$, i.e.
\begin{equation*}
	b(v, \tilde{v}) = c(\B(\t)v,\tilde{v}), \quad \forall\, v \in \mathrm{dom}(\B(\t)), \, \forall\, \tilde{v} \in V(\t).
\end{equation*}
Now,  \eqref{degp} can be rewritten as: Find $\lambda \in [0,\infty)$ such that there exists a non-trivial $v \in \mathrm{dom} (\B(\t))$ satisfying
\begin{equation}\label{opdegp}
	\B(\t) v = \lambda v.
\end{equation}

Notice that,  since $H^1(\square;\CC^2)$ is compactly embedded into $L^2(\square;\CC^2)$,  the resolvent of $\B(\theta)$ is  compact (cf. \eqref{betapositive}) and, in particular,  the spectrum  of $\B(\t)$ consists of countably many isolated non-negative eigenvalues of finite multiplicity which tend to infinity.

Upon considering all of the above, we have established that: there exists non-trivial waves propagating down a two-dimensional photonic crystal fibre, on the critical light line, $k=1$, if, and only if, $\lambda= \omega^2$ is an eigenvalue of $\B(\t)$ for some $\t \in \square^*$.\\

\textbf{Below the critical line: $k<1$.} It is clear that $a$ is bounded in $H^1$ and $b_\ep$ is uniformly bounded in $H^1$. Furthermore, we shall demonstrate below that $\ep^{-1} a+b_\ep$ is uniformly positive on $H^1$ in the following sense: there exists $\nu>0$ such that\footnote{Henceforth, $\beta[\cdot]: = \beta(\cdot,\cdot)$ for a sesquilinear form $\beta$.} 
\begin{equation}\label{epformlowerbound}
\ep^{-1} a[\phi] + b_\ep[\phi]\ge \nu \| \nabla \phi \|_{L^2(\square;\CC^2)}^2  
%\le \nu^{-1} \ep^{-1} \| \nabla \phi \|_{L^2(\square;\CC^2)}^2 
\end{equation}
for all $\phi \in H^1(\square;\CC^2)$ and all $\ep\in(0,1)$.
Therefore $\ep^{-1} a + b_\ep + c$ is an equivalent inner product on $H^1(\square;\CC^2)$  that is uniformly positive in $\ep$. Let us consider $\B_\ep(\t)$, the non-negative self-adjoint operator in  $(L^2(\square;\CC^2),c)$ generated by the form $\ep^{-1} a+b_\ep$ with form domain $H^1_\t(\square;\CC^2)$, i.e.
\begin{equation*}
	\ep^{-1}a(\phi, \psi) + b_\ep(\phi,\psi) = c(\B_\ep(\t)\phi,\psi), \quad \forall\, \phi\in \mathrm{dom}(\B_\ep(\t)), \, \forall\, \psi \in H^1_\t(\square;\CC^2).
\end{equation*}
Then problem \eqref{nondp} can be rewritten as: Find $\lambda \in [0,\infty)$ such that there exists a non-trivial  $u \in \mathrm{dom}(\B_\ep(\t))$ satisfying 
\begin{equation*}
%	\label{opnondp}
\B_\e(\t) u = \lambda u.
\end{equation*}
As  $\B_\ep(\t)$ has a compact resolvent, then non-trivial waves propagate down the two-dimensional photonic crystal fibre, for $k =\sqrt{1-\ep}$, $0<\ep<1$, if, and only if, $\lambda= \omega^2$ is an eigenvalue of $\B_\ep(\t)$ for some $\t \in \square^*$.

\textit{Proof of \eqref{epformlowerbound}}
Notice that, from \eqref{15.6.24e2} and 
\begin{equation}\label{15.6.24e3}
	\big| \nabla^\perp u_1 \cdot \overline{\nabla u_2} + \nabla u_2 \cdot \overline{\nabla^\perp u_1}\big| \le 2 | \nabla u_1| |\nabla u_2|,
\end{equation}
we have
\begin{flalign*}
&\int_{\square \backslash R}\left( |\div\phi|^2 +|\curl \phi|^2 
+  (\sqrt{1-\ep}-1)\big(  \nabla^\perp \phi_1 \cdot \overline{\nabla \phi_2} + \nabla \phi_2 \cdot \overline{\nabla^\perp \phi_1}\big)\  \right) \\
& = \int_{\square \backslash R}\left(  |\nabla \phi|^2 + \sqrt{1-\ep}\big(  \nabla^\perp \phi_1 \cdot \overline{\nabla \phi_2} + \nabla \phi_2 \cdot \overline{\nabla^\perp \phi_1}\big)\  \right) \\
& \ge\int_{\square \backslash R}\left(  |\nabla \phi|^2 - 2\sqrt{1-\ep}|\nabla \phi_1||\nabla \phi_2|  \right)  \\
& \ge \int_{\square \backslash R} (1 - \sqrt{1-\ep}) |\nabla \phi|^2 \\
&\ge\frac{\ep}{2} \int_{\square \backslash R}  |\nabla \phi|^2.
\end{flalign*}
Similarly, for $0<\eta<1$ the positive solution to  $1-\eta = \epsilon_0 - \eta^{-1}$, one has 
\begin{flalign*}
	&\int_{ R}\left( |\nabla \phi_1|^2 +\epsilon_0|\nabla \phi_2|^2 
	+  \sqrt{1-\ep}\big(  \nabla^\perp \phi_1 \cdot \overline{\nabla \phi_2} + \nabla \phi_2 \cdot \overline{\nabla^\perp \phi_1}\big)\  \right) \\
	& \ge\int_{R}\left(  |\nabla \phi_1|^2 +\epsilon_0|\nabla \phi_2|^2  - 2\sqrt{1-\ep}|\nabla \phi_1||\nabla \phi_2|  \right)  \\
	& \ge \int_{R} \Big( (1 - \eta) |\nabla \phi_1|^2 + (\epsilon_0 - \eta^{-1}(1-\ep)) |\nabla \phi_2|^2\Big) \\
	& =\int_{R} \Big( (1 - \eta) |\nabla \phi_1|^2 + (1-\eta + \eta^{-1}\ep) |\nabla \phi_2|^2\Big) \\
	&\ge(1-\eta) \int_{R} \Big(|\nabla \phi_1|^2+|\nabla \phi_2|^2\Big)= (1-\eta) \int_{R} |\nabla \phi|^2 .
\end{flalign*}
Thus,  the above assertions (upon recalling  \eqref{aform}, \eqref{bepform} and $\gamma = \epsilon_0-1$) give 
\[
\ep^{-1}a[\phi] + b_\ep[\phi]  \ge \frac{1}{2} \| \nabla \phi\|^2_{L^2(\square\backslash R;\CC^2)}+ \frac{1}{\gamma +\ep} (1-\eta) \| \nabla \phi\|^2_{L^2(R;\CC^2)} \ge \nu\| \nabla \phi\|^2_{L^2(\square;\CC^2)}
\]
for $\nu = \min\{\tfrac{1}{2},\tfrac{1-\eta}{\gamma+1}\}$. \qed

\section{Comparison of spectral problems  in the vicinity of the critical line.}\label{s.resolventestimates}

In this section we shall provide uniform in $\t$, order $\ep$  error estimates between the resolvent of $\B_\ep(\t)$ and the pseudo-resolvent of $\B(\t)$ in operator norm. This in turn will provide  uniform, order $\ep$, error estimates between the spectra of $\B_\ep(\t)$ and $\B(\t)$.
\\

Consider $H^{-1}(\square;\CC^2)$ the dual of $H^{1}(\square;\CC^2)$ equipped with the standard norm. The following result is proved in  Appendix \ref{app2}:
\begin{lem}\label{prethm.resolventasymp}
Fix $\ep >0$, $\t \in \square^*$ and $F \in H^{-1}(\square;\CC^2)$. Let $U_{\ep,\t} \in H^1_\t(\square;\CC^2)$ solve
\[
\ep^{-1} a(U_{\ep,\t},\phi) + b_\ep(U_{\ep,\t},\phi)  +c(U_{\ep,\t},\phi) = \langle F, \phi \rangle, \quad \forall \phi \in H^1_\t(\square;\CC^2),
\]
and let $V_{\t} \in V(\t)$ solve
\[
 b(V_{\t},\tilde{V})  +c(V_{\t},\tilde{V}) = \langle F, \tilde{V} \rangle, \quad \forall\, \tilde{V} \in V(\t).
\]
Then, there exists a constant $C>0$, independent of $\ep$, $\t$ and $F$, such that 
\[
\| U_{\ep,\t}- V_\t \|_{H^1(\square;\CC^2)} \le C \ep \|F\|_{H^{-1}(\square;\CC^2)}.
\]
\end{lem}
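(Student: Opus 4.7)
The plan is a penalization / saddle-point comparison argument: $U_{\ep,\t}$ solves a problem in which the constraints $\div u = 0 = \curl u$ in $\square\backslash R$ (which define $V(\t)$) are enforced by the quadratic penalty $\ep^{-1} a[\cdot]$, while $V_\t$ solves the constrained limit. Testing each defining equation with itself and invoking the coercivity \eqref{epformlowerbound} gives the basic a priori bounds $\|U_{\ep,\t}\|_{H^1}+\|V_\t\|_{H^1}\le C\|F\|_{H^{-1}}$. The key to obtaining the sharp $O(\ep)$ rate (rather than the $O(\sqrt{\ep})$ that one gets from the crude estimate $a[U_{\ep,\t}]\le C\ep\|F\|^2$) is to introduce the Lagrange multipliers $P_\ep := \ep^{-1}\div U_{\ep,\t}|_{\square\backslash R}$ and $Q_\ep := \ep^{-1}\curl U_{\ep,\t}|_{\square\backslash R}$, which recast $U_{\ep,\t}$'s equation as the saddle-point identity
\[
b_\ep(U_{\ep,\t},\phi) + c(U_{\ep,\t},\phi) + \int_{\square\backslash R}\bigl(P_\ep \overline{\div\phi}+Q_\ep\overline{\curl\phi}\bigr) = \langle F,\phi\rangle, \quad \forall\,\phi\in H^1_\t(\square;\CC^2).
\]
The crucial technical input, to be established as a separate lemma, is a $\t$-uniform inf-sup / lifting property: the map $\phi\mapsto(\div\phi,\curl\phi)|_{\square\backslash R}$ from $H^1_\t$ onto $L^2(\square\backslash R)^2$ admits a bounded right inverse, uniformly in $\t\in\square^*$. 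Applied to the displayed identity this yields the uniform multiplier bound $\|(P_\ep,Q_\ep)\|_{L^2(\square\backslash R)^2}\le C\|F\|_{H^{-1}}$, and hence the refined estimate $\|\div U_{\ep,\t}\|_{L^2(\square\backslash R)}+\|\curl U_{\ep,\t}\|_{L^2(\square\backslash R)}\le C\ep\|F\|_{H^{-1}}$.

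Dualising the same lifting lemma produces a bounded projection $\Pi_\t:H^1_\t\to V(\t)$, equal to the identity on $V(\t)$, satisfying $\|\phi-\Pi_\t\phi\|_{H^1}\le C(\|\div\phi\|_{L^2(\square\backslash R)}+\|\curl\phi\|_{L^2(\square\backslash R)})$. Set $W_\ep := U_{\ep,\t}-V_\t$; since $\Pi_\t V_\t = V_\t$, the previous step gives $\|W_\ep-\Pi_\t W_\ep\|_{H^1}\le C\ep\|F\|_{H^{-1}}$. To control $\Pi_\t W_\ep\in V(\t)$, test $U_{\ep,\t}$'s equation with $\phi = \Pi_\t W_\ep$: the $\ep^{-1}a$-term vanishes (as $\Pi_\t W_\ep\in V(\t)$), and $V_\t$'s equation supplies $\langle F,\Pi_\t W_\ep\rangle = (b+c)(V_\t,\Pi_\t W_\ep)$; subtraction together with the decomposition $W_\ep = \Pi_\t W_\ep + (W_\ep-\Pi_\t W_\ep)$ yields
\[
(b_\ep+c)(\Pi_\t W_\ep,\Pi_\t W_\ep) = -(b_\ep+c)(W_\ep-\Pi_\t W_\ep,\Pi_\t W_\ep) - (b_\ep-b)(V_\t,\Pi_\t W_\ep).
\]
The conclusion $\|\Pi_\t W_\ep\|_{H^1}\le C\ep\|F\|_{H^{-1}}$ then follows by combining (i) the uniform coercivity $(b_\ep+c)[v]\ge\alpha\|v\|_{H^1}^2$ on $V(\t)$, immediate from \eqref{epformlowerbound} since $a$ vanishes on $V(\t)$; (ii) the direct estimate $|(b_\ep-b)(u,v)|\le C\ep\|u\|_{H^1}\|v\|_{H^1}$, obtained from the Taylor expansions $\tfrac{\sqrt{1-\ep}-1}{\ep}=-\tfrac{1}{2}+O(\ep)$ and $\tfrac{1}{\gamma+\ep}=\tfrac{1}{\gamma}+O(\ep)$; and (iii) the bound on $\|W_\ep-\Pi_\t W_\ep\|_{H^1}$ from the previous paragraph. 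The triangle inequality then finishes.

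The principal technical obstacle is the $\t$-uniform inf-sup / lifting lemma for $(\div,\curl)$ on $H^1_\t$ with values in $L^2(\square\backslash R)^2$. Constructing a uniformly bounded right inverse requires some care near $\t=0$, where the quasi-periodic Laplacian on $H^1_\t$ loses injectivity on constants; the natural remedy is to extend the target data from $\square\backslash R$ to $\square$ in a mean-zero way (at no cost in $L^2$) and apply a Hodge-type decomposition $\phi = \nabla u + \nabla^\perp w$ with $u,w\in H^2_\t$, noting that constants lie in $V(0)$ and so do not enter the quotient norm measured by the lifting estimate.
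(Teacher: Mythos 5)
Your argument is correct and follows a genuinely different route than the paper. The paper introduces an intermediary function $W_{\ep,\t}$ solving $\ep^{-1}a(W_{\ep,\t},\phi)+b(W_{\ep,\t},\phi)+c(W_{\ep,\t},\phi)=\langle F,\phi\rangle$ (i.e., it keeps the full $\ep^{-1}a$ penalty but replaces $b_\ep$ by $b$), splits the error as $U_{\ep,\t}\to W_{\ep,\t}\to V_\t$, controls the first step via a relative form-bound $|b_\ep(\phi,\psi)-b(\phi,\psi)|\le\ep L_b\sqrt{\ep^{-1}a[\phi]+b_\ep[\phi]}\sqrt{a[\psi]+b[\psi]}$, and controls the second by citing the abstract convergence results for degenerating variational problems from \cite{SCIKVPS23} (Theorems 3.1/3.2), whose applicability hinges on the uniform coercivity of $a_\t$ on the orthogonal complement of its kernel, in turn established from the Lipschitz continuity of $V(\t)$ proved in Theorem~\ref{lem:spcom1}. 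You instead recast the problem as a penalised saddle-point problem, derive the refined bound $\|\div U_{\ep,\t}\|_{L^2(\square\backslash R)}+\|\curl U_{\ep,\t}\|_{L^2(\square\backslash R)}\le C\ep\|F\|_{H^{-1}}$ from an $L^2$ bound on the Lagrange multipliers, build a bounded projection $\Pi_\t:H^1_\t\to V(\t)$, and conclude by coercivity of $b_\ep+c$ on $V(\t)$ together with the simple estimate $|(b_\ep-b)(u,v)|\le C\ep\|u\|_{H^1}\|v\|_{H^1}$ (which, incidentally, is exactly inequality \eqref{15.6.24.e5} proved inside Proposition~\ref{relboundb}). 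This saddle-point/Lagrange-multiplier version is self-contained and arguably more transparent than citing \cite{SCIKVPS23}, and it avoids the refined asymmetric relative form-bound; the cost is that it makes explicit use of the surjectivity of $\phi\mapsto(\div\phi,\curl\phi)|_{\square\backslash R}$, whereas the paper's argument formally only needs the lower bound of $a_\t$ on $(\ker a_\t)^\perp$.

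Two caveats on the part you defer. First, what you really need is a $\t$-uniform right inverse, and uniformity near $\t=0$ is precisely the delicate point: after the Hodge ansatz $\phi=\nabla u+\nabla^\perp w$ one must extend the data from $\square\backslash R$ into $R$ so that $\int_\square \tilde f_i\,e^{-\i\t\cdot y}=0$ (orthogonality to the zero Fourier mode of the $\t$-quasi-periodic Laplacian, exactly the hypothesis of Lemma~\ref{lem:spcom3}(i)), not merely $\int_\square \tilde f_i=0$; with that modification your sketch goes through and yields a constant uniform over $\square^*$. Second, note that Lipschitz continuity of $V(\t)$ (Theorem~\ref{lem:spcom1}) and your lifting lemma are essentially equivalent formulations of the same coercivity/inf-sup phenomenon — both are statements about the operator $\phi\mapsto(\div\phi,\curl\phi)|_{\square\backslash R}$ being bounded below on the complement of its kernel, uniformly in $\t$ — so the "principal technical obstacle" you flag is genuinely the same one the paper solves, just packaged differently. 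Your proof is sound modulo supplying that lemma with the orthogonality adjustment.
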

Lemma \ref{prethm.resolventasymp} can be rewritten into operator theoretic language in terms of $H^{-1}(\square;\CC^2)$ to $H^1(\square;\CC^2)$ solution operators and projections onto the appropriate annihilator sets. However, we shall simply state here that this result implies, amongst other things, resolvent estimates between $\mathcal{B}_\ep(\t)$ and $\mathcal{B}(\t)$. Indeed, for fixed $f\in L^2(\square;\CC^2)$,  consider $F\in H^{-1}(\square;\CC^2)$ given by the action   $\langle F,\phi \rangle := c(f,\phi)$; moreover, let $P : L^2(\square;\CC^2) \rightarrow \mathcal{H}$ be the orthogonal projection with respect to the inner product $c$. Then,  Lemma \ref{prethm.resolventasymp} immediately implies the following:
\begin{lem}\label{thm.resolventasymp}
	There exists a constant $C>0$, independent of $\ep$ and $\t$ such that  
	\[
	\| (\B_\ep(\t)+I)^{-1} -(\B(\theta)+I)^{-1} P  \|_{L^2(\square;\CC^2) \rightarrow L^2(\square;\CC^2)} \le C \ep, \quad \forall \t \in \square^*, \forall 0<\ep<1.
	\]
\end{lem}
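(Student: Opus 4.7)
The plan is to bootstrap Lemma \ref{prethm.resolventasymp} directly, by choosing the functional $F$ appropriately and identifying $U_{\ep,\t}$ and $V_\t$ with the relevant resolvent/pseudo-resolvent actions.

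First, I would fix $f \in L^2(\square;\CC^2)$ and define $F \in H^{-1}(\square;\CC^2)$ by $\langle F,\phi\rangle := c(f,\phi)$. Since $c$ is equivalent to the standard $L^2$ inner product, one immediately has the dual-norm bound
\[
\|F\|_{H^{-1}(\square;\CC^2)} \le C \|f\|_{L^2(\square;\CC^2)},
\]
with $C$ independent of $\t$ and $\ep$. With this $F$, the defining variational equation for $U_{\ep,\t}$ in Lemma \ref{prethm.resolventasymp} is precisely the weak form of $(\B_\ep(\t)+I)U_{\ep,\t}=f$, so $U_{\ep,\t} = (\B_\ep(\t)+I)^{-1} f$.

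Next I would argue that $V_\t = (\B(\t)+I)^{-1} P f$. The test space in the equation for $V_\t$ is $V(\t)$, which is contained in $\H$. Since $P$ is the $c$-orthogonal projection onto $\H$, we have $c(f,\tilde V) = c(Pf,\tilde V)$ for every $\tilde V \in V(\t)$. Consequently the equation for $V_\t$ is equivalent to
\[
b(V_\t,\tilde V) + c(V_\t,\tilde V) = c(Pf,\tilde V), \qquad \forall\, \tilde V \in V(\t),
\]
which is the weak form of $(\B(\t)+I)V_\t = Pf$ in the Hilbert space $(\H,c)$. Hence $V_\t = (\B(\t)+I)^{-1} P f$.

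Putting these identifications together, Lemma \ref{prethm.resolventasymp}, combined with the continuous embedding $H^1(\square;\CC^2) \hookrightarrow L^2(\square;\CC^2)$ and the bound on $\|F\|_{H^{-1}}$ above, yields
\[
\bigl\|(\B_\ep(\t)+I)^{-1} f - (\B(\t)+I)^{-1} P f\bigr\|_{L^2(\square;\CC^2)} \le C \ep \|f\|_{L^2(\square;\CC^2)},
\]
uniformly in $\t \in \square^*$ and $0<\ep<1$, which is exactly the claimed operator-norm estimate. There is no real obstacle once Lemma \ref{prethm.resolventasymp} is in hand; the only point that deserves attention is the replacement of $f$ by $Pf$ on the right-hand side of the equation for $V_\t$, which is what forces the projection $P$ to appear in the statement and reflects the fact that $\B(\t)$ lives on the constrained space $\H$ rather than on all of $L^2(\square;\CC^2)$.
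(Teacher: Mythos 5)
Your proposal is correct and follows exactly the route the paper sketches: choose $F$ via $\langle F,\phi\rangle = c(f,\phi)$, identify $U_{\ep,\t}$ and $V_\t$ with $(\B_\ep(\t)+I)^{-1}f$ and $(\B(\t)+I)^{-1}Pf$ respectively, and invoke Lemma \ref{prethm.resolventasymp} together with the $H^{-1}$ bound on $F$ and the embedding $H^1\hookrightarrow L^2$. The paper simply declares this implication "immediate" after fixing $F$ and $P$; you have spelled out the same identifications (including the key substitution $c(f,\tilde V)=c(Pf,\tilde V)$ for $\tilde V\in V(\t)\subset\H$) in full, and all the details check out.
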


Lemma \ref{thm.resolventasymp}, and the fact that  $V(\t)$ is infinite-dimensional, readily provides a uniform asymptotic comparision on spectra(see \cite[Section 6]{SCIKVPS23} for details). Indeed, let
\[
0 \le \lambda_{1,\ep} (\theta) \le \lambda_{2,\ep} (\theta) \le \lambda_{3,\ep} (\theta) \le \ldots, \qquad \lambda_{n,\ep}(\theta) \rightarrow \infty \ \text{as} \ n\rightarrow \infty.
\]
and 
\[
0 \le \lambda_1 (\theta) \le \lambda_2 (\theta) \le \lambda_3 (\theta) \le \ldots, \qquad \lambda_n(\theta) \rightarrow \infty \ \text{as} \ n\rightarrow \infty.
\]
denote  the eigenvalues of the operators $\B_\ep(\t)$ and $\B(\t)$ respectively, ordered according to the min-max principle.  Then:
\begin{lem} There exists a constant $C>0$ such that 
	\[
	\left|\frac{1}{\lambda_{n,\ep}(\t)+1} - \frac{1}{\lambda_n(\t)+1}\right| \le C \ep, \qquad \forall\, n \in \mathbb{N}, \forall\, \t \in \square^*, \forall\, 0<\ep<1.
	\]
\end{lem}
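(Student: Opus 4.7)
The plan is to recast the asserted inequality as the standard min-max spectral stability bound for two compact self-adjoint operators whose difference has small operator norm. Specifically, I would set $T_\ep := (\B_\ep(\t)+I)^{-1}$ and $T_0 := (\B(\t)+I)^{-1} P$, and compare their $n$-th largest eigenvalues using Lemma~\ref{thm.resolventasymp}.

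The first step is to check that both $T_\ep$ and $T_0$ are self-adjoint, non-negative, compact operators on the Hilbert space $(L^2(\square;\CC^2),c)$. For $T_\ep$ this is immediate from the definition of $\B_\ep(\t)$, which is non-negative self-adjoint with compact resolvent on $(L^2(\square;\CC^2),c)$. For $T_0$, the key facts are that $P$ is self-adjoint in the $c$-inner product with range $\H$, and that $(\B(\t)+I)^{-1}$ is self-adjoint and compact on $(\H,c)$; the composition, understood as zero on $\H^\perp$, then inherits self-adjointness, non-negativity, and compactness on $(L^2(\square;\CC^2),c)$.

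Next I would enumerate the decreasing eigenvalue sequences of each operator. By construction, $\mu_n(T_\ep) = 1/(\lambda_{n,\ep}(\t)+1)$ for every $n$. The operator $T_0$ has the eigenvalues $\{1/(\lambda_n(\t)+1)\}_{n\ge 1}$ on $\H$, together with the eigenvalue $0$ of infinite multiplicity on $\H^\perp$. Since $V(\t) \subset \H$ is infinite-dimensional, $T_0$ possesses infinitely many strictly positive eigenvalues, and hence its $n$-th decreasing eigenvalue satisfies $\mu_n(T_0) = 1/(\lambda_n(\t)+1)$ for every $n$. Invoking the classical stability inequality $|\mu_n(S)-\mu_n(T)| \le \|S-T\|_{\mathrm{op}}$ (an immediate consequence of the min-max characterisation of eigenvalues of compact self-adjoint operators) together with Lemma~\ref{thm.resolventasymp} then delivers the claimed bound.

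The only subtle point — the main obstacle to flag — is the eigenvalue identification for $T_0$, namely ensuring that the min-max enumeration does not return the spurious eigenvalue $0$ coming from $\H^\perp$. This is precisely where the infinite-dimensionality of $V(\t)$ (and hence of $\H$) is essential: without it, the $n$-th min-max value would eventually collapse to $0$ and no control on $\lambda_n(\t)$ itself could be extracted. Uniformity in $\t$ and $\ep$ is inherited automatically from the uniform operator-norm bound in Lemma~\ref{thm.resolventasymp}.
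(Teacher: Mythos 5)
Your proof is correct and is precisely the standard argument the paper has in mind: the paper does not spell the proof out but simply cites Lemma~\ref{thm.resolventasymp} together with the infinite-dimensionality of $V(\t)$ and refers to \cite[Section 6]{SCIKVPS23} for details, which is exactly the min-max stability bound for compact self-adjoint operators that you invoke, with the infinite-dimensionality playing the same role you identify (preventing the $n$-th eigenvalue of $(\B(\t)+I)^{-1}P$ from being absorbed by the zero eigenvalue on $\H^\perp$).
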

This, in turn, provides a quantitative asymptotic description  of the closeness of the Maxwell spectrum on the line $k=\sqrt{1-\ep}$,
\[
\sigma_\ep = \overline{\bigcup_{\t \in \square^*} \sum_{n=1}^\infty \lambda_{n,\ep}(\t)},
\] to the Maxwell spectrum on the critical line $k=1$,
\begin{equation*}
%	\label{limitspectrum}
	\sigma_0 := \overline{\bigcup_{\t \in \square^*} \sum_{n=1}^\infty \lambda_n(\t)}.
	%	= \overline{\sum_{n=1}^\infty [a_n,b_n]} \quad a_n : = \min_{\t\in\square^*} \lambda_n(\t), b_n : = \max_{\t\in\square^*} \lambda_n(\t),
\end{equation*}
Indeed, the following result holds (see, for example, \cite[Section 6]{SCIKVPS23} for details):
\begin{lem}\label{thmSpecFell}
	For every compact $[a,b]$ in $\RR$, 
	one has
	\[
	d_{[a,b]} \left( \sigma_\ep, \sigma_0\right) \le C_b \ep,
	\]
	for some positive constant $C_b$ independent of $\ep$ and $a$.  Here,
	\[
	d_{[a,b]}(X,Y) : = \max\{{\rm{dist}([a,b] \cap X, Y)}, \rm{dist}( [a,b]\cap Y,X) \}
	\]
	where $\rm{dist}(X,Y) := \sup_{x\in  X} \rm{dist}(x,Y)$ is the non-symmetric distance of  set $X$ from set $Y$ (adopting the usual convention that $\rm{dist}(\emptyset,A)=\rm{dist}(A,\emptyset) =0$ for any set $A$).
\end{lem}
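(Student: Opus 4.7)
The plan is to deduce Lemma \ref{thmSpecFell} directly from the preceding eigenvalue-resolvent closeness, by converting it first into a pointwise (in $\theta$) eigenvalue closeness on bounded intervals, and then lifting this to the closures $\sigma_\ep$ and $\sigma_0$. First I would rewrite
\[
\left| \frac{1}{\lambda_{n,\ep}(\theta)+1} - \frac{1}{\lambda_n(\theta)+1} \right| = \frac{|\lambda_{n,\ep}(\theta) - \lambda_n(\theta)|}{(\lambda_{n,\ep}(\theta)+1)(\lambda_n(\theta)+1)},
\]
so that the previous lemma yields $|\lambda_{n,\ep}(\theta) - \lambda_n(\theta)| \le C\ep(\lambda_{n,\ep}(\theta)+1)(\lambda_n(\theta)+1)$. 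If one of the two eigenvalues lies in $[a,b]$, the corresponding factor in the denominator is $\ge 1/(b+1)$, and the resolvent bound then forces the other eigenvalue to remain bounded by some $M_b$ as well (for $\ep$ small enough in terms of $b$ only). Consequently
\[
|\lambda_{n,\ep}(\theta) - \lambda_n(\theta)| \le C(b+1)(M_b+1)\,\ep =: C_b \ep, \qquad \forall\, \theta \in \square^*,\ \forall\, n,
\]
whenever $\{\lambda_{n,\ep}(\theta), \lambda_n(\theta)\} \cap [a,b] \neq \emptyset$.

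Next I would transfer this pointwise bound to $d_{[a,b]}(\sigma_\ep, \sigma_0)$. Given $x \in [a,b] \cap \sigma_\ep$, by the very definition of $\sigma_\ep$ there is a sequence $(\theta_k, n_k)$ with $\lambda_{n_k,\ep}(\theta_k) \to x$; for large $k$ the previous step produces $\lambda_{n_k}(\theta_k) \in \sigma_0$ within $C_b \ep$ of $\lambda_{n_k,\ep}(\theta_k)$, and passing to the limit gives a point of $\sigma_0$ at distance at most $C_b\ep$ from $x$. Hence $\mathrm{dist}([a,b]\cap \sigma_\ep, \sigma_0) \le C_b \ep$; the symmetric estimate $\mathrm{dist}([a,b]\cap \sigma_0, \sigma_\ep) \le C_b \ep$ follows by interchanging the two spectra, and the degenerate cases are covered by the stated convention for $\mathrm{dist}$.

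The main obstacle is verifying that, once one eigenvalue enters $[a,b]$, its counterpart cannot drift arbitrarily far, and that this uniform a priori bound $M_b$ depends only on $b$ (not on $a$, $\ep$, $\theta$ or $n$). This is exactly where the uniform coercivity \eqref{epformlowerbound} enters: together with the min--max principle it shows that the $n$-th eigenvalue of $\B_\ep(\theta)$ is controlled from above, uniformly in $\ep$ and $\theta$, by a comparison quantity that also controls $\lambda_n(\theta)$, so the partner eigenvalue stays in a fixed compact window once the other one is in $[a,b]$. A secondary subtlety is the presence of the projection $P$ in Lemma \ref{thm.resolventasymp}, which removes the infinite-dimensional kernel of $\B(\theta)$ from the comparison; but since that kernel corresponds to ``eigenvalue $+\infty$'' in the pseudo-resolvent, it contributes nothing to $\sigma_0$ on any bounded interval and requires no further bookkeeping.
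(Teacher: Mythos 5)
The paper does not actually prove this lemma: it refers to \cite[Section~6]{SCIKVPS23} for details, just as it does for the preceding eigenvalue comparison. Your proposal supplies the expected argument along precisely those lines, and the core steps --- rewriting the resolvent-distance bound as a pointwise eigenvalue bound on a compact window, then lifting to the closures $\sigma_\ep,\sigma_0$ via the $1$-Lipschitz function $\mathrm{dist}(\cdot,\sigma_0)$ --- are sound. Your side remark on the projection $P$ (that the kernel of the pseudo-resolvent sits at ``$\lambda=+\infty$'' and cannot enter any compact $[a,b]$) is also the right justification for why the eigenvalue-comparison lemma, rather than the full operator-norm estimate, is what one really needs.

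Two details deserve tightening. First, your closing paragraph misattributes the uniform bound $M_b$ on the ``partner'' eigenvalue to coercivity \eqref{epformlowerbound} plus min--max. That is neither needed nor correct as phrased: coercivity bounds eigenvalues from \emph{below}, not above, and there is no min--max test-subspace comparison to be made at this point. The bound $M_b$ already follows directly from the inequality you wrote down: if $\lambda_n(\theta)\in[a,b]$ then $\tfrac{1}{\lambda_{n,\ep}(\theta)+1}\ge \tfrac{1}{b+1}-C\ep\ge\tfrac{1}{2(b+1)}$ once $\ep<\tfrac{1}{2C(b+1)}$, hence $\lambda_{n,\ep}(\theta)\le 2b+1$, and symmetrically with the roles exchanged. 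Coercivity enters only upstream, in the proof of Lemma~\ref{prethm.resolventasymp}. Second, this argument as written covers only $\ep<\ep_0(b):=\tfrac{1}{2C(b+1)}$, whereas the lemma asserts the bound for every $\ep\in(0,1)$. The fix is cheap but should be said: constants lie in both $V(0)$ and $H^1_0(\square;\CC^2)$ and are annihilated by the relevant forms, so $0\in\sigma_0\cap\sigma_\ep$, giving $d_{[a,b]}(\sigma_\ep,\sigma_0)\le b$ unconditionally (recall $\sigma_\ep,\sigma_0\subset[0,\infty)$, so one may take $a\ge0$); for $\ep\ge\ep_0(b)$ one then has $b\le 2Cb(b+1)\ep$, and a single $C_b$ of order $C(b+1)^2$ works on all of $(0,1)$. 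Finally, in the passage to closures the approximating values $\lambda_{n_k,\ep}(\theta_k)$ need not lie in $[a,b]$, so one should apply the pointwise bound on $[a-\eta,b+\eta]$ and let $\eta\to0$, using that $C_b$ can be chosen continuous in $b$; this is routine but currently glossed over.
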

\begin{rem}
Lemma \ref{thmSpecFell}, establishes convergence of $\sigma_\ep$ to $\sigma_0$ in the Fell topology with rate. 
\end{rem}

Finally, the  spectrum $\sigma_0$ can be represented as (see Corollary \ref{cor.cont} below)
\[
\sigma_0  = \overline{\sum_{n=1}^\infty [a_n,b_n]} \quad a_n : = \min_{\t\in\square^*} \lambda_n(\t), b_n : = \max_{\t\in\square^*} \lambda_n(\t),
\]
and we see that $\sigma_0$ has a gap if two adjacent spectral bands do not overlap: i.e. if for some $m \in \mathbb{N}$, one has 
\[
b_m < a_{m+1}.
\]
An immediate consequence of Lemma \ref{thmSpecFell} is the following result (which is a main result from the perspective of wave propagation in two-dimensional photonic crystal fibres):
\begin{thm}\label{thm.PBG}
	Assume that, for some $m\in \NN$, $I=(b_m,a_{m+1})$ is a gap of $\sigma_0$ . Let $\ep_0 := (a_{m+1}-b_m)/2C_b$.  Then, the set
	\[
\G = \bigcup_{0\le \ep < \min\{1, \ep_0\}} \{ (\omega, \sqrt{1-\ep}) \, | \, \omega^2 \in (b_m+C_b \ep, a_{m+1} - C_b\ep) \},
\]
is a photonic band gap. That is, for any pair $(\omega, k) \in \G$, the Maxwell system  \eqref{coef},\eqref{Maxwellsystem} does not admit a non-trivial solution of the form \eqref{propsol} for any $\t \in \RR^2$. 
%Furthermore,  $I \times \{1\} \subset \partial \G$.
\end{thm}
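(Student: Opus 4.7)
The plan is to reduce the photonic band gap statement to a spectral non-inclusion and then exploit the Fell-topology convergence of Lemma \ref{thmSpecFell}. By the operator-theoretic formulations in Section \ref{s:o}, the Maxwell system \eqref{coef}, \eqref{Maxwellsystem} admits a non-trivial solution of the form \eqref{propsol} for some $\t\in\RR^2$, on the line $k=\sqrt{1-\ep}$ (respectively $k=1$), only if $\omega^2$ is an eigenvalue of $\B_\ep(\t)$ (respectively $\B(\t)$) for some $\t\in\square^*$; in particular, $\omega^2\in\sigma_\ep$ (respectively $\omega^2\in\sigma_0$). It therefore suffices to prove $\omega^2\notin\sigma_\ep$ whenever $(\omega,\sqrt{1-\ep})\in\G$.

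The case $\ep=0$ is immediate: $(\omega,1)\in\G$ forces $\omega^2\in(b_m,a_{m+1})$, which is by hypothesis a gap of $\sigma_0$. For $0<\ep<\min\{1,\ep_0\}$, I would apply Lemma \ref{thmSpecFell} with the compact interval $[a,b]:=[b_m,a_{m+1}]$; the constant $C_b$ it produces is precisely the one appearing in the theorem statement. Arguing by contradiction, suppose $\omega^2\in\sigma_\ep$. Since $\omega^2\in(b_m+C_b\ep,\,a_{m+1}-C_b\ep)\subset[b_m,a_{m+1}]$, we have $\omega^2\in\sigma_\ep\cap[a,b]$, and Lemma \ref{thmSpecFell} yields $\mathrm{dist}(\omega^2,\sigma_0)\le C_b\ep$.

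The contradiction comes from a lower bound in the opposite direction: the gap hypothesis forces every $y\in\sigma_0$ to lie in $(-\infty,b_m]\cup[a_{m+1},\infty)$, hence
\[
\mathrm{dist}(\omega^2,\sigma_0)\ \ge\ \min\{\omega^2-b_m,\ a_{m+1}-\omega^2\}\ >\ C_b\ep,
\]
where the strict inequality is guaranteed by the membership $\omega^2\in(b_m+C_b\ep,\,a_{m+1}-C_b\ep)$. This contradicts the upper bound and completes the argument. There is no deep obstacle here, since the analytic content is already packaged in the resolvent estimate of Lemma \ref{thm.resolventasymp} and its spectral consequence Lemma \ref{thmSpecFell}; the only subtle point — and the reason for the constraint $\ep<\ep_0=(a_{m+1}-b_m)/2C_b$ — is that without it the frequency window $(b_m+C_b\ep,a_{m+1}-C_b\ep)$ would be empty and $\G$ would degenerate.
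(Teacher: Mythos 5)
Your proof is correct and follows essentially the same route as the paper's: both reduce the statement to showing $(b_m+C_b\ep, a_{m+1}-C_b\ep)$ is a gap in $\sigma_\ep$ and invoke Lemma \ref{thmSpecFell}. You merely unpack the one-line deduction in the paper (that the Fell-distance bound forces the interval to avoid $\sigma_\ep$) into an explicit contradiction argument, which is a helpful elaboration but not a different method.
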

\begin{proof}
Since  $\ep < \ep_0$, one has  $b_m+C_b \ep < a_{m+1} - C_b \ep$. Then, Lemma \ref{thmSpecFell} implies that $(b_m+C_b \ep, a_{m+1}- C_b\ep)$ is a gap in the spectrum of $\sigma_\ep$.
	Thus, $\G$ 	satisfies the statements of the corollary. 
\end{proof}
%In Section \ref{examples} we prove the existence of gaps in the limit spectrum $\sigma_0$ for a particular class of genuine two-dimensional photonic crystal fibres known as arrow fibres. 
\section{Some properties of the reduced Maxwell operators $\B(\t)$}
\label{secblochlim}
In this section we shall study the dependence of the operators $\B(\t)$ and, subsequently, of their spectral properties, with respect to the quasi-momenta $\t$.

An important distinct feature of the operators $\B(\theta)$ (as opposed to the usual Floquet-Bloch operators concerned with periodic elliptic PDEs in the whole space) is the non-trivial $\t$-dependence of the  form domain $V(\theta)$. The statement of continuity of $\B(\theta)$, and subsequently its eigenvalues $\lambda_n(\t)$, is therefore not a simple consequence of the Bloch-wave representation of functions belonging to $H^1_\theta(\square;\CC^2)$. Such statements rely upon establishing that the underlying space $V(\theta)$ is continuous with respect to $\theta$.

In the rest of this section we shall introduce and prove the results alluded to above.
\begin{thm}[Lipschitz continuity of $V(\theta)$]
	\label{lem:spcom1}
	There exists a constant $L_V >0$ such that
	\begin{equation*}
%		\label{Vlip}
\forall \t, \t' \in \square^*, \forall v \in V(\t), \quad \inf_{v'\in V(\t')}\|v-v'\|_{H^1(\square;\CC^2)} \le L_V |\t-\t'| \| v \|_{H^1(\square;\CC^2)}
	\end{equation*}
\end{thm}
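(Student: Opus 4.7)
The strategy is to construct $v' \in V(\t')$ from $v \in V(\t)$ in two stages: first multiply by the phase factor $e^{\i(\t'-\t)\cdot y}$ to restore $\t'$-quasi-periodicity, then add a small Helmholtz-type correction that kills the defect in the divergence and curl inside $\square\setminus R$ introduced by the phase.

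Set $\delta := \t'-\t$ and $\hat v := e^{\i\delta\cdot y}v$. Then $\hat v\in H^1_{\t'}(\square;\CC^2)$ and direct differentiation gives $\|\hat v - v\|_{H^1(\square;\CC^2)} \le C|\delta|\|v\|_{H^1(\square;\CC^2)}$. Using $\div v = \curl v = 0$ in $\square\setminus R$, one finds
\begin{equation*}
\div\hat v = \i(\delta\cdot v)\,e^{\i\delta\cdot y}, \qquad \curl\hat v = \i(\delta_1 v_2 - \delta_2 v_1)\,e^{\i\delta\cdot y} \quad \text{in $\square\setminus R$},
\end{equation*}
each of $L^2(\square\setminus R)$-norm at most $C|\delta|\|v\|_{L^2(\square;\CC^2)}$. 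It therefore suffices to construct $\psi\in H^1_{\t'}(\square;\CC^2)$ satisfying $\div\psi = \div\hat v$ and $\curl\psi = \curl\hat v$ in $\square\setminus R$, together with the uniform bound
\[
\|\psi\|_{H^1(\square;\CC^2)} \le C_0\bigl(\|\div\hat v\|_{L^2(\square\setminus R)} + \|\curl\hat v\|_{L^2(\square\setminus R)}\bigr),
\]
with $C_0$ independent of $\t'$; then setting $v' := \hat v - \psi$ and applying the triangle inequality will yield the claimed Lipschitz bound with $L_V = C(1+C_0)$.

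To build $\psi$, I would use the Helmholtz ansatz $\psi = \nabla p + \nabla^\perp q$ with scalars $p,q \in H^2_{\t'}(\square;\CC)$, noting that $\div\psi = \Delta p$ and $\curl\psi = \Delta q$. This reduces the task to solving $\Delta p = \tilde f$ and $\Delta q = \tilde g$ on $\square$ with $\t'$-quasi-periodic boundary conditions, where $\tilde f, \tilde g \in L^2(\square;\CC)$ are extensions of $\div\hat v, \curl\hat v$ from $\square\setminus R$ to all of $\square$. Because the constraints on $\psi$ only concern $\square\setminus R$, the values of $\tilde f, \tilde g$ on $R$ are free, and this freedom is exploited to enforce the natural solvability conditions for the Poisson problems uniformly in $\t'$.

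The main obstacle is securing the uniform-in-$\t'$ $H^1$-estimate on $\psi$, particularly near $\t' = 0$ where the Laplacian on $H^2_{\t'}(\square;\CC)$ has constants in its kernel. A Fourier computation in the basis $\{e^{\i(n+\t')\cdot y}\}_{n\in\ZZ^2}$ gives $\|\nabla p\|_{H^1(\square;\CC^2)}^2 = \sum_{n\in\ZZ^2}(|n+\t'|^{-2}+1)|\hat f_n|^2$; since $|n+\t'|\ge 1/2$ for $n\neq 0$ and $\t'\in\square^*$, every term with $n\neq 0$ is harmlessly bounded by $\|\tilde f\|_{L^2}^2$, so the uniform estimate reduces to the vanishing of $\hat f_0$. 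I would enforce this by taking $\tilde f|_R = \lambda(\t')\phi$ for a fixed smooth non-negative profile $\phi$ supported in $R$ chosen sufficiently concentrated so that its Fourier transform is bounded below on $\square^*$, and setting $\lambda(\t')$ to the unique value making $\hat f_0 = 0$; this yields $|\lambda(\t')| \le C\|\div\hat v\|_{L^2(\square\setminus R)}$ uniformly in $\t'$, hence the required $L^2$-control on $\tilde f$ and consequently on $\|\nabla p\|_{H^1}$. The same construction applied to $\tilde g$ completes the definition of $\psi$, and summing the error contributions yields the stated Lipschitz estimate.
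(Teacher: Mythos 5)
Your proposal is correct, and it is a clean variant of the paper's argument — same family of ideas, but packaged more economically.

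Both proofs rest on the same two mechanisms: a Helmholtz-type decomposition $\nabla p + \nabla^\perp q$ whose $\div$ and $\curl$ are controlled by $\Delta p$ and $\Delta q$, and the crucial observation that since the constraint only lives on $\square\setminus R$, the forcing of the Poisson problems is free inside $R$, which lets one kill the $0$th Fourier coefficient and thereby obtain an $H^2$ estimate uniform in $\t'$ (this is precisely what the paper's Lemma \ref{lem:spcom3}(i) encodes, and what your Fourier computation reproduces). The difference is where the decomposition is applied. The paper first performs a preliminary reduction (subtracting off a piece $\eta e^{\i\t\cdot y}\mv{e^{-\i\t\cdot y}v}$ supported in $\overline R$, which lies in every $V(\t')$) so that Lemma \ref{lem:spcom2} represents $v$ itself as $\nabla a + \nabla^\perp b$; it then approximates the scalar potentials $a,b$ by $a',b'$ solving modified Poisson problems with an added $\chi_R$ source tuned via the constant $D_p$, and it must handle $\t=0$, $v=$ constant as a separate Case~2, since there the representation of $v$ acquires an extra constant term. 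Your version instead phase-shifts $v$ directly to $\hat v = e^{\i(\t'-\t)\cdot y}v$, observes that the resulting $\div/\curl$ defects on $\square\setminus R$ are $\mathcal O(|\t'-\t|)\|v\|_{L^2}$, and removes them by a single Helmholtz correction $\psi=\nabla p+\nabla^\perp q$; since you never need to represent $v$ by potentials, the preliminary reduction and the separate constant case disappear — the constant case is absorbed into the general estimate. What the paper's route buys is that it leverages the representation Lemma \ref{lem:spcom2}, which is also used elsewhere in the paper (e.g.\ in the ARROW analysis), so that reuse is free; what your route buys is a shorter, unified argument without case distinctions. Two small points of hygiene: the profile $\phi$ needs $\int_R\phi\,e^{-\i\t'\cdot y}$ bounded away from zero uniformly on $\overline{\square^*}$, which is most cleanly guaranteed by concentrating $\phi$ near a point of $R$ so that the integral is within $1/2$ of $e^{-\i\t'\cdot y_0}$ for all admissible $\t'$ (a compactness argument alone does not obviously rule out zeros without this concentration); and the Fourier identity for $\|\nabla p\|_{H^1}^2$ holds up to the standard norm-equivalence constants rather than with equality, but this is immaterial.
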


To prove this result we shall use the following characterisation of the family of spaces $V(\theta)$.

\begin{lem}
	\label{lem:spcom2} $ $ \\
	\noindent (i). Suppose $\theta \in \square^*\backslash\{0\}$, then $v \in V(\theta)$ if, and only if,
	\[
	v = \nabla a + \nabla^{\perp}  b 
	\]
	for some $a,b \in H^2_{\theta}(\square): = \{e^{\i \t \cdot y} v \, | \, v \in H^2_{per}(\square) \}$ with
	\begin{alignat*}{3}
		\Delta a  = f_1, & \hspace{3cm} & \Delta b  = f_2,
%		\label{ab}
	\end{alignat*}
for  some $f_1, f_2 \in L^2(\square)$ satisfying  $\mathrm{supp}{f_1}$ and $\mathrm{supp}{f_2}$ contained in $\overline{R}$. 

	\noindent (ii). Suppose $\theta = 0$. Then $v \in V(0)$ if, and only if,
	\[
	v = c + \nabla a + \nabla^{\perp} b 
	\]
	for some constant $c  \in \CC^2$ and $a,b \in H^2_{per}(\square)$ with\footnote{Henceforth, $\mv{f} := \int_\square f$.}   $\mv{a}=\mv{b}=0$ and 
	\begin{alignat*}{3}
		\Delta a  = f_1, & \hspace{3cm} & \Delta b  = f_2,
	\end{alignat*}
for  some $f_j  \in L^2(\square)$, $j=1,2$, satisfying   $\mv{f_j}=0$ and $\mathrm{supp}{f_j}$ contained in $\overline{R}$. 
\end{lem}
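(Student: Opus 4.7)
The plan is to exploit the two-dimensional Helmholtz-type decomposition: write any sufficiently regular vector field on $\square$ as $v = \nabla a + \nabla^\perp b$ (plus a harmonic correction in the periodic case), and translate the constraints $\div v = 0$, $\curl v = 0$ in $\square \backslash R$ into support conditions on the sources of the Poisson equations satisfied by the scalar potentials $a, b$. Sufficiency is a routine calculation: in two dimensions $\div(\nabla^\perp b) \equiv 0$ and $\curl(\nabla a) \equiv 0$ identically, so $\div v = \Delta a = f_1$ and $\curl v = \Delta b = f_2$, which both vanish on $\square \backslash R$ by the support assumption on $f_j$; quasi-periodicity and $H^1$ regularity are inherited from $a, b \in H^2_\theta(\square)$, and an added constant $c$ in part (ii) affects neither divergence nor curl.

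For necessity, I start from $v \in V(\theta)$ and set $g := \div v$, $h := \curl v \in L^2(\square)$, both vanishing a.e.\ in $\square \backslash R$ by the definition of $V(\theta)$. In part (i), $\theta \neq 0$, the Laplacian is an isomorphism $H^2_\theta(\square) \to L^2(\square)$ (by Fourier series, since the symbol $|k+\theta|^2$ is bounded away from zero as $k$ ranges over the dual lattice), so I solve uniquely $\Delta a = g$ and $\Delta b = h$ with $a, b \in H^2_\theta(\square)$. In part (ii), $\theta = 0$, I first set $c := (2\pi)^{-2} \int_\square v \in \CC^2$; the compatibility conditions $\int_\square g = \int_\square h = 0$ follow by integration by parts on $\square$-periodic functions, which both guarantees $\mv{f_j} = 0$ and produces unique mean-zero periodic $H^2$ potentials $a, b$.

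It then remains to prove the rigidity statement that $w := v - \nabla a - \nabla^\perp b$ in part (i) (respectively $w := v - c - \nabla a - \nabla^\perp b$ in part (ii)) vanishes identically. By construction $w$ is $\theta$-quasi-periodic, lies in $H^1(\square;\CC^2)$, and satisfies $\div w = \curl w = 0$ everywhere in $\square$. From the distributional identities $\Delta w_1 = \partial_1 \div w - \partial_2 \curl w$ and $\Delta w_2 = \partial_2 \div w + \partial_1 \curl w$, each component of $w$ is harmonic and hence smooth. Writing $w = e^{\i \theta \cdot y} \tilde w(y)$ with $\tilde w$ periodic and expanding in Fourier series, one finds $|k+\theta|^2 \hat{\tilde w}(k) = 0$ for every lattice vector $k$: in part (i) this forces $\tilde w \equiv 0$, and in part (ii) it forces $\tilde w$ to be a constant, which vanishes thanks to the choice of $c$.

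The main obstacle is precisely this rigidity step, which says that a $\theta$-quasi-periodic $H^1$ vector field with simultaneously vanishing divergence and curl reduces to zero (or to a constant in the periodic case). Equivalently, the scalar potentials $a, b$ are unique only up to a $\theta$-quasi-periodic harmonic function, and this space is trivial for $\theta \neq 0$ and one-dimensional (the constants) for $\theta = 0$, which is exactly why the additive constant $c$ is needed in part (ii). Once the rigidity is in place, all remaining verifications reduce to standard Poisson solvability and bookkeeping of the supports of $\Delta a$ and $\Delta b$ inside $\overline R$.
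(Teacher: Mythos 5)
Your proof is correct and follows essentially the same strategy as the paper's: set $f_1 = \div v$, $f_2 = \curl v$ (supported in $\overline R$), solve the Poisson problems for $a,b$ in $H^2_\theta(\square)$, and show the remainder $w := v - \nabla a - \nabla^\perp b$ is constant (hence zero when $\theta \neq 0$). The only cosmetic difference is in the rigidity step: the paper deduces $\nabla w = 0$ in one line from the integration-by-parts identity $\int_\square |\nabla w|^2 = \int_\square\big(|\div w|^2 + |\curl w|^2\big)$ valid on $H^1_\theta(\square;\CC^2)$, whereas you first show each component of $w$ is harmonic and then expand in Fourier series, which is equivalent but slightly longer.
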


\begin{proof} In both cases (i) and (ii) the necessity is easy to demonstrate. Let us now show  sufficiency. 
	
	Fix $\theta \in \square$,  $v \in V(\theta)$ and set $f_1 : = \div v$, $f_2 : = \curl v$. It is clear that $f_1,f_2 \in L^2(\square)$ and that their supports are contained in $\overline{R}$. Now, let us introduce $a,b \in H^2_\theta (\square)$ the solutions of 
	\begin{alignat*}{3}
		\Delta a  = f_1, & \hspace{3cm} & \Delta b  = f_2.
	\end{alignat*}
	For $\t=0$, note $\langle f_i \rangle = 0$, and so there exist solutions $a,b$ with $\langle a \rangle = \langle b \rangle = 0$.
	Now,   $w : = v - \nabla a - \nabla^{\perp} b$ belongs to $H^1_{\theta}(\square;\CC^2)$ with $\div w =\curl w = 0$, which implies $w$ is  constant (see \eqref{ibp}). Finally, for $\t \neq 0$  the only constant function in $H^1_\t(\square;\CC^2)$ is the zero vector, i.e. $w \equiv 0$.
\end{proof}
Additionally, in the proof of Theorem \ref{lem:spcom1} we shall utilise the following uniform estimates.
\begin{lem} 
	\label{lem:spcom3}
	\hspace{0pt}
	
	\begin{enumerate}[(i)]
		\item{Let $\t \in \square^*$ and $p \in H^2_{\t}(\square)$ satisfy
			\[
\int_\square		\Delta p e^{-\i \t \cdot y} = 0.
			\]
		Then, there exists a constant $C>0$, independent of $p$ and $\t$, such that 
	\begin{equation}\label{pH2}
		\| p \|_{H^2(\square)} \le C\| \Delta p \|_{L^2(\square).}
	\end{equation}

		}
%	\item{Suppose $\theta \in \square^*$, and let $u \in H^1_{\theta}(\square)$ be such that 
%			$\Delta u \in L^2(\square).$ Then
%			\begin{equation}
%				\label{eq:spcom5}
%				\norm{u}_{H^2(\square)} \le  \left( 1 + \frac{1}{\vert \theta \vert^2} \right) \norm{\Delta u}_{L^2(\square)}.
%		\end{equation}}
		\item{Suppose $0 \neq \theta \in \square^*$, $f \in L^2(\square)$ and consider $ q \in H^2_{\t}(\square)$ the solution to

			\begin{equation}\label{eq:spcom8.1}
		\Delta q = e^{\i \t \cdot y}f\in L^2(\square).
			\end{equation}
					Then, there exists a constant $C>0$, independent of $p$ and $\t$, such that
%					  there exists a constant $C>0$, independent of $\theta$ and $f$, such that
			\begin{equation}\label{2decdesired}
				\left\|\nabla q +  \mv{f}\frac{\i \t}{|\t|^2} e^{\i \t\cdot y} \right\|_{H^1(\square;\CC^2)} \le C\norm{f}_{L^2(\square)}.			
			\end{equation}	
			}
	\end{enumerate}
\end{lem}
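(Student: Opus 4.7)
The plan is to carry out both parts by Bloch-wave Fourier expansion on the reference cell. Any $u \in H^2_\t(\square)$ admits a representation $u(y) = \sum_{n \in \ZZ^2} c_n e^{\i(n+\t)\cdot y}$, for which $\|u\|_{H^2(\square)}^2$ and $\|\Delta u\|_{L^2(\square)}^2$ are proportional (via Parseval) to $\sum_n (1+|n+\t|^4)|c_n|^2$ and $\sum_n |n+\t|^4|c_n|^2$ respectively. The one $\t$-uniform fact I will use throughout is the spectral gap on the Brillouin zone $\square^* = [-\tfrac12,\tfrac12)^2$: one has $|n+\t| \ge \tfrac12$ for every $n \in \ZZ^2 \setminus \{0\}$ and every $\t \in \square^*$.

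For part (i), I would expand $p$ as above and observe that $\int_\square \Delta p\, e^{-\i \t \cdot y}\,dy$ collapses, by orthogonality, to a nonzero multiple of $-|\t|^2 c_0$, so the compatibility hypothesis is equivalent to $c_0 = 0$ whenever $\t \ne 0$. With that mode killed, the bound \eqref{pH2} reduces to the elementary inequality $1 + |n+\t|^4 \le C |n+\t|^4$ for $n \ne 0$ and $\t \in \square^*$, which is immediate from $|n+\t| \ge \tfrac12$.

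For part (ii), I would expand $f \in L^2(\square)$ in a periodic Fourier series $f = \sum_n f_n e^{\i n \cdot y}$ and solve \eqref{eq:spcom8.1} mode by mode, obtaining $q = -\sum_n (f_n/|n+\t|^2) e^{\i(n+\t)\cdot y}$, which is well-defined because $\t \ne 0$. The $n = 0$ contribution to $\nabla q$ is $-\i\t f_0 \cdot |\t|^{-2} e^{\i\t\cdot y}$, which, up to the fixed normalisation factor relating $\mv f$ to $f_0$, is precisely the term subtracted in \eqref{2decdesired}. Once that single mode is removed, the squared $H^1$-norm of the remaining series is comparable to $\sum_{n\ne 0}(1+|n+\t|^2) |n+\t|^{-2} |f_n|^2$, which is uniformly bounded by $\sum_n |f_n|^2 \sim \|f\|_{L^2(\square)}^2$, again by the same lower bound on $|n+\t|$.

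The only delicate point is the behaviour near $\t = 0$: in both parts the would-be obstruction is the $n = 0$ Fourier mode, whose prefactor $|\t|^{-2}$ is large for small $\t$. Part (i) handles this through the compatibility hypothesis, which forces $c_0 = 0$ for $\t \ne 0$; part (ii) handles it by hand through the explicit correction term in \eqref{2decdesired}. Away from that single mode both estimates are standard Parseval computations with the uniform spectral gap $|n+\t| \ge 1/2$.
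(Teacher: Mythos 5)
Your proof is correct and takes essentially the same route as the paper: both rely on the quasi-periodic Fourier/Bloch expansion, Parseval's identity, and the uniform spectral gap $|n+\theta|\ge \tfrac12$ for $n\neq 0$, $\theta\in\square^*$, with the zeroth mode isolated (killed by the compatibility condition in part (i), and absorbed into the explicit correction in part (ii)). The only stylistic difference is that the paper proves part (ii) by splitting $f$ into its mean and mean-zero parts and then invoking part (i) on the mean-zero piece, whereas you redo the Parseval estimate directly on the remaining modes; the content is the same.
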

\begin{proof} \hspace{0pt}
	
	{\it Proof of (i):}
	 By Fourier series decomposition, we have 
	\begin{alignat*}{3}
		p(y)  = \sum_{z \in \mathbb{Z}^2} a_z  e^{ {\rm i}(\theta +  z) \cdot y}, & \hspace{3cm} & \Delta p(y)  =  \sum_{z \in \ZZ^2} b_z  e^{ {\rm i}(\theta +  z) \cdot y},
	\end{alignat*}
and we shall consider the equivalent $H^2$ norm
\[
\norm{p}:= \left( \frac{1}{2\pi}\sum_{z \in \mathbb{Z}^2} \left( 1 + \vert \theta +  z \vert^2 \right)^2 \vert a_z \vert^2 \right)^{1/2}.
\]
By assumption, $b_0 =0$. Equating coefficients gives $a_0 = 0$ and  $a_z = -\tfrac{b_z}{\vert \theta + z \vert^2}$, $z \neq 0$. Now, since $\t \in \square^* = [-\frac{1}{2},\frac{1}{2})^2$,
\begin{equation*}
\begin{aligned}
\norm{p}^2 &= \frac{1}{2\pi}\sum_{z \in \mathbb{Z}^2} \left( 1 + \vert \theta +  z \vert^2 \right)^2 \vert a_z \vert^2 = \frac{1}{2\pi}\sum_{0 \neq z \in \mathbb{Z}^2} \Big( \frac{1 + \vert\theta +  z \vert^2}{\vert \theta +  z \vert^2} \Big)^2 \vert b_z \vert^2 \\
& \le \Big(2+1\Big)^2 \frac{1}{2\pi}\sum_{0\neq z \in \mathbb{Z}^2}\vert b_z \vert^2 = 9  \| \Delta p \|_{L^2(\square)}^2.
\end{aligned}
\end{equation*}

	{\it Proof of (ii):} 
%	  By Fourier series decomposition:
%	\begin{align*}
%		q(y) & = \sum_{z \in \mathbb{Z}^2} a_z  e^{\i (\t+z) \cdot y} & e^{\i \t \cdot y}f(z) & =  \sum_{z \in \ZZ^2} b_z  e^{\i(\t+ z) \cdot y},
%	\end{align*}
%%	Notice that since $\mv{u} = 0$ then $\mv{f} =0$, that is  $a_0=b_0=0$. 
%	
%	Now \eqref{eq:spcom8.1} informs us that $a_z =  -\tfrac{b_z}{\vert \theta+z \vert^2}$.  Note that $b_0 = \mv{f}$. One has,
%	\[
%	\nabla q 
%%	= \sum_{z \in \mathbb{Z}^2} \i (2\pi z +  \t )a_z  e^{ \i(\t+ 2\pi z) \cdot y}
% =  \i \t a_0 e^{\i \t \cdot y}+  \sum_{0 \neq z \in \mathbb{Z}^2} \i (\t +   z  )a_z  e^{\i(\t+   z) \cdot y} 
%	\]
%	Therefore, 
%	\begin{flalign*}
%		\norm{ \nabla q - \i \t a_0 e^{\i \t \cdot y}}_{H^1(\square)}^2 &  = \frac{1}{2\pi}\sum_{\substack{z \in \mathbb{Z}^2 \\ z \neq 0}} \left( 1 +  \vert \t+   z \vert^2 \right)  |\t +  z |^2| a_z  \vert^2 
%		\\ & = \frac{1}{2\pi}\sum_{\substack{z \in \mathbb{Z}^2 \\ z \neq 0}} \left( \frac{1 + \vert \t +  z \vert^2}{\vert \theta +  z \vert^2} \right) \vert b_z \vert^2  \le9  \frac{1}{2\pi}\sum_{\substack{z \in \mathbb{Z}^2 \\ z \neq 0}}  \vert b_z \vert^2,
%%		\le 4 \sum_{\substack{z \in \mathbb{Z}^2 \\ z \neq 0}}  \vert b_z \vert^2,
%	\end{flalign*} 
%	i.e.  \eqref{2decdesired} holds (since $a_0 =-\mv{f} / |\t|^2$).
Note that $f = \langle f \rangle + f_0$ where $f_0 \in L^2(\square)$ satisfies $\langle f_0 \rangle =0$. Thus $q = -\frac{\langle f \rangle}{|\t|^2} e^{\i \t \cdot y} + p$ where 
\[
p \in H^2_\t(\square) \text{ solves $\Delta p = e^{\i \t \cdot y}f_0$.}
\] Note that $\int_\square \Delta p e^{-\i \t \cdot y} =0$ and so, \eqref{pH2}holds for $p$. Furthermore,
\[
\nabla q = -\i \t \frac{\langle f \rangle}{|\t|^2} e^{\i \t \cdot y} + \nabla p.
\]
Hence,   \eqref{2decdesired} holds. 
\end{proof}

\begin{proof}[Proof of Theorem \ref{lem:spcom1}.] 	 Fix $\t,\t' \in \square^*$. 
	For each $v \in V(\t)$, one needs to construct a particular $v' \in V(\t')$ such that 
	\begin{equation}\label{tlip}
		\|v-v'\|_{H^1(\square;\CC^2)} \le L_V |\t-\t'| \| v \|_{H^1(\square;\CC^2)}.
	\end{equation}
We begin by observing that  it is sufficient to consider $v\in V(\t)$ that satisfy
\begin{equation}\label{toshow15dec}
\mv{e^{-\i \t \cdot y}\div{v}}= 0 \quad \text{and}\quad \mv{e^{-\i \t \cdot y}\curl{v}}=0.
\end{equation}
%$f_1 : = \div{v}$ and $f_2:=\curl{v}$ satisfy $\int_\square f_j e^{-\i \t \cdot y}=0$. 
Indeed, for a fixed $\eta \in C^\infty_0(R)$ such that $\int_R \eta = 1$, then for any  $V \in V(\t)$, one can verify that
\[
%V = v+  \eta e^{\i \t \cdot y} \mv{e^{-\i \t \cdot y}V} \quad \text{ for 
	v: =V - \eta e^{\i \t \cdot y} \mv{e^{-\i \t \cdot y}V}
\]
satisfies \eqref{toshow15dec} and that $\Psi: = \eta e^{\i \t \cdot y} \mv{e^{-\i \t \cdot y}V} $ need not be approximated; indeed $\Psi$ belongs to $V(\t')$ for all $\t' \in \square^*$ (as $\mathrm{supp}\, \eta \in \overline{R }$).

	It remains to find $v' \in V(\t')$ such that \eqref{tlip} holds for any  $v\in V(\t)$ that satisfy  \eqref{toshow15dec}.

	{\it Case 1: } Here, we consider the case when $v = \nabla a + \nabla^\perp b$ for $a,b\in H^2_\t(\square)$ that satisfy
	\[
	\Delta a = f_1 \quad \text{and} \quad \Delta b = f_2 \quad \text{in $\square$};
	\]
	and, for $\t = 0$,  additionally satisfy $\langle a \rangle = \langle b \rangle =0$. Notice that, $f_1 = \div v$ and $f_2 = \curl v$, and so, by \eqref{toshow15dec}, the assumptions of  Lemma \ref{lem:spcom3} (i) hold; namely,  
	\begin{equation}\label{15dec2}
		\| a \|_{H^2(\square)}  + \| b \|_{H^2(\square)} \le C \left( \| \div{v}\|_{L^2(\square;\CC^2)} +  \|\curl{v} \|_{L^2(\square;\CC^2)} \right) \le C\sqrt{2} \| v\|_{H^1(\square;\CC^2)}.
	\end{equation}
	Let us consider $v' = \nabla a' + \nabla^\perp b'$ where, for $p=a$ or $b$, we set
	\begin{equation*}
		\begin{aligned}
			\Delta p' = e^{\i (\t' - \t)\cdot y} \div{v} + \chi_R e^{\i (\t'-\t) \cdot y} \Big(  2\i (\t' - \t) \cdot \nabla p-|\t' - \t|^2 p\Big) + \chi_R e^{\i \t' \cdot y} \frac{D_p}{|R|}, \\
%			\Delta b' =  e^{\i (\t' - \t)\cdot y} \curl{v} + \chi_C  e^{\i (\t'-\t) \cdot y}\Big( 2\i (\t' - \t) \cdot \nabla b -|\t' - \t|^2 b  \Big) + \chi_C e^{\i \t'\cdot y} \frac{D_b}{|C|},
		\end{aligned}
	\end{equation*}
	where $\chi_R$ is the characteristic function of the inclusion $R$ and the constants $D_p$ will be chosen below. 	By  Lemma \ref{lem:spcom2}, $v' \in V(\t')$ and we shall now show that \eqref{tlip} holds.
	
	  Note that, one has
	 \begin{flalign*}
	 	 \nabla p - \nabla p'&= \nabla p - \nabla (e^{\i (\t'-\t)\cdot y} p)+\nabla (e^{\i (\t'-\t)\cdot y} p - p')
%\| \nabla a - \nabla a' \|_{H^1(\square;\CC^2)}  \le 	\| \nabla a - \nabla (e^{\i (\t'-\t)\cdot y} a) \|_{H^1(\square;\CC^2)}  + 	\| \ \nabla (e^{\i (\t'-\t)\cdot y} a - a') \|_{H^1(\square;\CC^2)},
	 \end{flalign*}
%	 \begin{flalign*}
%	\| \nabla^\perp b - \nabla^\perp b' \|_{H^1(\square;\CC^2)} & = 	\| \nabla b - \nabla b' \|_{H^1(\square;\CC^2)} \\
%	& \le 	\| \nabla b - \nabla (e^{\i (\t'-\t)\cdot y} b) \|_{H^1(\square;\CC^2)}  + 	\| \ \nabla (e^{\i (\t'-\t)\cdot y} b - b') \|_{H^1(\square;\CC^2)},
%\end{flalign*}
	 and, by the inequalities $\|1 - e^{\i \Theta \cdot y}\|_{L^\infty(\square)} \le |\Theta|$ and \eqref{15dec2}, one has 
	\[
	\begin{aligned}
	\| \nabla p - \nabla (e^{\i (\t'-\t)\cdot y} p) \|_{H^1(\square;\CC^2)} \le K | \t' - \t | 	\| p \|_{H^2(\square)} \le K | \t' - \t | 	\| v\|_{H^1(\square;\CC^2)}, \\
%		\| \nabla b - \nabla (e^{\i (\t'-\t)\cdot y} b) \|_{H^1(\square;\CC^2)} \le K | \t' - \t | 	\| b \|_{H^2(\square)} \le K | \t' - \t | 	\| v\|_{H^1(\square;\CC^2)},
	\end{aligned}
	\]	
for some $K$ independent of $v$, $\t$ and $\t'$.	So it remains to appropriately estimate the $\t'$-quasiperiodic functions  $r_p : = p' - e^{\i (\t'-\t)\cdot y} p$. Now
%	and  $r_2 : =b' - e^{\i (\t'-\t)\cdot y} b$.  Now
	\[
	\begin{aligned}
\Delta r_p  &= F_p : = (\chi_R- 1) e^{\i (\t'-\t) \cdot y} \Big(  2\i (\t' - \t) \cdot \nabla p-|\t' - \t|^2 p \Big) + \chi_R e^{\i \t' \cdot y} \frac{D_p}{|R|}, 
%\\
%\Delta r_2 & = F_2 : = (\chi_C - 1) e^{\i (\t'-\t) \cdot y} \Big(  2\i (\t' - \t) \cdot \nabla b-|\t' - \t|^2 b \Big) + \chi_C  e^{\i \t' \cdot y} \frac{D_2}{|C|}.
	\end{aligned}
	\]
At this point, we fix $D_p$ such that $\int_\square F_p e^{-\i \t' \cdot y} = 0$, i.e.
	\begin{equation*}
		\begin{aligned}
			D_p =\int_{\square\backslash R} e^{-\i \t \cdot y} \big(  2\i (\t' - \t) \cdot \nabla p - |\t' - \t|^2 p \big) , 
%			\quad D_2 =\int_{\square\backslash C}  e^{-\i \t \cdot y} \big(   2\i (\t' - \t) \cdot \nabla b- |\t' - \t|^2 b\big).
		\end{aligned}
	\end{equation*}
Then, by Lemma \ref{lem:spcom3}(i)  and \eqref{15dec2} there exists a constant $K>0$ independent of $v$, $\t$ and $\t'$ such that

\begin{equation*}
\| r_a \|_{H^2(\square)}   + \| r_b \|_{H^2(\square)} \le K|\t - \t'| \| v\|_{H^1(\square;\CC^2)}.
\end{equation*}
Hence \eqref{tlip} holds.

{\it Case 2:} It remains, to consider the case when $\t=0$ and $v = c \in \CC^2$.  One needs to show that for every $0 \neq \t' \in \square^*$, there  exists $v \in V(\t')$ such that 
\begin{equation}\label{0lip}
\| c - v\|_{H^1(\square;\CC^2)} \le K | \t' | |c |, 
\end{equation}
for some constant $K>0$ independent of $\t'$ and $c$. 

%Essentially, the idea of the proof boils down to showing that the $H^1$ projection of $e^{\i \t \cdot y}$ onto $V(\t)$ is $\t$-close to a constant. 

For fixed $c \in \CC^2$, $\t' \neq 0$, note 
\[
c =-(\i \t' \cdot c)\frac{ \i \t'}{|\t'|^2}  - (\i \t'^\perp \cdot c)\frac{\i \t'^\perp}{|\t'|^2} .
\]
Let us consider  $p,q \in H^1_{per}(\t')$ such that 
\begin{equation*}
\begin{aligned}
\Delta p = (\i \t' \cdot c ) e^{\i \t' \cdot y} \frac{\chi_R}{|R|}\quad \& \quad 
 \Delta q =  (\i \t'^\perp \cdot c) e^{\i \t' \cdot y} \frac{\chi_R}{|R|}.
\end{aligned}
\end{equation*}
 By Lemma \ref{lem:spcom3} (ii) one has 
\[
	\left\|\nabla p + e^{\i \t' \cdot y}(\i \t'\cdot c)\frac{\i \t'}{|\t'|^2} \right\|_{H^1(\square)} \le \frac{C}{\sqrt{|R|}} |\t'|| c|, \quad \& \quad \left\|\nabla^\perp q + e^{\i \t' \cdot y}(\i \t'^\perp \cdot c)\frac{\i \t'^\perp}{|\t'|^2} \right\|_{H^1(\square)} \le \frac{C}{\sqrt{|R|}}  |\t'|| c|.
\]
That is  $v = \nabla p + \nabla^\perp q$ belongs to $V(\t')$ (see Lemma \ref{lem:spcom2}) and satisfies
\[
	\|v - e^{\i \t' \cdot y} c \|_{H^1(\square)} \le \frac{2C}{\sqrt{|R|}}  |\t'|| c|,
\]
Clearly, this implies \eqref{0lip}. The proof is complete. 
\end{proof}

Theorem \ref{lem:spcom1} implies Lipschitz continuity of the operator $\B(\t)$ in an appropriate sense: 
\begin{lem}
	\label{apb.1}
	
	For any  $\theta, \theta'\in \square^*$ and $F\in H^{-1}(\square;\CC^2)$, consider the solutions $v_\t \in V(\t)$, $v_{\t'} \in V(\t')$ to 
	\begin{align}
		b(v_\t,\tilde{v}) + c(v_\t, \tilde{v})& = \langle F,\tilde{v}\rangle , \quad \forall \tilde{v} \in V(\t);\label{6dec1}\\
		b(v_{\t'},\tilde{v}') + c(v_{\t'}, \tilde{v}') &= \langle F,\tilde{v}' \rangle, \quad \forall \tilde{v}' \in V(\t'); \label{6dec2}
	\end{align}

There, exists a constant $K>0$ independent of $\t,\t',F$ such that
\[
\| v_\t - v_{\t'} \|_{H^1(\square;\CC^2)} \le K | \t - \t'| \| F  \|_{H^{-1}(\square;\CC^2)}. \qedhere
\]
\end{lem}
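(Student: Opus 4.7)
The plan is to apply a Strang-type perturbation argument for the Galerkin problems over the varying family of subspaces $V(\t)$, exploiting Theorem \ref{lem:spcom1} to shuttle test functions between $V(\t)$ and $V(\t')$. Note that $b$ and $c$ are themselves $\t$-independent bounded sesquilinear forms on all of $H^1(\square;\CC^2)$, so the only obstruction to directly comparing $v_\t$ and $v_{\t'}$ is that they live in different closed subspaces.

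First, I would extract a uniform a priori bound. By \eqref{betapositive} together with the equivalence of $c$ to the standard $L^2$-inner product, $b+c$ is an equivalent inner product on $V(\t)$ with constants independent of $\t$. Testing \eqref{6dec1} with $\tilde v = v_\t$ and \eqref{6dec2} with $\tilde v' = v_{\t'}$ yields
\[
\|v_\t\|_{H^1(\square;\CC^2)} + \|v_{\t'}\|_{H^1(\square;\CC^2)} \le C \|F\|_{H^{-1}(\square;\CC^2)},
\]
with $C$ independent of $\t,\t',F$.

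Next, I would introduce approximants: by Theorem \ref{lem:spcom1}, pick $\pi_{\t'} v_\t \in V(\t')$ with $\|v_\t - \pi_{\t'} v_\t\|_{H^1} \le L_V|\t-\t'|\,\|v_\t\|_{H^1}$ and, for each test function $\phi \in V(\t')$, pick $\pi_\t \phi \in V(\t)$ with $\|\phi - \pi_\t \phi\|_{H^1} \le L_V|\t-\t'|\,\|\phi\|_{H^1}$. Set $e := v_{\t'} - \pi_{\t'} v_\t \in V(\t')$. Using that $v_\t$ solves \eqref{6dec1} tested against $\pi_\t \phi \in V(\t)$, one rewrites
\[
\langle F, \phi\rangle = \langle F, \phi - \pi_\t\phi\rangle + (b+c)(v_\t, \pi_\t\phi),
\]
so that subtracting $(b+c)(\pi_{\t'} v_\t,\phi)$ from \eqref{6dec2} yields
\[
(b+c)(e,\phi) = \langle F, \phi-\pi_\t\phi\rangle + (b+c)(v_\t,\, \pi_\t\phi - \phi) + (b+c)(v_\t - \pi_{\t'} v_\t,\, \phi).
\]
Taking $\phi = e$, using boundedness of $b+c$ on $H^1$, uniform coercivity on $V(\t')$, and combining the bounds from Step 1 and the Lipschitz estimates above, gives $\|e\|_{H^1}^2 \le K|\t-\t'|\,\|F\|_{H^{-1}}\,\|e\|_{H^1}$, hence $\|e\|_{H^1} \le K|\t-\t'|\,\|F\|_{H^{-1}}$. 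The triangle inequality $\|v_\t - v_{\t'}\|_{H^1} \le \|v_\t - \pi_{\t'} v_\t\|_{H^1} + \|e\|_{H^1}$ then closes the estimate.

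The real obstacle has already been handled in Theorem \ref{lem:spcom1}: producing approximants in $V(\t')$ of elements of $V(\t)$ with the linear rate $|\t-\t'|$ is what makes the rest of the argument routine. Given that ingredient, the above is essentially the standard Strang-lemma paradigm adapted to a nonconforming, $\t$-parametrised family of closed subspaces of a fixed Hilbert space.
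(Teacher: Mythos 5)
Your proof is correct and follows essentially the same route as the paper: both reduce Lipschitz dependence of the solution operator to Theorem \ref{lem:spcom1} via a Galerkin perturbation argument over the varying subspaces $V(\t)$. The only cosmetic difference is that the paper works with the $Q$-orthogonal projections $\mathcal{P}(\t):H^1\to V(\t)$ and exploits the self-adjointness identity $Q(\mathcal{P}(\t)v_{\t'},\tilde v)=Q(v_{\t'},\mathcal{P}(\t')\tilde v)$ to collapse the error identity to the single term $\langle F,(I-\mathcal{P}(\t'))\tilde v\rangle$, whereas you use generic near-optimal approximants (Strang-lemma style) and therefore track three perturbation terms, each of which is controlled by the same Lipschitz estimate and the a priori bound.
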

 Lemma \ref{apb.1}  implies
\[
	\|(\B(\t) +I)^{-1}- (\B(\t')+I)^{-1}\|_{\H \rightarrow \H} \le K| \t - \t'|.
\]
This in turn, implies, by standard arguments, that the spectral band functions, $\t \mapsto \lambda_n(\t)$, of $\B(\t)$ are continuous in $\t$: 
\begin{cor}\label{cor.cont}
For each $n\in \NN$, the eigenvalues $\lambda_n(\t)$ of $\B(\t)$ are continuous functions of $\t$ on $\square^*$. 
\end{cor}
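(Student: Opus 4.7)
The proof will be a direct and essentially standard consequence of the resolvent estimate already displayed just before the corollary, namely
\[
\|(\B(\t) +I)^{-1}- (\B(\t')+I)^{-1}\|_{\H \rightarrow \H} \le K| \t - \t'|,
\]
which itself is immediate from Lemma \ref{apb.1}. The plan is therefore to convert this operator-norm Lipschitz estimate for a family of compact self-adjoint operators on the \emph{fixed} Hilbert space $\H$ into a continuity statement for eigenvalues via the min-max principle.

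Concretely, I would first remark that each $\B(\t)$ is non-negative and self-adjoint on $(\H,c)$ with compact resolvent (established in Section \ref{s:o}), so $T(\t):=(\B(\t)+I)^{-1}$ is a compact, positive, self-adjoint operator on $\H$ whose eigenvalues, listed in decreasing order with multiplicity, are precisely $\mu_n(\t):=(\lambda_n(\t)+1)^{-1}$. Crucially, although the form domains $V(\t)$ depend on $\t$, the ambient Hilbert space $\H=\{h\in L^2(\square;\CC^2) : \div h=\curl h=0 \text{ in } \square\backslash R\}$ is $\t$-independent, so the family $\{T(\t)\}$ consists of operators acting on one and the same Hilbert space.

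Next, I would invoke the standard consequence of the min-max principle for compact self-adjoint operators: if $S,T$ are compact self-adjoint operators on a Hilbert space and $\{\mu_n(\cdot)\}$ denotes their eigenvalues (arranged in decreasing order, counting multiplicity), then $|\mu_n(S)-\mu_n(T)|\le \|S-T\|$ for every $n$. Applying this to $S=T(\t)$, $T=T(\t')$ and using the resolvent Lipschitz bound gives
\[
\big|\mu_n(\t)-\mu_n(\t')\big|\le K|\t-\t'|,\qquad \forall\, n\in\NN,\ \forall\,\t,\t'\in\square^*.
\]
Finally, since $\mu_n(\t)\in(0,1]$ and the map $x\mapsto x^{-1}-1$ is continuous on $(0,\infty)$, the identity $\lambda_n(\t)=\mu_n(\t)^{-1}-1$ delivers continuity (indeed local Lipschitz continuity wherever $\lambda_n(\t)$ stays bounded) of each band function $\t\mapsto \lambda_n(\t)$ on $\square^*$.

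There is no real obstacle: the substantive work has already been done in Theorem \ref{lem:spcom1} (Lipschitz continuity of the $\t$-dependent spaces $V(\t)$) and in Lemma \ref{apb.1}, which together absorb the unusual feature that the form domain is $\t$-dependent. Once the resolvent Lipschitz estimate on a fixed space $\H$ is in hand, the corollary follows from textbook perturbation theory for compact self-adjoint operators.
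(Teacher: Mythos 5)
Your proof is correct and follows exactly the route the paper takes: the paper notes that Lemma \ref{apb.1} gives the resolvent Lipschitz bound $\|(\B(\t)+I)^{-1}-(\B(\t')+I)^{-1}\|_{\H\rightarrow\H}\le K|\t-\t'|$ and then simply invokes ``standard arguments'' for continuity of the band functions; you have supplied those standard arguments (min-max/Weyl inequality for compact self-adjoint operators on the fixed space $\H$, followed by inverting $\mu_n=(\lambda_n+1)^{-1}$), which is precisely what was intended.
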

\begin{proof}[Proof of Lemma \ref{apb.1}]

%where $\langle f , \cdot \rangle$ denotes the action of the bounded linear functional $f$.
Consider the $H^1(\square;\CC^2)$ equivalent inner product
\begin{equation*}
	%	\label{H1innerp}
	Q(u,\phi) : = \int_{\square}\nabla u_2\cdot \overline{\nabla \phi_2} + \gamma^{-1} \nabla u : \overline{\nabla \phi} + \int_\square u \cdot  \overline{\phi} + \gamma \int_R u_2 \overline{\phi_2},
\end{equation*}
and denote the associated equivalent $H^1$-norm by $\| \cdot \|_{+}$. Let $\| \cdot \|_{-1}$ be the dual norm with respect to the $\| \cdot \|_{-}$  norm.  For each $\t \in \square^*$, consider $\mathcal{P}(\t): H^1(\square;\CC^2) \rightarrow V(\t)$ the orthogonal projection with respect to the $\| \cdot \|_{+}$ norm.

Recall $b+c$  equals $Q$ on $V(\t)$ (see \eqref{cform} and \eqref{betapositive}).Then it follows that
  \begin{equation}
Q(v_\t - \mathcal{P}(\t)v_{\t'},\tilde{v})  = \langle F,(I- \mathcal{P}(\t')\tilde{v}\rangle , \quad \forall \tilde{v} \in V(\t);\label{6dec3}
  \end{equation}
Indeed, by \eqref{6dec1}, \eqref{6dec2}, and the identity $Q(\mathcal{P}(\t)v_{\t'},\tilde{v})  =Q( v_{\t'},\mathcal{P}(\t')\tilde{v}) $ it follows that
\begin{flalign*}
Q(v_\t - \mathcal{P}(\t)v_{\t'},\tilde{v})  & = \langle F, \tilde{v} \rangle - Q( \mathcal{P}(\t)v_{\t'},\tilde{v})  = \langle F, \tilde{v} \rangle - Q( v_{\t'},\mathcal{P}(\t')\tilde{v})  \\
& = \langle F, \tilde{v} \rangle  - \langle F, \mathcal{P}(\t')\tilde{v} \rangle .
\end{flalign*}
Substituting $\tilde{v}= v_\t - \mathcal{P}(\t) v_{\t'}$ in \eqref{6dec3},
% and noting 
%\[
%(I- \mathcal{P}(\t'))(v_\t - \mathcal{P}(\t) v_{\t'}) = (I- \mathcal{P}(\t')) v_\t  + \big( \mathcal{P}(\t')-I\big)  \mathcal{P}(\t) v_{\t'},
%\]
gives
\begin{equation}\label{6dec4}
\| v_\t - \mathcal{P}(\t) v_{\t'} \|_{+}^2  \le \| F \|_{-} \|  (I- \mathcal{P}(\t')) \big(  v_\t - \mathcal{P}(\t) v_{\t'} \big)  \|_{+}.
\end{equation}
Now,  Theorem \ref{lem:spcom1} implies that 
\begin{equation}\label{6dec5}
\| \big(I- \mathcal{P}(\t')\big)\tilde{v} \|_{+} \le C| \t - \t| \| \tilde{v} \|_{+}, \quad \forall \tilde{v} \in V(\t), \forall \t \in \square^*,
\end{equation}
for some constant $C$. Therefore, \eqref{6dec4}, \eqref{6dec5} 
%along with the standard bounds 
%\[
%\| v_\t \|_{H^1} \le \| f \|_{H^{-1}}, \quad \| v_{\t'} \|_{H^1} \le \| f \|_{H^{-1}},
%\]
 imply
\[
 \| v_\t - \mathcal{P}(\t) v_{\t'} \|_{+} \le C | \t - \t'| \| F  \|_{-}.   
\]
This inequality, along with another application of \eqref{6dec5} and the standard bound 
\[
\| v_{\t'} \|_{+} \le \| F \|_{-},
\]
proves the desired result.
%Since $b+c$ defines an inner product on $H^1$ (recall \eqref{cform} and \eqref{betapositive}), note
%\[
%\| v_\t \|_{H^1} \le \| f \|_{H^{-1}},
%\]
%and, by Theorem \ref{lem:spcom1},  there exists $v' \in v(\t')$ such that
%\[
%| b(v_\t - v', \tilde{v}) + c(v_\t - v', \tilde{v} ) | \le C |\t - \t'| \| v_\t\|_{H^1} \| \tilde{v}\|_{H^1}, \quad \forall \tilde{v} \in V(\t).
%\]
%This implies that 
%\begin{equation}
%\| v_{\t} - v' \|_{H^1} \le C| \t - \t'| \| F\|_{H^{-1}}
%\end{equation}
%where $v'\in V(\t')$  solves
%\[
%b(v',\tilde{v}) + c(v', \tilde{v}) = \langle f,\tilde{v}\rangle +R_1(\tilde{v}) , \quad \forall \tilde{v} \in V(\t),
%\]
%where 
%\[
%| R_1(\tilde{v}) | \le C |\t - \t'|\| F \|_{H^{-1}} \| \tilde{v} \|_{H^1}, \quad \forall \tilde{v} \in V(\t).
%\]
%Now, for fixed arbitrary $\tilde{v}' \in V(\t')$, Theorem \ref{lem:spcom1} imples that there exists $\phi \in V(\t)$ such that 
% \[
% | b(v', \tilde{v}' - \phi) + c(v', \tilde{v}' - \phi ) | \le C |\t - \t'| \| v'\|_{H^1} \| \tilde{v}'\|_{H^1}.
% \]
\end{proof}

 % % % % % % % % % % % % % % % % % % % % % % % % % % % % % % % % % % % % % % % % % % % % % % % % % % % % % % % % % % % % % % % % % % % % % % % % % % % % % % % % % % % % % % % % % % % % % % % % % % % % % % % % % % % % % % % % % % % % % % % % % % % % % % % % % % % % % % % % % % % % % % % % % % % % % % % % % % % % % % % % % % % % % % % % % % % % % % % % % % % % % % % % % 

% % % % % % % % % % % % % % % % % % % % % % % % % % % % % % % % % % % % % % % % % % % % % % % % % % % % % % % % % % % % % % % % % % % % % % % % % % % % % % % % % % % % % % % % % % % % % % % % % % % % % % % % % % % % % % % % % % % % % % % % % % % % % % % % % % % % % % % % % % % % % % % %

\section{Photonic crystals which exhibit gaps in the Bloch spectrum on the critical light-line}
\label{examples}

In this section we shall demonstrate gaps in the limit spectrum 
\[
\sigma_0  = \overline{\bigcup_{n=1}^\infty [a_n,b_n]}, \quad a_n : = \min_{\t\in\square^*} \lambda_n(\t), b_n : = \max_{\t\in\square^*} \lambda_n(\t).
\]
%for some  two-dimensional photonic crystal fibres.
\subsection{One-dimensional photonic crystal fibres}\label{s.1d}
Let us suppose we restrict ourselves to one-dimensional photonic crystal fibres, that is our dielectric media has additional homogeneous axis, say the $x_2$-axis. That is, $R$ can be assumed to be of the form $[-\pi,-\pi+h)\times[-\pi,\pi)$, for some $0<h<\pi$; i.e. the dielectric permittivity $\epsilon$ (see \eqref{coef}) on the periodic cell $\square$ is of the form
\[
\epsilon(x_1,x_2,x_3) = \left\{ \begin{matrix}	\epsilon_0,  &  x_1\in [-\pi,-\pi+h) \\[0.1em]
	1, & x_1\in[-\pi+h,\pi).
\end{matrix} \right.
\] 
Let us rewrite the spectral problem on the critical line in this case (cf.  \eqref{degp} and \eqref{opdegp}).  Let, $\lambda = \omega^2$ be an eigenvalue of $\mathcal{B}(\t)$, for some $\t \in \square^*$, with eigenfunction $u\in V(\t)$.
Without loss of generality $\t = (\Theta,0)$, $\Theta \in [-\frac{1}{2}, \frac{1}{2})$ and $u$ depends only on $x_1$, i.e. $u(x_1,x_2) \equiv u(x_1)$.\footnote{
In general, we are considering plane EM-waves   of the form (see \eqref{propsol})
 	\[
	\tilde{E}(x) =  E(x_1) e^{i (k_2 x_2 + \omega k x_3 - \omega t)  }, \quad \text{and} \quad  \tilde{H}(x) = H(x_1) e^{i (k_2 x_2 + \omega k x_3 - \omega t)  },
	\]
for $\t$-quasi-periodic amplitudes, $E,H$, with $\t \in [-\frac{1}{2},\frac{1}{2})$. However,  we can suppose without loss of generality $k_2 = 0$; indeed, for the general wave number $(k_2,\omega k)$,  we can always rotate the one-dimensional PCF so that $(k_2,\omega k)$ is an homogeneous axis of the media (that we would denote as the third axis). Thus, we are propagating down one homogeneous axis but not the other.}. 
Notice that if $\phi \in V(\t)$  and $\phi(x_1,x_2) \equiv \phi(x_1)$ then  $\phi$ is a constant vector for  $x_1 \in [h,\pi)$. Indeed, here $\div{\phi} = \phi_1'$ and $\curl{\phi} = \phi_2'$, then $\phi$ is constant in $(h,\pi+h)$ and
\[
\frac{1}{2\pi}b(u,\phi) =  \int_{-\pi}^{-\pi+h} \Big( \gamma^{-1}u_1' \overline{\phi_1'}(x_1) +(1+ \gamma^{-1}) u_2'(x_1) \overline{\phi_2'}(x_1) \Big) \, \mathrm{d}x_1,
%+ \int_{-\pi+h}^{\pi} u_2'(x_1) \overline{\phi_2'}(x_1)  \, \mathrm{d}x_1,
\]
\[
\frac{1}{2\pi}c(u,\phi) =\int_{-\pi}^{-\pi+h} \Big( u_1(x_1) \overline{\phi_1}(x_1) +(1+ \gamma) u_2(x_1) \overline{\phi_2}(x_1)  \Big)\, \mathrm{d}x_1  + (2\pi-h) u(-\pi+h)  \overline{\phi(-\pi+h)}.
\]
see \eqref{bform} and \eqref{cform}.  Now, let us find the strong form of  \eqref{degp}; taking test functions of the form $\phi \in C_0^\infty(-\pi,-\pi+h)$ in \eqref {degp} gives
\[
\int_{-\pi}^{-\pi+h} \Big( \gamma^{-1}u_1'' \overline{\phi_1} +(1+ \gamma^{-1}) u_2'' \overline{\phi_2} \Big) \, \mathrm{d}x_1  = \lambda \int_{-\pi}^{-\pi+h} \Big( u_1 \overline{\phi_1} +(1+ \gamma) u_2 \overline{\phi_2}  \Big)\, \mathrm{d}x_1. 
%\quad \forall \phi \in C_0^\infty(-\pi,-\pi+h).
\]
Thus,
\[
-u_1'' = \lambda \gamma u_1 \quad \text{in $(-\pi,-\pi+h)$}, \quad \text{and} \quad
-u_2''= \lambda \gamma u_2\quad \text{in $(-\pi,-\pi+h)$}.
\]
Then, from \eqref{degp}, the above equations, integration by parts in form $b$, and the fact $u$, $\phi$ are constant in $(-\pi+h,\pi)$, gives
\begin{equation}\label{4.06e1}
\Big( \gamma^{-1}u_1' \overline{\phi_1} +(1+ \gamma^{-1}) u_2' \overline{\phi_2}(x_1) \Big) \Big|_{-\pi}^{-\pi+h} = \lambda (2\pi-h) u(-\pi+h) \overline{\phi(-\pi+h)}.
 \end{equation}
Now, by $\Theta$-quasi-periodic and the fact $\phi$ is constant in $(-\pi+h,\pi)$, one has 
\[
\phi(-\pi+ h) =e^{\i \pi \Theta}  \phi(-\pi), 
\]
and likewise for $u$. Thus, \eqref{4.06e1} is equivalent to 
\[
  e^{-\i \pi \Theta}u_1' (-\pi+h) - u_1'(-\pi) = (\pi-h) \lambda \gamma u_1(-\pi),
\]
and 
\[
  e^{-\i \pi \Theta}u_2' (-\pi+h) - u_2'(-\pi)  = \frac{(\pi-h)}{1+\gamma} \lambda\gamma u_2(-\pi).
\]
That is, upon setting $\Lambda = \lambda \gamma$ , $l_1 = -\pi$, and $l_2 = -\pi+h$, then \eqref{degp}( equivalently \eqref{opdegp}) , for one-dimensional PCF, can be rewritten as the diagonal system on the inclusion phase $(-\pi,-\pi+h)$:
\begin{equation}\label{TE}\tag{TE} \left.
\begin{aligned}
-u_1'' = \Lambda  u_1 &\qquad \text{in $(l_1,l_2)$}, \\
 e^{-\i \pi \Theta}u_1' (l_2) - u_1'(l_1) &= (2\pi-h) \Lambda u_1(l_1), \\  
  e^{-\i \pi \Theta}u_1 (l_2)  &= u_1(l_1);
\end{aligned} \qquad \right\}
\end{equation}
and
\begin{equation}\label{TM}\tag{TM} \left.
	\begin{aligned}
		-u_2'' = \Lambda  u_2 &\qquad \text{in $(l_1,l_2)$}, \\
		e^{-\i \pi \Theta}u_2' (l_2) - u_2'(l_1) &= \frac{(2\pi-h)}{1+\gamma} \Lambda u_2(l_1), \\   
		e^{-\i \pi \Theta}u_2 (l_2)  &= u_2(l_1).
	\end{aligned} \qquad \right\}
\end{equation}
\begin{rem}
If \eqref{TE} has a solution then $u=(u_1,0)$ corresponds to the so-called Transverse-Electric polarisation for the one-dimensional PCF (with the $(x_1,x_3)$-plane being the `plane of incidence'); similarly, If \eqref{TM} has a solution then $u=(0,u_1,)$ corresponds to the so-called Transverse-Magnetic polarisation. Indeed, recall that \eqref{opdegp} is equivalent to the Maxwell equations \eqref{Max1}-\eqref{Max2} for $\lambda = \omega^2$ and $u=(H_3,E_3)$. 
\end{rem}
%\begin{rem}
%In general it is unclear whether or not  gaps exist in $\sigma_0$. For one-dimensional photonic crystal fibres, the problem \eqref{opdegp} significant simplifies (Transervse Magnetic and Transverse Electric polarisations exist) and the question of gaps can be analysed using usual Floquet theory type arguments; the results will appear elsewhere. 
%\end{rem}
Notice that \eqref{TE} and \eqref{TM} are both problems of the same type; this type of problem also appears as the asymototic limit of the one-dimensional high-contrast double porosity model studied by one of the authors in \cite{Co1} and \cite{Co2}.  In particular, it was shown in \cite[Theorem 3.1]{Co2} that the spectral band functions $E_n(\Theta)$ of the associated operator has the usual Floquet-type properties; namely:
\begin{itemize}
\item{The functions $E_n$ are continuous and even around $\Theta = 0$;}
\item{For $n$ odd, $E_n$ is strictly increasing in $\Theta$ from $0$ to $\frac{1}{2}$};
\item{For $n$ even, $E_n$ is strictly decreasing in $\Theta$ from $0$ to $\frac{1}{2}$;}
%\item{For $\Theta \in (0,\frac{1}{2})$, the eigenvalues $E_n(t)$ are non-degenerate;}
\item{Denoting by $E_n^{(TE)}$, $E_n^{(TM)}$ the spectral bands functions for problem \eqref{TE} and \eqref{TM} respectively. Then, 
	\[
	\sigma_0 = \left(\bigcup_{n} \mathrm{Ran} E^{(TE)}_n\right) \cup \left(\bigcup_{n} \mathrm{Ran} E^{(TM)}_n\right) ; 
	\]
	where, for $j\in \{\mathrm{TE},\mathrm{TM}\}$, $\mathrm{Ran} E_n^{(j)}$, the range of $E_n^{(j)}$, equals to $[E_n^{(j)}(0), E_n^{(j)}(\frac{1}{2})]$ for $n$ odd and equals to $[E_n^{(j)}(\frac{1}{2}), E_n^{(j)}(0)]$ for $n$ even.}
\end{itemize}
These properties imply  that spectral gaps in $\sigma_0$ appear only at the edges of the Brillouin zone and, as such, one need only consider periodic and anti-periodic Floquet-Bloch modes ($\Theta= 0$ and $\frac{1}{2}$) for the TE and TM modes (\eqref{TE} and \eqref{TM}). Combining this result with Theorem \ref{thm.PBG} implies that, for one-dimensional photonic crystal fibres,  photonic band gaps near the critical line will appear if, and only if, for both the TE and TM polarisations, one has 
\[
\Lambda_n^{(j)}(\tfrac{1}{2}) < \Lambda^{(j)}_{n+1}(\tfrac{1}{2}) \quad \text{for some odd $n$},
\]
or
\[
\Lambda^{(j)}_n(0) < \Lambda^{(j)}_{n+1}(0) \quad \text{for some even $n$}.
\]
for $j =$ TE and $j=$ TM. If a gap appears for only one, but not both, polarisation, i.e. $j=$ TE or $j=$ TM, then the PCF has a so-called `weak' or partial photonic gap.
 
\subsection{ARROW fibres: two-dimensional photonic crystals with inclusions have vanishing volume faction}\label{s.Arrow}
Here, we provide an example of a genuine two-dimensional PCF that exhibits band gaps near the critical light line.

We being by rewriting the spectral problem  \eqref{opdegp}, for a general fibre cross section $R$, in a form that is more convenient for what follows. For each $\t \neq 0$, let  $ A_\theta : L^2(R) \rightarrow H^2_\t(\square)$ denote the bounded linear operator whose action is  $A_\theta  f = p$ where $p\in H^2_\t(\square)$ is the unique solution to 
\[
-\Delta p = \chi_R f \quad \text{in $\square$.}
\]
Here $\chi_R:L^2(R)\rightarrow L^2(\RR^2)$ is the operator that extends the input function by zero into $\RR^2 \backslash R$.
From  \eqref{lem:spcom2}, it follows that $V(\theta)$ has the following representation in terms of $A_\t$:
\[
V(\t) = \{ - \nabla A_\t f_1 - \nabla^\perp A_\t f_2 \, | \,  f=(f_1, f_2) \in L^2(R;\CC^2) \}.
\]
As such,  the eigenvalues $\lambda_n(\t)$  of $\mathcal{B}(\t)$ can be reformulated as:
\begin{equation}\label{minmax}
	\lambda_{n}(\t) =  \min_{\substack{V \subset L^2(R;\CC^2), \\ \dim{V} = n}}\, \max_{ f\in V} \frac{\mathfrak{b}_{\t}[f]}{c_{\t}[f]};
\end{equation}
for
\begin{equation*}
\mathfrak{b}_{\t}(f, \tilde{f}) : = b\big(\nabla A_\t f_1 + \nabla^\perp A_\t f_2, \nabla A_\t \tilde{f}_1 + \nabla^\perp A_\t \tilde{f}_2\big) 
\end{equation*}
and
\begin{equation*}
c_{\t}(f, \tilde{f}) : =  c\big(\nabla A_\t f_1 + \nabla^\perp A_\t f_2, \nabla A_\t \tilde{f}_1 + \nabla^\perp A_\t \tilde{f}_2\big) .
\end{equation*}
Notice that (after some integration by parts in  \eqref{bform})
\begin{equation}\label{20.6.24e1}
	\begin{aligned}
		&		\mathfrak{b}_{\t}(f, \tilde{f})\\
		&= \gamma^{-1}\int_{R}  f \cdot \overline{ \tilde{f}} - \int_{R}\left[ \Big((A_\t f_1)_{,22} + (A_\t f_2)_{,12}\Big) \overline{\tilde{f}_1} + \Big((A_\t f_1)_{,21} + (A_\t f_2)_{,11}\Big) \overline{\tilde{f}_2} \right],
	\end{aligned}
\end{equation}
\begin{equation}\label{20.6.24e2}
	\begin{aligned}
		&c_{\t}(f, \tilde{f}) \\
		& = \int_{R}  \Big(f _1 \overline{A_\t\tilde{f_1}} +  f_2 \overline{A_\t\tilde{f}_2} \Big) + \gamma \int_{R} \Big((A_\t f_1)_{,2} +(A_\t f_2)_{,1}\Big) \overline{\Big((A_\t \tilde{f}_1)_{,2} +(A_\t \tilde{f}_2)_{,1}\Big)} .
	\end{aligned}
\end{equation}\\

Now, let us consider the particular example of a two-dimensional photonic fibre with small circular inclusions, i.e. $R = \delta \Omega$, for small parameter $0<\delta < \pi$ and $\Omega \subset B_1$, where, henceforth, $B_\alpha$ is the ball of radius $\alpha$ centred at the origin.  
%Such a photonic crystal fibre could be created, for example, by drilling a periodic array of cylindrical holes with a very small cross-sectional radius in a dielectric material and then filling the holes with a more optically dense material (i.e. $\epsilon_0 > \epsilon_1$). 
Note in passing that such models are known in the physics literature as ARROW fibres, see e.g. \cite{ARROW}.\\

 We shall establish the small $\delta$ leading-order behaviour of the eigenvalues given by  \eqref{minmax}, denoted now by $\lambda^{(\delta)}_n(\t)$, with error estimates uniform in $\t$. This will, in turn, allow us to demonstrate the existence of gaps in $\sigma_0$ for sufficiently small $\delta$.

 Consider the linear operator $A$ with domain $L^2(\Omega)$ and action $A f=p$ where $p$ is the solution to 
\[
-\Delta p = \chi_\Omega f \quad \text{in $\RR^2$}, \quad p(x) =  -\frac{d}{2\pi} \ln{|x|} + o(1) \ \text{as $|x| \rightarrow + \infty$},
\]
for constant $d =  \int_\Omega f$. 
Consider the forms 
\begin{equation*}
	\begin{aligned}
		\mathfrak{b}(f, \tilde{f}) : = \gamma^{-1} \int_{\Omega}  f \cdot \overline{ \tilde{f}}- \int_{\Omega}\left[ \Big((A f_1)_{,22} + (A f_2)_{,12}\Big) \overline{\tilde{f}_1} + \Big((A f_1)_{,21} + (A f_2)_{,11}\Big) \overline{\tilde{f}_2} \right],
	\end{aligned}
\end{equation*}
\begin{equation*}
\mathfrak{c}_1(f, \tilde{f})	: =  \left( \int_\Omega f_1 \right)\overline{\left( \int_\Omega \tilde{f}_1  \right)} + \left( \int_\Omega f_2 \right)\overline{\left( \int_\Omega \tilde{f}_2  \right)},
\end{equation*}
and
\begin{equation*}
 \mathfrak{c}_2(f, \tilde{f}) :=\int_{\Omega} \Big( f _1 \overline{A\tilde{f_1}} +  f_2 \overline{A\tilde{f}_2} \Big)  \\
+ \gamma \int_{\Omega} \Big((A f_1)_{,2} +(A f_2)_{,1}\Big) \overline{\Big((A \tilde{f}_1)_{,2} +(A \tilde{f}_2)_{,1}\Big)}.
\end{equation*}
Let 
\[
\nu_{\t,\delta} : =  g_\t(0)-\frac{1}{2\pi}\ln{\delta} ,
\]
where $g_\t$ is the regular part of  the Green's function of the $\theta$ quasi-periodic  Laplacian; i.e.
\begin{equation}\label{regG}
g_\t(x) : = G(x)  + \frac{1}{2\pi} \ln{|x|},
\end{equation}
where  $G$ is the $\theta$ quasi-periodic solution to 
\[
-\Delta G = \delta \quad \text{in $\square$}, \qquad \text{for $\delta$ the Dirac delta distribution.}
\]
%Let  $L^2_0(\Omega;\CC^2) : = \{ f \in L^2(\Omega;\CC^2) \, | \,  \int_\Omega f_1=\int_\Omega f_2= 0 \}$ and let $N: \CC^2 \mapsto L^2_0(\Omega;\CC^2)$ be the linear operator with the action $c \mapsto Nc$ the unique solution to\footnote{This problem is well-posed as $\mathfrak{b} $ is an equivalent norm on $L^2(\Omega;\CC^2)$. Indeed,  there exists a constant $C_E>0$ such that
%	\begin{equation}\label{bequiv}
%		\| f\|_{L^2(\Omega;\CC^2)}^2 \le \mathfrak{b}[f] \le C_E \| f\|_{L^2(\Omega;\CC^2)}^2, \quad \forall f \in L^2(\Omega;\CC^2),
%	\end{equation}
%	where the upper bound follows from elliptic regularity. }
%\[
%\mathfrak{b}(Nc,\tilde{g})  = - \mathfrak{b}(c,\tilde{g}), \quad \forall \tilde{g} \in L^2_0(\Omega;\CC^2).
%\]
Finally, consider
\begin{equation}\label{minmax2}
	\mu^{(\delta)}_{n}(\t) :=  \min_{\substack{V \subset L^2(\Omega;\CC^2), \\ \dim{V} = n}}\, \max_{ f\in V} \frac{\mathfrak{b}[f]}{\nu_{\t,\delta}\mathfrak{c}_1[f]+\mathfrak{c}_2[f]}.
\end{equation}
In Appendix \ref{App:Arrow}, we establish the following result:
\begin{thm}\label{thmArrowAsymptotics} There exists a constant $C_1>0$ such that 
	\[
	\big|   \lambda^{(\delta)}_n(\theta) - \delta^{-2} \mu^{(\delta)}_n(\theta) \big| \le C_1 \delta \big(   \lambda^{(\delta)}_n(\theta) + \delta^{-2} \mu^{(\delta)}_n(\theta) \big),
	\]
	for all $\t \in \square^* \backslash \{0\}$ and $\delta \in (0,\pi)$. 
\end{thm}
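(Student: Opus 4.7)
The plan is to reduce the problem on $R = \delta \Omega$ to a corresponding problem on $\Omega$ via the rescaling $x = \delta \xi$, and to approximate the $\t$-quasi-periodic solution operator $A_\t$ by the whole-space Newtonian potential operator $A$ plus a scalar correction built from the regular part of the $\t$-quasi-periodic Green's function. Concretely, one identifies $L^2(R;\CC^2)$ with $L^2(\Omega;\CC^2)$ via $f \mapsto \tilde{f}$, $\tilde{f}(\xi) := f(\delta\xi)$, and uses the Green's function representation $(A_\t f)(x) = \int_R G_\t(x-y) f(y)\,dy$ together with the splitting $G_\t(z) = -\tfrac{1}{2\pi}\ln|z| + g_\t(z)$.

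The core computation proceeds as follows. Substituting $x = \delta\xi$, $y = \delta\eta$ and using $\ln|\delta(\xi-\eta)| = \ln\delta + \ln|\xi-\eta|$, the logarithmic contribution splits into a constant piece $-\tfrac{\ln\delta}{2\pi}\int_\Omega \tilde{f}$ and the rescaled Newtonian potential $(A\tilde{f})(\xi)$. Taylor-expanding $g_\t(\delta(\xi-\eta)) = g_\t(0) + O(\delta)$ absorbs $g_\t(0)$ into the constant piece, producing, for $x = \delta\xi \in R$, the expansion
\begin{equation*}
(A_\t f)(\delta\xi) \,=\, \delta^2 \nu_{\t,\delta}\int_\Omega \tilde{f} \,+\, \delta^2 (A\tilde{f})(\xi) \,+\, O(\delta^3)\|\tilde{f}\|_{L^2(\Omega)},
\end{equation*}
with analogous relations for first derivatives (leading order $\delta$) and second derivatives (leading order $1$) obtained by differentiating inside the integral before changing variables. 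Substituting these into \eqref{20.6.24e1} and \eqref{20.6.24e2} and using $dx = \delta^2 d\xi$, a routine calculation yields
\begin{equation*}
\mathfrak{b}_\t[f] = \delta^2 \mathfrak{b}[\tilde f]\,\bigl(1 + O(\delta)\bigr), \qquad c_\t[f] = \delta^4 \bigl( \nu_{\t,\delta}\mathfrak{c}_1[\tilde f] + \mathfrak{c}_2[\tilde f]\bigr)\bigl(1 + O(\delta)\bigr),
\end{equation*}
in which the $\nu_{\t,\delta}\mathfrak{c}_1$ contribution arises precisely from pairing the constant factor $\delta^2\nu_{\t,\delta}\int_\Omega \tilde f$ against $\tilde f$ in the first summand of $c_\t$. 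The passage from additive to multiplicative error uses the coercivities $\mathfrak{b}[\tilde f] \gtrsim \gamma^{-1}\|\tilde f\|^2_{L^2(\Omega)}$ and a corresponding lower bound for $\nu_{\t,\delta}\mathfrak{c}_1 + \mathfrak{c}_2$.

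Since $f \mapsto \tilde f$ is a bijection $L^2(R;\CC^2) \to L^2(\Omega;\CC^2)$ preserving the dimensions of linear subspaces, the min-max characterisations \eqref{minmax} and \eqref{minmax2} transfer the multiplicative $1 + O(\delta)$ comparison of Rayleigh quotients directly to the eigenvalues, giving $\lambda_n^{(\delta)}(\t) = \delta^{-2}\mu_n^{(\delta)}(\t)\bigl(1 + O(\delta)\bigr)$, which rearranges into the claimed two-sided bound. The main obstacle is the uniformity of the $O(\delta)$ error in $\t \in \square^* \backslash \{0\}$: for fixed $\t \neq 0$ the regular part $g_\t$ is smooth at the origin, but the $z = 0$ Fourier mode causes $g_\t(0)$ and $\nabla g_\t(0)$ to diverge as $|\t| \to 0$. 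The key observation is that this divergent behaviour is already captured by the scalar $\nu_{\t,\delta}$ that appears explicitly in $\mu_n^{(\delta)}$; after isolating the $z = 0$ mode in the Fourier expansion of $g_\t$ the remaining modes admit $\t$-uniform bounds, and the $\nabla g_\t(0)$-type constants produced by the first-derivative expansion can be absorbed into $\nu_{\t,\delta}\mathfrak{c}_1 + \mathfrak{c}_2$ at the cost of a further factor of $\delta$. Making this absorption quantitative, so that a single constant $C_1$ serves for all $\t \in \square^* \backslash \{0\}$, is the technical crux of the argument.
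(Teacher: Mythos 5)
Your overall strategy (compare the quasi-periodic potential $A_\t$ to the whole-plane potential, with an explicit $g_\t(0)$ correction, then transfer to $\mu_n^{(\delta)}$ via min-max) is the right idea and correctly identifies the statement's essential difficulty: uniformity of the error in $\t$ as $|\t|\to 0$. However, the proposal leaves that difficulty unresolved, and this is precisely where the theorem lives. You state that $g_\t(\delta(\xi-\eta)) = g_\t(0) + O(\delta)$, but the constant in the $O(\delta)$ involves $\nabla g_\t$ near the origin, which is \emph{not} uniformly bounded in $\t$ (its $z=0$ Fourier mode grows like $|\t|^{-1}$). Your proposed remedy --- that this can be ``absorbed into $\nu_{\t,\delta}\mathfrak{c}_1+\mathfrak{c}_2$ at the cost of a further factor of $\delta$'' --- is plausible in spirit but is asserted rather than shown; you explicitly flag it as ``the technical crux'' and then stop. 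Since the theorem is exactly a statement about a $\t$-uniform relative error, ending at that crux means the proof is incomplete.

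The paper takes a genuinely different route to get the uniformity for free. Rather than rescale to $\Omega$ and Taylor-expand, it stays on $R_\delta$ and compares the Rayleigh quotients of $\mathfrak{b}_\t/c_\t$ and $b^{(\delta)}/c_{\delta,\theta}$ directly (Lemma \ref{lem.prob1}), rescaling to $\Omega$ only as the last step. The engine is Lemma \ref{mainestimate}, which is proved via exact Green's identities for the harmonic difference $w_j := A_\t f_j - A_\delta f_j$, notably the clean algebraic identity $w_j(0) = 2\langle\nabla v_j, \nabla g_\t\rangle_{B_\delta} + d_j\overline{g_\t(0)}$. The $\t$-uniformity then comes from two exact inequalities, \eqref{19.3.24e1} and \eqref{19.3.24e2}, which bound $\|\nabla v_j\|^2_{B_\alpha}$ and $|d_j|^2\|\nabla g_\t\|^2_{B_\alpha}$ by $\langle f_j,v_j\rangle + |d_j|^2 g_\t(0)$ --- i.e.\ exactly by the reference quantity appearing in the denominator --- together with the harmonic rescaling inequality \eqref{harm1}, $\|\nabla h\|^2_{B_\delta} \le (\delta/\alpha)^2 \|\nabla h\|^2_{B_\alpha}$. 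This makes every error term manifestly of relative size $O(\delta/\alpha)$, with no Taylor remainder to estimate and no separate treatment of the $z=0$ Fourier mode. If you want to pursue your route instead, you would need to prove analogues of \eqref{19.3.24e2} (controlling $\|\nabla g_\t\|_{B_\alpha}$ in terms of $g_\t(0)$) and track the Taylor remainders against the denominator $\nu_{\t,\delta}\mathfrak{c}_1 + \mathfrak{c}_2$ term by term; as written, the proposal does not deliver the theorem.
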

This, in particular, shows that
\begin{equation}\label{26.4.24f2}
%  \lambda^{(\delta)}_n(\theta)  =\delta^{-2} \mu^{(\delta)}_n(\theta) + \mathcal{O}_n(\delta)
  \lambda^{(\delta)}_n(\theta)  =\Big(1+  \mathcal{O}(\delta)\Big) \delta^{-2} \mu^{(\delta)}_n(\theta), 
\end{equation}
uniformly in $\t$ and $n$, as $\delta$ tends to zero.

Let us study the asymptotics of $\mu^{(\delta)}_n(\t)$ further. 	Notice that $\nu_{\t,\delta}$ is a, uniformly in $\t$, large parameter for small $\delta$. Indeed, from \eqref{19.3.24e2} (in the appendix), 
\[
g_\t(0) \ge (2\pi)^{-1} \ln{\pi} >0,
\]
and so 
\begin{equation}\label{26.4.24e1}
	\nu_{\t,\delta} \ge (2\pi)^{-1} \ln(\pi/\delta).
\end{equation}

As such we have a problem of the type studied, by the authors, in  \cite[Chapter 3]{SCIKVPS23} and we shall follow the spirit the arguments used therein; however, for this particular, example we shall prove slightly more.

First note that $\mathfrak{b} $ is an inner product on $L^2(\Omega;\CC^2)$; indeed, for $f \in C^\infty_0(\Omega,\CC^2)$ and $(p,q) = (Af_1,Af_2)$, and $\alpha\gg 1$, by repeated integration by parts and the behaviour of $p,q$ for large $r$, one has 
\[
\int_{B_\alpha} p_{,kl} \overline{q_{,nm}} = \int_{S_\alpha}p_{,kl} \overline{q_{,n}} -  \int_{S_\alpha}p_{,km} \overline{q_{,n}} + \int_{B_\alpha} p_{,km} \overline{q_{,nl}} = \mathcal{O}(\alpha^{-2}) + \int_{B_\alpha} p_{,km} \overline{q_{,nl}},
\]
for all $i,j,k,m \in \{1,2\}$, and therefore
\[
\mathfrak{b}[f] = \gamma^{-1} \int_\Omega |f|^2 + \int_\Omega  \big( |(Af_1)_{,12}+(Af_2)_{,11}|^2 + |(Af_1)_{,22}+(Af_2)_{,12}|^2 \big), \quad \forall f \in L^2(\Omega;\CC^2).
\]
%and consequently, there exists a constant $C_E>0$ (from elliptic regularity) such that
%	\begin{equation*}
%		%\label{bequiv}
%		\gamma^{-1}	\| f\|_{L^2(\Omega;\CC^2)}^2 \le \mathfrak{b}[f] \le C_E \| f\|_{L^2(\Omega;\CC^2)}^2, \quad \forall f \in L^2(\Omega;\CC^2).
%\end{equation*}
Now, note that the space  $L^2_0(\Omega;\CC^2) : = \{ f \in L^2(\Omega;\CC^2) \, | \,  \int_\Omega f_1=\int_\Omega f_2= 0 \}$ is precisely where $\mathfrak{c}_1$ vanishes. Thus, we
 shall decompose $L^2(\Omega;\CC^2)$ into $L^2_0(\Omega;\CC^2)$ and its  orthogonal complement with respect to the inner product $\mathfrak{b}$. More precisely, let $N: \CC^2 \mapsto L^2_0(\Omega;\CC^2)$ be the linear operator with the action $c \mapsto Nc$ the unique solution to 
\[
\mathfrak{b}(Nc,\tilde{g})  = - \mathfrak{b}(c,\tilde{g}), \quad \forall \tilde{g} \in L^2_0(\Omega;\CC^2).
\]

%	Elliptic regularity also gives a $\kappa_1>0$ such that
%\begin{equation}\label{clessb}
%	\mathfrak{c}_2[ f]\le \kappa_1^2\mathfrak{b}[f]  \quad \forall f \in L^2(\Omega;\CC^2),
%\end{equation}
%Furthermore,	it is clear that 
%	\[
%	\mathfrak{b}[Nc]  \le \mathfrak{b}[c] .
%	\]
By construction
\begin{equation*}
	%\label{25.4.24e2}
	L^2(\Omega;\CC^2) = L^2_0(\Omega;\CC^2) \dot{+} (I+N)\CC^2,
\end{equation*}
\begin{equation*}
	%\label{25.4.24e3}
	\mathfrak{b}( \tilde{g}, (I+N)\tilde{c}) = 0, \quad \forall \tilde{g}\in L^2_0(\Omega;\CC^2) , \quad \forall \tilde{c}\in \CC^2.
\end{equation*}
Furthermore,
\begin{equation*}
	%\label{26.4.24e10}
	\mathfrak{c}_1[(I+N)c +g] = \mathfrak{c}_1[c], \quad  \forall \tilde{c}\in \CC^2.
\end{equation*}

Whence,
\[
\mu^{(\delta)}_n(\t)  =    \min_{\substack{V \subset L^2_0(\Omega;\CC^2)\dot{+}\CC^2, \\ \dim{V} = n}} \max_{(g,c)\in V} \frac{ \mathfrak{b}[(I+N)c] + \mathfrak{b}[g]}{\nu_{\t,\delta}	\mathfrak{c}_1[c] +  \mathfrak{c}_2[(I+N)c+g] }.
\]
Since $\nu_{\t,\delta}$ is large, we can discard the cross terms in the denominator. Indeed, for
\begin{equation}\label{26.4.24f1}
\Lambda^{(\delta)}_n : = \min_{\substack{V \subset L^2_0(\Omega) \dot{+} \CC^2, \\ \dim{V} = n}} \max_{(g,c)\in V} \frac{ \mathfrak{b}[(I+N)c] +\mathfrak{b}[g]}{\nu_{\t,\delta}	\mathfrak{c}_1[c] +  \mathfrak{c}_2[g] },
\end{equation}
the following result holds:
\begin{prop}\label{7.6e.241}There exists a constant $C_2>0$ such that 
\begin{equation}\label{26.4.24f3}
	|\mu^{(\delta)}_n-\Lambda^{(\delta)}_n(\t) | \le C_2  \ln(\pi/\delta)^{-1/2}  \mu^{(\delta)}_n(\t),
\end{equation}
	for all $\delta \in (0,\pi/e)$ and all $\t \in \square^*\backslash\{0\}$.
\end{prop}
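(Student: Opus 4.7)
The form $\Lambda_n^{(\delta)}$ is obtained from $\mu_n^{(\delta)}$ by discarding the cross terms $\mathfrak{c}_2[(I+N)c]+2\mathrm{Re}\,\mathfrak{c}_2((I+N)c,g)$ in the denominator. The strategy is to show that these discarded terms are, uniformly in $\t$, small relative to $\nu_{\t,\delta}\mathfrak{c}_1[c]+\mathfrak{c}_2[g]$, with the smallness coming from the logarithmic largeness of $\nu_{\t,\delta}$ (cf.\ \eqref{26.4.24e1}). Once this is achieved, the min-max characterisations \eqref{minmax2} and \eqref{26.4.24f1} will differ by the same relative error, yielding \eqref{26.4.24f3}.

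\textbf{Step 1: Uniform bound on $\mathfrak{c}_2[(I+N)c]$.} First, the variational definition of $N$, with the test function $\tilde g =Nc$, together with Cauchy--Schwarz in $\mathfrak{b}$, gives $\mathfrak{b}[Nc]\le \mathfrak{b}[c]\le K|c|^2$. Since $\mathfrak{b}$ controls the $L^2(\Omega;\CC^2)$ norm (up to the factor $\gamma^{-1}$ in its definition), this yields $\|(I+N)c\|_{L^2(\Omega;\CC^2)}\le K'|c|$. Next, using that the Newtonian-potential operator $A:L^2(\Omega)\to L^2(\Omega)$ is bounded and that its second derivatives (hence first derivatives on $\Omega$) are controlled in $L^2$ by Calder\'on--Zygmund, the explicit form of $\mathfrak{c}_2$ gives $\mathfrak{c}_2[h]\le K''\|h\|_{L^2(\Omega;\CC^2)}^2$ for every $h\in L^2(\Omega;\CC^2)$. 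Combining these,
\[
\mathfrak{c}_2[(I+N)c]\le K_0|c|^2 = K_0|\Omega|^{-2}\mathfrak{c}_1[c],
\]
for a constant $K_0$ independent of $\t$ and $\delta$.

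\textbf{Step 2: Absorbing the cross term.} By Cauchy--Schwarz in the semi-inner-product $\mathfrak{c}_2$ and Young's inequality with a parameter $\eta>0$,
\[
\bigl|2\mathrm{Re}\,\mathfrak{c}_2((I+N)c,g)\bigr|\le 2\sqrt{\mathfrak{c}_2[(I+N)c]}\sqrt{\mathfrak{c}_2[g]}\le \eta\,\mathfrak{c}_2[g]+\eta^{-1}K_0|c|^2.
\]
Choosing $\eta:=\bigl(\ln(\pi/\delta)\bigr)^{-1/2}$ and using \eqref{26.4.24e1} in the form $|c|^2\le 2\pi\,\nu_{\t,\delta}\,|\Omega|^{-2}\mathfrak{c}_1[c]/\ln(\pi/\delta)$, one obtains, for $\delta\in(0,\pi/e)$ (so that $\ln(\pi/\delta)>1$) and with some $K_1>0$ independent of $\t,\delta,c,g$,
\[
\bigl|\mathfrak{c}_2[(I+N)c]+2\mathrm{Re}\,\mathfrak{c}_2((I+N)c,g)\bigr|\le K_1\bigl(\ln(\pi/\delta)\bigr)^{-1/2}\bigl(\nu_{\t,\delta}\mathfrak{c}_1[c]+\mathfrak{c}_2[g]\bigr).
\]
Denoting the denominators of \eqref{minmax2} and \eqref{26.4.24f1} by $D_\mu(c,g)$ and $D_\Lambda(c,g)$, this reads
\[
(1-\varepsilon_\delta)D_\Lambda(c,g)\le D_\mu(c,g)\le (1+\varepsilon_\delta)D_\Lambda(c,g),\qquad \varepsilon_\delta:=K_1\bigl(\ln(\pi/\delta)\bigr)^{-1/2}.
\]

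\textbf{Step 3: Conclusion via min-max.} Since the numerators coincide and the denominators agree up to the multiplicative factor $(1\pm\varepsilon_\delta)$, the Rayleigh quotients $\mathfrak{b}[\cdot]/D_\mu$ and $\mathfrak{b}[\cdot]/D_\Lambda$ satisfy the same two-sided bound on every subspace. Taking the min-max over $n$-dimensional subspaces in $L^2_0(\Omega;\CC^2)\dot+\CC^2$ preserves this bound, whence
\[
(1-\varepsilon_\delta)\Lambda_n^{(\delta)}\le \mu_n^{(\delta)}(\t)\le (1+\varepsilon_\delta)\Lambda_n^{(\delta)},
\]
from which \eqref{26.4.24f3} follows for $\delta$ small enough (by, if necessary, shrinking $\delta$ so that $\varepsilon_\delta<1/2$, and absorbing the resulting constant into $C_2$). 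The main obstacle is Step~1, namely establishing the $\t$-independent bounds on $\|Nc\|_{L^2}$ and on $\mathfrak{c}_2[h]$; once these are in place, the logarithmic gain in Step~2 is straightforward Young's inequality combined with \eqref{26.4.24e1}.
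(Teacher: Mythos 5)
Your proof is correct and follows essentially the same approach as the paper's: bound the cross term $\mathfrak{c}_2[(I+N)c]+2\mathrm{Re}\,\mathfrak{c}_2((I+N)c,g)$ by Cauchy--Schwarz/Young combined with a $\t$- and $\delta$-independent finite-dimensional bound $\mathfrak{c}_2[(I+N)c]\le K_0\mathfrak{c}_1[c]$, then invoke $\nu_{\t,\delta}\ge(2\pi)^{-1}\ln(\pi/\delta)$ to absorb the result into a relative $\ln(\pi/\delta)^{-1/2}$-error in the denominator, and pass through the min--max. The only cosmetic difference is that you derive the constant $K_0$ by hand (via the variational definition of $N$ and Calder\'on--Zygmund), whereas the paper simply takes the best such constant $\kappa_1^2$ as given, which is immediate since $c\mapsto(I+N)c$ is a $\t,\delta$-independent linear map on the finite-dimensional space $\CC^2$; your worry that Step~1 is ``the main obstacle'' is therefore unfounded.
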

Note that \eqref{26.4.24f3} implies that
\begin{equation}\label{7.6.24e2}
\mu^{(\delta)}_n = \Big(1+ \mathcal{O}\left(\ln(\pi/\delta)^{-1/2}\right) \Big)\Lambda^{(\delta)}_n(\t)  \qquad \text{as $\delta \rightarrow 0$,}
\end{equation}
 uniformly in $\t$ and $n$.
\begin{proof}[Proof of Proposition \ref{7.6e.241}]
Let $\kappa_1 >0$ be the smallest  constant such that 
%	\begin{equation*}
	%		%	\label{26424add}
	%		\mathfrak{b}[c] \mathfrak{c}_2[c] \le \frac{\kappa_1}{2} \mathfrak{c}_1[c], \quad \forall c \in \CC^2.
	%	\end{equation*}
%	Therefore,
\begin{equation}\label{25.4.24e1}
	\mathfrak{c}_2[(I+N)c] + \mathfrak{b}[(I+N)c]\le \kappa_1^2 \mathfrak{c}_1[c], \quad \forall c \in \CC^2.
\end{equation}
By \eqref{25.4.24e1}, Cauchy-Schwarz inequality and then \eqref{26.4.24e1}, we compute
\begin{flalign*}
	\big| \mathfrak{c_2}[(I+N)c+g] - \mathfrak{c}_2[g] \big| & = \big|   \mathfrak{c}_2[(I+N)c]+2\mathrm{Re}\, \mathfrak{c}_2((I+N)c,g)  \big|\\
	&\le  \kappa_1^2  \mathfrak{c}_1[c] +2\kappa_1\sqrt{\mathfrak{c}_1[c]}\sqrt{\mathfrak{c}_2[g]}\\
	&\le  \kappa_1^2  \mathfrak{c}_1[c] +\kappa_1\nu^{-1/2}_{\t,\delta} \big( \nu_{\t,\delta}\mathfrak{c}_1[c]+ \mathfrak{c}_2[g]  \big) \\
	& \le\kappa_2 \ln(\pi/\delta)^{-1/2}\big(  \nu_{\t,\delta}\mathfrak{c}_1[c]+\mathfrak{c}_2[g] \big),
\end{flalign*}
for $\kappa_2 = \sqrt(2\pi)(\sqrt(2\pi)\kappa_1^2+\kappa_1) $ when $\delta < \pi/e$.

Therefore
\begin{flalign*}
	\left| \frac{\mathfrak{b}[(I+N)c]+ \mathfrak{b}[g]}{\nu_{\t,\delta}	\mathfrak{c}_1[c] +  \mathfrak{c}_2[(I+N)c+g] } - \frac{\mathfrak{b}[(I+N)c]+ \mathfrak{b}[g]}{\nu_{\t,\delta}	\mathfrak{c}_1[c] +  \mathfrak{c}_2[g]   } \right| 
	%		& \le\left(\frac{ \mathfrak{b}[g+(I+N)c]}{\nu_{\t,\delta}	\mathfrak{c}_1[c] +  \mathfrak{c}_2[(I+N)c+g] }  \right)\frac{2\nu_{\t,\delta}^{-1/2}\sqrt{\kappa_2} \big( \mathfrak{c}_2[g] + \nu_{\t,\delta}\mathfrak{c}_1[c]\big)+\kappa_2  \mathfrak{c}_1[c]}{\nu_{\t,\delta}	\mathfrak{c}_1[c] +  \mathfrak{c}_2[g] } \\
	& \le \kappa_2 \ln(\pi/\delta)^{-1/2}\left(\frac{ \mathfrak{b}[(I+N)c]+\mathfrak{b}[g]}{\nu_{\t,\delta}	\mathfrak{c}_1[c] +  \mathfrak{c}_2[(I+N)c+g] }  \right),
\end{flalign*}
and  the desired result holds.
\end{proof}

Now, as the right-hand-side of \eqref{26.4.24f1} is diagonalised, $\Lambda^{(\delta)}_n(\t)$, $n \in \NN$, are the critical points of 
\[
\mathcal{I}_1(c) : = \frac{ \mathfrak{b}[(I+N)c]}{\nu_{\t,\delta}	\mathfrak{c}_1[c]},
\]
and 
\[
\mathcal{I}_2(g):= \frac{\mathfrak{b}[g]}{ \mathfrak{c}_2[g] }.
\]
Moreover, by elliptic regularity, there exists a constant $\kappa_3>0$ such that
\[
\kappa_3 \mathfrak{c}_2[f] \le \mathfrak{b}[f], \quad \forall f\in L^2(\Omega;\CC^2),
\]
and so
\[
\inf_{g\in L^2_0(\Omega;\CC^2)} \frac{\mathfrak{b}[g]}{\mathfrak{c}_2[g]}\ge \kappa_3.
\]
Additionally, by \eqref{26.4.24e1} and \eqref{25.4.24e1},
\begin{equation}\label{26.4.24e5}
	\max_{c\in\CC^2} \frac{\mathfrak{b}[(I+N)c]}{\nu_{\t,\delta}\mathfrak{c}_1[c]} \le \frac{2\pi\kappa_1^2}{\ln(\pi/\delta)}.
\end{equation}
Consequently, the critical points of $\mathcal{I}_1$ are below the critical points of $\mathcal{I}_2$ for all $\delta <\pi \exp(-2\pi \kappa_1^2/\kappa_3)$. That is, for such $\delta$, one has 
\begin{align}
	\Lambda^{(\delta)}_n(\t) &= \nu_{\t,\delta}^{-1} \Lambda_n, \quad n=1,2; \label{26.4.24e3}\\
	\Lambda^{(\delta)}_n(\t) & = \Lambda_{n} := \min_{\substack{V \subset L^2_0(\Omega;\CC^2), \\ \dim{V} = (n-2)}} \max_{g\in V} \frac{ \mathfrak{b}[g]}{  \mathfrak{c}_2[g] }, \quad n= 3,4,\ldots. \label{26.4.24e4} 
\end{align}
where $0\le \Lambda_1 \le\Lambda_2$ are the  critical points of 
\[
\frac{ \mathfrak{b}[(I+N)c]}{	\mathfrak{c}_1[c]  },
\]
i.e., the roots of 
\[
p(\mu) := \det \Big| \mu |\Omega|^2 I -B \Big|,
\]
for the symmetric matrix
\[
B : = \left( \begin{matrix}
	\mathfrak{b}[(I+N)e_1] & \mathfrak{b}\big((I+N)e_1, (I+N)e_2 \big) \\
	\mathfrak{b}\big((I+N)e_2, (I+N)e_1 \big) & \mathfrak{b}[(I+N)e_2]
\end{matrix} \right),
\]
where $e_1,e_2$ are the standard Euclidean basis vectors. 

In particular, \eqref{26.4.24f2}, \eqref{7.6.24e2}, \eqref{26.4.24e5}, \eqref{26.4.24e3} and \eqref{26.4.24e4}   readily imply the following result\footnote{By $h=\mathcal{O}_n(\delta)$ we mean that there exists a positive constant that depends on $n$, $C(n)$, such that $\frac{|h|}{\delta}\le C(n)$ for sufficiently small $\delta$. }:
\begin{thm}One has the following, uniform in $\t$, asymptotic behaviour for small $\delta$:
\[
\lambda^{(\delta)}_{n}(\t) =\frac{1}{\delta^{2}\big( g_\t(0) - (2\pi)^{-1} \ln{\delta}\big)} \Lambda_n + \mathcal{O}(\delta^{-2}|\ln{\delta}|^{-3/2}), \quad n=1,2,
\]
and
\[
\lambda^{(\delta)}_n(\t) =\delta^{-2} \Lambda_n + \mathcal{O}_n(\delta^{-2} |\ln{\delta}|^{-1/2}), \quad n \ge 3,
\]
as $\delta \rightarrow 0$.
\end{thm}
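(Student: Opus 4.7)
The proof is essentially a bookkeeping exercise: chain together the three preceding results (Theorem~\ref{thmArrowAsymptotics}, Proposition~\ref{7.6e.241}, and the identification \eqref{26.4.24e3}--\eqref{26.4.24e4}) and track how each multiplicative error contributes to the absolute error bounds asserted.

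\textbf{Step 1: Chain the multiplicative estimates.}  Theorem~\ref{thmArrowAsymptotics} can be rearranged (for $\delta$ small enough that $C_1\delta < 1/2$) as
\[
\lambda^{(\delta)}_n(\theta) = \bigl(1+\mathcal{O}(\delta)\bigr)\delta^{-2}\mu^{(\delta)}_n(\theta),
\]
uniformly in $n$ and $\theta$, cf.\ \eqref{26.4.24f2}. Proposition~\ref{7.6e.241} gives, cf.\ \eqref{7.6.24e2},
\[
\mu^{(\delta)}_n(\theta) = \bigl(1+\mathcal{O}(|\ln\delta|^{-1/2})\bigr)\Lambda^{(\delta)}_n(\theta).
\]
Composing these and using $\delta \ll |\ln\delta|^{-1/2}$ to absorb the $\mathcal{O}(\delta)$-factor into the logarithmic one, I obtain the master estimate
\[
\lambda^{(\delta)}_n(\theta) \;=\; \delta^{-2}\Lambda^{(\delta)}_n(\theta)\bigl(1+\mathcal{O}(|\ln\delta|^{-1/2})\bigr),
\]
valid uniformly in $\t$ and $n$ for $\delta \in (0,\delta_*)$, where $\delta_*:=\pi\exp(-2\pi\kappa_1^2/\kappa_3)$ is the threshold below which \eqref{26.4.24e3}--\eqref{26.4.24e4} apply.

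\textbf{Step 2: Case $n=1,2$.}  I substitute $\Lambda^{(\delta)}_n(\theta) = \nu_{\theta,\delta}^{-1}\Lambda_n$ from \eqref{26.4.24e3}. Since, by \eqref{26.4.24e1}, $\nu_{\theta,\delta}^{-1} = \mathcal{O}(|\ln\delta|^{-1})$ uniformly in $\t$, the relative logarithmic error becomes an absolute error of size $\delta^{-2}\cdot|\ln\delta|^{-1}\cdot|\ln\delta|^{-1/2} = \delta^{-2}|\ln\delta|^{-3/2}$. Recalling $\nu_{\theta,\delta} = g_\theta(0)-(2\pi)^{-1}\ln\delta$, this yields
\[
\lambda^{(\delta)}_n(\theta) = \frac{\Lambda_n}{\delta^2\bigl(g_\theta(0)-(2\pi)^{-1}\ln\delta\bigr)} + \mathcal{O}(\delta^{-2}|\ln\delta|^{-3/2}),\qquad n=1,2,
\]
with the error uniform in $\t\in\square^*\setminus\{0\}$.

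\textbf{Step 3: Case $n\geq 3$.}  Here $\Lambda^{(\delta)}_n(\theta) = \Lambda_n$, a constant independent of both $\t$ and $\delta$, by \eqref{26.4.24e4}. Hence the master estimate reads directly
\[
\lambda^{(\delta)}_n(\theta) = \delta^{-2}\Lambda_n + \mathcal{O}_n(\delta^{-2}|\ln\delta|^{-1/2}),
\]
where the $n$-dependence of the hidden constant arises because $\Lambda_n$ itself grows with $n$.

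\textbf{Main obstacle.}  There is no serious analytic obstacle left, since the heavy lifting (the quasi-diagonalisation justifying \eqref{26.4.24e3}--\eqref{26.4.24e4}, the resolvent-type approximation of Theorem~\ref{thmArrowAsymptotics}, and the smallness of cross-terms in Proposition~\ref{7.6e.241}) has already been carried out. The only subtlety is verifying that the error compositions go the right way: in particular, one must check that $\delta\cdot\delta^{-2}\mu^{(\delta)}_n(\theta)$ from Theorem~\ref{thmArrowAsymptotics} is dominated by the logarithmic error $|\ln\delta|^{-1/2}\cdot\delta^{-2}\Lambda^{(\delta)}_n(\theta)$ in both regimes; this follows from $\delta=o(|\ln\delta|^{-1/2})$ as $\delta\to 0$ and the fact that $\mu^{(\delta)}_n(\theta)$ and $\Lambda^{(\delta)}_n(\theta)$ are comparable via Step~1. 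Restricting throughout to $\delta\in(0,\delta_*)$ (absorbing the transition regime into the constant) completes the proof.
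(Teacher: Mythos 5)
Your proof is correct and follows exactly the route the paper intends: the paper simply asserts that \eqref{26.4.24f2}, \eqref{7.6.24e2}, \eqref{26.4.24e5}, \eqref{26.4.24e3} and \eqref{26.4.24e4} ``readily imply'' the theorem, and your Steps 1--3 are precisely the bookkeeping that remark leaves implicit, including the observation that $\delta = o(|\ln\delta|^{-1/2})$ absorbs the $\mathcal{O}(\delta)$ factor and that the extra $|\ln\delta|^{-1}$ factor for $n=1,2$ comes from $\nu_{\t,\delta}^{-1} = \mathcal{O}(|\ln\delta|^{-1})$ (via \eqref{26.4.24e1}).
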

This result readily implies  the existence of a (low frequency) gap in $\sigma_0$ for sufficiently small $\delta$:
\begin{cor}
There exists a $\delta_0>0$ such that
\[
\lambda^{(\delta)}_2(\t) <\lambda^{(\delta)}_3(\t) 
\]
for all $\delta<\delta_0$ and all $\t \in \square^*$.
\end{cor}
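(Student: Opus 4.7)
The plan is to read off the inequality directly from the asymptotic expansion in the preceding theorem, using only that $g_\t(0)$ is bounded below and that $\Lambda_3>0$; both uniformly in $\t$. The first two bands grow like $\delta^{-2}/|\ln\delta|$, while all higher bands grow like $\delta^{-2}$, so the separation appears for small $\delta$ automatically. The only real task is to check that this separation is uniform in $\t \in \square^*$.

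First, I would fix attention on the denominator $\nu_{\t,\delta} = g_\t(0) - (2\pi)^{-1}\ln{\delta}$ appearing in the $n=1,2$ asymptotic. The estimate \eqref{26.4.24e1} already recorded in the paper gives $\nu_{\t,\delta} \geq (2\pi)^{-1}\ln(\pi/\delta)$, uniformly in $\t$. Plugging this into the theorem yields, for $n=2$ and all $\t\in\square^*\backslash\{0\}$,
\[
\delta^{2}\lambda^{(\delta)}_2(\t) \;\leq\; \frac{2\pi\,\Lambda_2}{\ln(\pi/\delta)} \,+\, C\,|\ln{\delta}|^{-3/2} \;\xrightarrow[\delta\to 0]{}\; 0.
\]
Thus $\delta^2 \lambda_2^{(\delta)}(\t)$ is bounded above by a quantity going to $0$ as $\delta\to 0$, independently of $\t$.

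Next, I need a uniform lower bound on $\delta^2 \lambda_3^{(\delta)}(\t)$ of order one. By the second line of the theorem, $\delta^{2}\lambda^{(\delta)}_3(\t) = \Lambda_3 + \mathcal{O}(|\ln\delta|^{-1/2})$, uniformly in $\t$, so it suffices to show $\Lambda_3 > 0$. From \eqref{26.4.24e4}, $\Lambda_3$ is the Rayleigh infimum of $\mathfrak{b}[g]/\mathfrak{c}_2[g]$ on $L^2_0(\Omega;\CC^2)$, and the elliptic regularity inequality $\kappa_3 \mathfrak{c}_2[f] \leq \mathfrak{b}[f]$ invoked in the preceding proof yields $\Lambda_3 \geq \kappa_3 > 0$. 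Hence for $\delta$ sufficiently small we get $\delta^2 \lambda_3^{(\delta)}(\t) \geq \Lambda_3/2$ uniformly in $\t$.

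Finally I would choose $\delta_0 \in (0,\pi)$ small enough that the upper bound on $\delta^2 \lambda_2^{(\delta)}(\t)$ is strictly less than $\Lambda_3/2$ for all $\delta < \delta_0$; this is possible since the upper bound tends to $0$ while the lower bound is a fixed positive constant. By continuity of $\lambda_n(\t)$ in $\t$ (Corollary \ref{cor.cont}), the excluded value $\t=0$ is handled in the limit. The main (indeed only) subtlety is the uniformity in $\t$: it is bought cheaply by the $\t$-independent lower bound $g_\t(0) \geq (2\pi)^{-1}\ln\pi$ coming from \eqref{19.3.24e2}, which is what prevents the large parameter $\nu_{\t,\delta}$ from degenerating as $\t$ varies, so no harder argument is needed.
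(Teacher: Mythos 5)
The paper gives no explicit proof, stating only that the gap ``readily'' follows from the asymptotic theorem; your write-up supplies exactly that routine argument and it is correct. The three ingredients you isolate — the $\t$-uniform lower bound $\nu_{\t,\delta}\ge(2\pi)^{-1}\ln(\pi/\delta)$ from \eqref{26.4.24e1}, the positivity $\Lambda_3\ge\kappa_3>0$ from the elliptic-regularity inequality $\kappa_3\,\mathfrak{c}_2[f]\le\mathfrak{b}[f]$, and the continuity of $\t\mapsto\lambda_n^{(\delta)}(\t)$ from Corollary \ref{cor.cont} to absorb the excluded value $\t=0$ — are precisely what the authors rely on; the last point is worth having made explicit, since both Theorem \ref{thmArrowAsymptotics} and Proposition \ref{7.6e.241} are stated only for $\t\ne 0$, and the uniformity of your two bounds over $\square^*\backslash\{0\}$ is what lets the strict inequality pass to the limit $\t\to 0$.
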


\begin{rem}
When $\Omega = B_1$, the ball of radius 1 centred at the origin, one has
\[
A1 = \left\{ \begin{array}{lcl}
	-	\frac{|x|^2}{4} + \frac{1}{4}  & & \text{in $\Omega$} \\[0.2em]
	-\frac{1}{2}\ln{|x|}&  & \text{in $\RR^2 \backslash \Omega$}
\end{array} \right. .
\]
Then, direct calculation gives
\[
\mathfrak{b}(\tilde{c},\tilde{g}) = 0, \quad \forall \tilde{c} \in \CC^2, \forall \tilde{g} \in L^2_0(\Omega;\CC^2).
\]
Therefore, $N \equiv 0$ and, by  direct calculation, one has $B =  \big(\gamma^{-1}+\frac{1}{2}\big)I$ and consequently 
\[
\Lambda_1  = \Lambda_2 =  \left(\gamma^{-1}+\frac{1}{2}\right).
\]
\end{rem}

\section{Appendix}
%\label{app1}

\subsection{Equivalence of \eqref{degp} and the reduced Maxwell system \eqref{Max1}-\eqref{Max2} on the critical light line.}
Here we establish the following result:
\begin{lem}\label{BthetaIsmaxwell}
Let $k=1$ and $\omega$ be such that $(\omega^2,v)$ is an eigenvalue-eigenfunction pair of $\B(\t)$ for some $\t \in \square^*$.  Then,  there exist solutions of the form \eqref{propsol}, \eqref{qpbc} to the Maxwell system \eqref{coef}, \eqref{Maxwellsystem}.
\end{lem}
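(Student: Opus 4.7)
The plan is to reverse the computation in Section~\ref{s:k1}: given the weak eigenpair, reconstruct the transverse field components. Set $H_3 := v_1$, $E_3 := v_2$; these are $\t$-quasiperiodic by construction, and the condition $v \in V(\t)$ is precisely the Fredholm compatibility condition $H_{3,1}+E_{3,2}=0$ and $E_{3,1}-H_{3,2}=0$ in $\square\setminus R$ that is required for the degenerate algebraic system \eqref{Max1} to be solvable. In the inclusion $R$ the matrices in \eqref{Max1} are invertible, so I define $(H_1,E_2,H_2,E_1)$ pointwise by the explicit inversion formulas already derived in Section~\ref{s:k1}; this is unambiguous.

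In $\square\setminus R$ the matrices in \eqref{Max1} have one-dimensional kernels spanned by $(1,1)$ and $(-1,1)$, so the general (consistent) solution is
\[
(H_1,E_2) = -\tfrac{\i}{2\omega}(v_{1,1},v_{2,2}) + C_1\,(1,1), \qquad (H_2,E_1) = -\tfrac{\i}{2\omega}(v_{1,2},v_{2,1}) + C_2\,(-1,1),
\]
for two unknown functions $C_1,C_2$ on $\square\setminus R$. Any such choice automatically satisfies \eqref{Max1}; the whole problem is therefore reduced to choosing $(C_1,C_2)$ so that \eqref{Max2} holds distributionally on all of $\square$ and so that $(E,H)$ is $\t$-quasiperiodic.

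To find $(C_1,C_2)$, I test the weak form of \eqref{Max2} against arbitrary $\phi\in H^1_\t(\square;\CC^2)$ and repeat the integration-by-parts of Section~\ref{s:k1}, now retaining the kernel contributions. The only new terms are $\int_{\square\setminus R}\bigl(C_1\,\overline{\div \phi} + C_2\,\overline{\curl\phi}\bigr)$, and the same computation as in the sufficiency derivation produces
\[
\tfrac{\i}{\omega}\,b(v,\phi) + \int_{\square\setminus R}\bigl(C_1\,\overline{\div \phi} + C_2\,\overline{\curl\phi}\bigr) = \i\omega\, c(v,\phi), \qquad \forall \phi \in H^1_\t(\square;\CC^2).
\]
Equivalently, $(C_1,C_2)$ must realise the linear functional $T(\phi) := \i\omega\, c(v,\phi) - \tfrac{\i}{\omega}\,b(v,\phi)$ via pairing against $(\div\phi,\curl\phi)|_{\square\setminus R}$. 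The key observation is that the eigenvalue identity $b(v,\cdot)=\omega^2\, c(v,\cdot)$ on $V(\t)$ forces $T$ to vanish on $V(\t)$, which is exactly the kernel of $\phi\mapsto(\div\phi,\curl\phi)|_{\square\setminus R}$; hence $T$ descends to the quotient $H^1_\t/V(\t)$, and Riesz representation in $L^2(\square\setminus R;\CC^2)$ supplies a pair $(C_1,C_2)$ in the closure of the image of the div--curl map.

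The main obstacle is the continuity estimate needed for this Riesz step, namely the bound $|T(\phi)|\lesssim \|(\div\phi,\curl\phi)\|_{L^2(\square\setminus R)}$, or equivalently that the div--curl map is bounded below on $H^1_\t/V(\t)$. This is a quasiperiodic Helmholtz-type decomposition with the obstacle $R$: given admissible $(f,g)\in L^2(\square\setminus R;\CC)^2$ I must realise them as $(\div\phi,\curl\phi)$ for some $\phi\in H^1_\t$ with $\|\phi\|_{H^1}\lesssim\|(f,g)\|_{L^2}$, modulo $V(\t)$. This can be done by writing $\phi = \nabla a+\nabla^\perp b$ with $a,b$ solving $\t$-quasiperiodic Poisson problems whose right-hand sides are supported in $\overline{R}$, mirroring the auxiliary construction already used in Lemma~\ref{lem:spcom2} and Lemma~\ref{lem:spcom3}. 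Once $(C_1,C_2)$ is obtained, reading the integration by parts backwards confirms that the assembled $(E,H)$ satisfies \eqref{Max1} a.e.\ and \eqref{Max2} in $\D'(\square)$, giving the required solution of the form \eqref{propsol}, \eqref{qpbc} to \eqref{Maxwellsystem}.
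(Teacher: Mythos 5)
Your reconstruction is, at the level of strategy, the same as the paper's: set $(H_3,E_3)=v$, observe that $v\in V(\t)$ is exactly the Fredholm compatibility condition for the degenerate system \eqref{Max1} in $\square\setminus R$, write the general solution with two free functions $(C_1,C_2)$ on $\square\setminus R$, and reduce the remaining equations \eqref{Max2} to a variational problem for $(C_1,C_2)$ after observing that the eigenvalue identity kills the $V(\t)$ contribution. Where the two arguments part ways is the implementation of the Riesz step. The paper introduces $W(\t)$, the orthogonal complement of $V(\t)$ in $H^1_\t(\square;\CC^2)$ with respect to the inner product \eqref{Hnorm}, and characterizes it (Proposition \ref{Wrep}) as the set of $w$ with $\div w=\curl w=0$ in $R$; it then parametrizes the unknowns as $(c_1,c_2)=(\div w,\curl w)$ with $w\in W(\t)$. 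That choice makes the bilinear form $\int_{\square\setminus R}\div w\,\overline{\div\tilde w}+\curl w\,\overline{\curl\tilde w}$ equal, via \eqref{ibp} and the vanishing on $R$, to the full $H^1_\t$ inner product $(w,\tilde w)_{H^1_\t}$ restricted to $W(\t)$, so Lax--Milgram/Riesz applies with no further estimate. You instead work directly in $L^2(\square\setminus R;\CC^2)$ and therefore need the bounded-below estimate for the div--curl map on $H^1_\t/V(\t)$; this is true and is proved by the same observation (on $W(\t)$ the $R$-part of $\|\div\cdot\|^2+\|\curl\cdot\|^2=\|\nabla\cdot\|^2$ vanishes), but you must note it explicitly. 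Two small corrections to your sketch: first, the Helmholtz lift $\phi=\nabla a+\nabla^\perp b$ must have $\Delta a,\Delta b$ supported in $\overline{\square\setminus R}$, not $\overline{R}$ (extend the prescribed $(f,g)$ by zero into $R$); sources supported in $\overline{R}$ produce elements of $V(\t)$ (Lemma \ref{lem:spcom2}), which is the opposite of what you need here. Second, the case $\t=0$ needs the extra mean-value term in the inner product (cf.\ \eqref{Hnorm}) for the coercivity to hold, since $\|\nabla\cdot\|_{L^2}$ alone is only a seminorm on $H^1_{per}$. With these amendments your route is a valid alternative; the paper's $W(\t)$ parametrization simply packages the coercivity so it comes for free.
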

To prove this result, we shall introduce $W(\t)$, the orthogonal complement of $V(\t)$ in $H^1_\t(\square;\CC^2)$ with respect to the equivalent inner product
\begin{equation}\label{Hnorm}
(u,\tilde{u})_{H^1_\t(\square;\CC^2)} = \left\{ \begin{array}{lcc}
\int_\square \nabla u: \overline{\nabla \tilde{u}} & & \t \neq 0, \\[.2em]
\int_\square \nabla u: \overline{\nabla \tilde{u}} + \left| \int_\square u \right|\left| \int_\square \tilde{u} \right|& & \t = 0, 
\end{array} \right.
\end{equation}
and utilise the following fact:
\begin{prop}\label{Wrep}
For any $\t \in \square^*$, $w \in W(\t)$ if, and only if,
\[
\text{$\div{w}=0\quad $  and  $\quad \curl{w}=0\quad $ in $C$},
\] 
with additionally $\int_\square w =0$ when $\t =0$. 
\end{prop}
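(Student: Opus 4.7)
My plan is a Hodge-type characterization of $W(\t)$, obtained by combining the integration-by-parts identity \eqref{ibp} with the explicit description of $V(\t)$ given by Lemma \ref{lem:spcom2}. Equation \eqref{ibp} translates the $H^1_\t$-orthogonality $\int_\square \nabla w : \overline{\nabla v} = 0$ into the pairing $\int_\square \div w\,\overline{\div v} + \curl w\,\overline{\curl v} = 0$, and Lemma \ref{lem:spcom2} identifies the range of the map $v \mapsto (\div v, \curl v)$ on $V(\t)$, which is what makes this pairing informative.

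For the sufficiency (``$\Leftarrow$'') direction, suppose $w$ meets the stated conditions. For any $v \in V(\t)$, the definition of $V(\t)$ tells us that $\div v$ and $\curl v$ are supported in $\overline R$, so the identity
\[
\int_\square \nabla w : \overline{\nabla v} = \int_\square \div w\,\overline{\div v} + \curl w\,\overline{\curl v}
\]
reduces to an integral over $\overline R$; the hypothesis kills the integrand there. When $\t = 0$, the additional mean-value term in $(\cdot,\cdot)_{H^1_0}$ also vanishes thanks to $\int_\square w = 0$, so $w \perp V(\t)$.

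For the necessity (``$\Rightarrow$'') direction, I would exploit Lemma \ref{lem:spcom2}(i): for $\t \neq 0$, any pair $(f_1, f_2) \in L^2(R) \times L^2(R)$ is realized as $(\div v, \curl v)$ for some $v \in V(\t)$ (take $v = \nabla a + \nabla^\perp b$ with $\Delta a = \chi_R f_1$, $\Delta b = \chi_R f_2$, which is solvable in $H^2_\t(\square)$). Combining this with \eqref{ibp}, orthogonality yields
\[
\int_R \div w\,\overline{f_1} + \curl w\,\overline{f_2} = 0
\]
for every such pair, forcing $\div w = \curl w = 0$ on $R$. For $\t = 0$, constant vectors $c \in \CC^2$ lie in $V(0)$; testing $w$ against them through the mean-value term of $(\cdot,\cdot)_{H^1_0}$ immediately gives $\int_\square w = 0$.

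The principal obstacle is the $\t = 0$ case of the necessity step: by Lemma \ref{lem:spcom2}(ii) the image of $v \mapsto (\div v, \curl v)$ on $V(0)$ only covers mean-zero pairs in $L^2(R) \times L^2(R)$, so the surjectivity argument above only shows that $\div w$ and $\curl w$ are constant on $R$. Promoting those constants to zero requires combining the already-derived $\int_\square w = 0$ with the periodicity identity $\int_\square \div w = \int_\square \curl w = 0$ and with the structure of $w$ outside $R$; this reconciliation, which pins down the Helmholtz decomposition at the special quasi-momentum $\t=0$ where constants belong to $V(0)$, is the step I expect to require the most careful bookkeeping.
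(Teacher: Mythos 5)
For $\t\neq 0$ your argument is the same as the paper's: using \eqref{ibp}, orthogonality of $w$ to $V(\t)$ becomes $\int_R \div w\,\overline{f_1}+\curl w\,\overline{f_2}=0$, and Lemma \ref{lem:spcom2}(i) shows that $(f_1,f_2)=(\div v,\curl v)$ sweeps out all of $L^2(R)^2$ as $v$ ranges over $V(\t)$, which kills $\div w,\curl w$ on $R$. The sufficiency direction, and the derivation of $\int_\square w=0$ at $\t=0$, are likewise identical in spirit to the paper's (very brief) proof.

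The step you flag at $\t=0$ is, however, not a matter of bookkeeping: the necessity direction actually fails there, and the paper's own proof contains the same lacuna, asserting that $f_1,f_2\in L^2(R)$ are arbitrary while Lemma \ref{lem:spcom2}(ii) only delivers mean-zero pairs. As you correctly derive, testing against $V(0)$ only forces $\div w$ and $\curl w$ to be \emph{constant} on $R$, together with $\int_\square w=0$; but neither $\int_\square w=0$ nor the periodicity identities $\int_\square\div w=\int_\square\curl w=0$ can promote those constants to zero, because the orthogonality condition places no constraint on $\div w,\curl w$ over $\square\setminus R$. Concretely, choose $d_1\neq 0$, set $g:=d_1\chi_R-\tfrac{d_1|R|}{|\square\setminus R|}\chi_{\square\setminus R}$ so that $\langle g\rangle=0$, let $u\in H^2_{per}(\square)$ solve $\Delta u=g$ with $\langle u\rangle=0$, and put $w:=\nabla u$. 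Then $w\in H^1_{per}(\square;\CC^2)$, $\curl w=0$, $\int_\square w=0$ (periodicity), and $\div w=g\equiv d_1\neq 0$ on $R$; yet for any $v\in V(0)$ with $\div v=f_1$, $\curl v=f_2$ (mean-zero and supported in $\overline R$), \eqref{ibp} gives
\[
(w,v)_{H^1_0(\square;\CC^2)}=\int_R d_1\,\overline{f_1}+\Big(\int_\square w\Big)\cdot\overline{\Big(\int_\square v\Big)}=d_1\,\overline{\langle f_1\rangle}+0=0,
\]
so $w\in W(0)$ without satisfying $\div w=0$ on $R$. The honest characterisation at $\t=0$ is therefore that $\div w$ and $\curl w$ are constant on $R$ and $\int_\square w=0$; neither your proposal nor the paper's argument establishes the stronger claim as stated.
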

\begin{proof}
For any $\t \in \square^*$, $w\in W(\t)$ and $v \in V(\t)$ one has (cf. \eqref{ibp})
\[
0 =  \int_R \div{w} \, \overline{f_1} + \curl{w} \, \overline{f_2} + 2\pi \delta_{\theta0} \Big(\int_\square w\Big) c ,
\]
for arbitrary $f_1, f_2 \in L^2(R)$, $c \in \CC^2$ and $\delta$ is the Kronecker delta. Here, we utilised the representation for $V(\t)$ given by Lemma \ref{lem:spcom2}  The desired results readily follow. 
\end{proof}
\begin{proof}[Proof of Lemma \ref{BthetaIsmaxwell}]
Set $H_3 = v_1,$ $E_3 = v_2$. Then, the systems in \eqref{Max1} are solvable and a general form of the solutions (see Section \ref{s:k1}) is
\begin{equation}\label{HEforms}
\begin{aligned}
	\left( \begin{array}{c}
		H_1 \\  E_2
	\end{array}\right) &=  \chi_R \frac{\i}{\omega}\frac{1}{\gamma}\left( \begin{array}{c}
		H_{3,1} +\epsilon_0E_{3,2} \\ H_{3,1}+E_{3,2}
	\end{array}\right) + (1-\chi_R) \frac{\i}{2\omega} v_{2,2} \left( \begin{array}{c}
		1 \\  -1
	\end{array}\right) +\frac{\i}{\omega}c_1\left( \begin{array}{c}
		1 \\  1
	\end{array}\right) ,\\
	\left( \begin{array}{c}
		H_2 \\  E_1
	\end{array}\right) & = \chi_R \frac{\i}{\omega}\frac{1}{\gamma}\left( \begin{array}{c}
		H_{3,2}-\epsilon_0E_{3,1} \\ E_{3,1}-H_{3,2}
	\end{array}\right) - (1-\chi_R)\frac{\i}{2\omega}   v_{2,1} \left( \begin{array}{c}
		1 \\  1
	\end{array}\right) + \frac{\i}{\omega}c_2\left( \begin{array}{c}
		-1 \\  1
	\end{array}\right) ,
\end{aligned}
\end{equation}
for arbitrary functions $c_1, c_2$ supported in $\overline{\square \backslash R}$. 

	It remains to show that \eqref{Max2} holds, i.e.  
\begin{equation}\label{17dec3}
	\begin{aligned}
		\int_{\square}  \left( \begin{array}{c}
			H_1 \\  E_2
		\end{array}\right)  \cdot \overline{ \left( \begin{array}{c}
				\phi_{2,2} \\  \phi_{1,1}
			\end{array}\right) } -   \left( \begin{array}{c}
			H_2 \\  E_1
		\end{array}\right)  \cdot \overline{ \left( \begin{array}{c}
				\phi_{2,1} \\  \phi_{1,2}
			\end{array}\right) }  =  \i \omega\Big( \int_\square  v \cdot \overline{\phi}+ &\gamma \int_R  v_2 \overline{\phi_2} \Big) \\
	&	\quad \forall \phi \in H^1_{\t}(\square;\CC^2),
	\end{aligned}
\end{equation}
holds, for some particular choice of $c_1, c_2$.

 Substituting \eqref{HEforms} into \eqref{17dec3} gives
\begin{equation*}
%	\label{19dec1}
\begin{aligned}
&\int_{\square \backslash R} c_1   \overline{ \div{\phi} } +c_2   \overline{ \curl{\phi} } +  \int_{\square\backslash R} v_{2,1} \overline{\left( \frac{\phi_{2,1}+\phi_{1,2}}{2} \right)}+ v_{2,2} \overline{\left( \frac{\phi_{2,2}-\phi_{1,1}}{2} \right) }  \\
&+ \int_R \nabla v_2 \cdot \overline{\nabla \phi_2}+ \gamma^{-1}\int_{R} \div v \, \overline{\div{\phi}} +\curl v \, \overline{\curl{\phi}} = \omega^2 \left( \int_\square  v \cdot \phi  + \gamma \int_R  v_2 \overline{\phi_2} \right)
\end{aligned}
\end{equation*}
$\forall \phi \in H^1_{\t}(\square;\CC^2)$. Decomposing $\phi$  with respect to the decomposition $H^1_{\t}(\square;\CC^2) = V(\t) \oplus W(\t)$,  utilising the fact $v$ is an eigenfunction of $\B(\t)$ with eigenvalue $\lambda = \omega^2$ (cf. \eqref{opdegp} and in particular \eqref{degp}) and utilising Proposition \ref{Wrep}, it follows that:   \eqref{17dec3} holds if, and only if, the pair$(c_1,c_2)$ satisfies  
\begin{equation}\label{19dec3}
	\begin{aligned}
\int_{\square \backslash R} c_1 &  \overline{ \div{\tilde{w}}} +c_2   \overline{ \curl{\tilde{w}}} = \langle F, \tilde{w} \rangle, \quad \forall \tilde{w} \in W(\t),
	\end{aligned}
\end{equation}
for
\[
\begin{aligned}
\langle F, \tilde{w} \rangle : = \omega^2  \Big( \int_{\square \backslash R}  v \cdot \overline{\tilde{w}} &+ \gamma \int_R v_2 \overline{\tilde{w}_2} \Big)   \\ &- \int_{\square\backslash R} v_{2,1} \overline{\left( \frac{\tilde{w}_{2,1}+\tilde{w}_{1,2}}{2} \right)}+ v_{2,2} \overline{\left( \frac{\tilde{w}_{2,2}-\tilde{w}_{1,1}}{2} \right) }- \int_R \nabla v_2 \cdot \overline{\nabla \tilde{w}_2}.
\end{aligned}
\]
One can now consider  $c_1= \div w$ and $c_2 =  \curl{w}$ for some $w \in W(\t)$ to be determined. Indeed, by Proposition \ref{Wrep}, $\div{w} = \curl{w}=0$ in $R$.  From \eqref{19dec3}, it follows that $w$ must solve 
\begin{equation}\label{19dec5}
	\begin{aligned}
		\int_{\square \backslash R}  \div{w}\, \overline {\div \tilde{w}} + \curl{w} \, \overline{\curl{w}}
	 = \langle F, \tilde{w} \rangle\quad \forall \tilde{w} \in W(\t).
	\end{aligned}
\end{equation}
Finally note that there exists a unique solution in $W(\t)$ to \eqref{19dec5}; indeed, the right-hand-side is a bounded linear functional on $W(\t)$ and the left-hand-side is simply the inner product $(w,\tilde{w})_{H^1_{\t}(\square;\CC^2)}$ (see \eqref{Hnorm} and \eqref{ibp}). Hence, \eqref{17dec3} holds when  $c_1 = \div{w}$, $c_2=\curl{w}$ and the proof is complete. 
	\end{proof}
\subsection{Proof of Lemma \ref{prethm.resolventasymp}.}\label{app2}
Here, we shall prove Lemma \ref{prethm.resolventasymp}.  For this, we shall rely on the techniques introduced in  \cite[Section 3]{SCIKVPS23}.

For fixed $0<\ep<1$, $\t \in \square^*$ and $F \in H^{-1}(\square;\CC^2)$, recall $U_{\ep,\t}$ is the solution to
\begin{equation}\label{15.6.24e4}
\ep^{-1} a(U_{\ep,\t},\phi) + b_{\ep}(U_{\ep,\t},\phi) + c(U_{\ep,\t},\phi) = \langle F,\phi\rangle, \quad \forall \phi \in H^1_\t(\square;\CC^2),
\end{equation}
and  $V_\t \in V(\t)$ is the solution to
\[
 b(V_\t,\tilde{V}) = \langle F,\tilde{V} \rangle, \quad \forall\, \tilde{V} \in V(\t).
\]
We shall prove the desired estimate by considering  an intermediary function, $W_{\ep,\t} \in H^1_\t(\square;\CC^2)$,  the unique solution  
\begin{equation}\label{15.6.24e6}
	\ep^{-1} a(W_{\ep,\t},\phi) + b(W_{\ep,\t},\phi) + c(W_{\ep,\t},\phi) = \langle F,\phi\rangle, \quad \forall \phi \in H^1_\t(\square;\CC^2),
\end{equation}
Note that, $W_{\ep,\t}$ exists because $a+b+c$ is an equivalent inner product on $H^1(\square;\CC^2)$. Indeed, by arguing as in the proof of \eqref{epformlowerbound}  it follows that 
\begin{equation}\label{16.5.24e10}
	\nu_1\| \nabla \phi \|^2_{L^2(\square;\CC^2)} \le a[\phi]+b[\phi] \le \nu_2 \| \nabla \phi \|^2_{L^2(\square;\CC^2)}, \quad \forall \phi \in H^1(\square;\CC^2),
\end{equation}
for $\nu_1 =\min\{ \tfrac{1}{2}, (1-\eta)/\gamma\}$ and $\nu_2 =  \max\{3/2,1 +2/\gamma \} $. 

We begin the proof by  noting that $b$ (cf. \eqref{prebform}) is the continuous extension of $b_\ep$ (cf. \eqref{bepform}) to $\ep=0$; indeed, we demonstrate below the following result:
\begin{prop}\label{relboundb}
	There exists a $L_b>0$ such that 
	\begin{equation}\label{rbineq}
		\left| b_\ep(\phi,\psi) - b(\phi,\psi) \right| \le   \ep L_b\sqrt{\ep^{-1}a[\phi]+b_\ep[\phi]}  \sqrt{a[\psi]+b[\psi]}\quad \forall \phi,\psi \in H^1(\square;\CC^2),
	\end{equation}
	and for all $0<\ep<1$.
\end{prop}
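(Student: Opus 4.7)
The plan is to show that $b_\ep - b$ is a bilinear form all of whose coefficients are $O(\ep)$, and then to estimate the $H^1$-factors using the coercivity bounds already established in \eqref{epformlowerbound} and \eqref{16.5.24e10}. First, subtracting the definitions \eqref{prebform} and \eqref{bepform} term by term, the difference $b_\ep(\phi,\psi) - b(\phi,\psi)$ breaks into: in $\square\setminus R$, the quantity $\kappa_1(\ep)\int_{\square\setminus R}\bigl(\nabla^\perp\phi_1\cdot\overline{\nabla\psi_2} + \nabla\phi_2\cdot\overline{\nabla^\perp\psi_1}\bigr)$ with $\kappa_1(\ep) := \frac{\sqrt{1-\ep}-1}{\ep} + \frac{1}{2}$; in $R$, the quantity $\kappa_2(\ep)\int_R\bigl(\nabla\phi_1\cdot\overline{\nabla\psi_1} + \epsilon_0\nabla\phi_2\cdot\overline{\nabla\psi_2}\bigr) + \kappa_3(\ep)\int_R\bigl(\nabla^\perp\phi_1\cdot\overline{\nabla\psi_2} + \nabla\phi_2\cdot\overline{\nabla^\perp\psi_1}\bigr)$ with $\kappa_2(\ep) := \frac{1}{\gamma+\ep} - \frac{1}{\gamma}$ and $\kappa_3(\ep) := \frac{\sqrt{1-\ep}}{\gamma+\ep} - \frac{1}{\gamma}$.

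Second, I would verify elementarily that $|\kappa_j(\ep)| \le C\ep$ for $j=1,2,3$, uniformly for $0<\ep<1$. For $\kappa_1$, after bringing to common denominator and rationalising, one finds $\kappa_1(\ep) = \frac{-\ep}{2(2\sqrt{1-\ep} + 2 - \ep)}$, whose denominator is bounded below by $2$. For $\kappa_2$, one has $|\kappa_2(\ep)| = \ep/(\gamma(\gamma+\ep)) \le \ep/\gamma^2$. For $\kappa_3$, use $|\sqrt{1-\ep}-1| = \ep/(1+\sqrt{1-\ep}) \le \ep$ and rewrite $\kappa_3(\ep) = \frac{\gamma(\sqrt{1-\ep}-1)-\ep}{\gamma(\gamma+\ep)}$ to get $|\kappa_3(\ep)| \le (1+1/\gamma)\ep/\gamma$.

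Third, applying Cauchy--Schwarz together with the pointwise bound $|\nabla^\perp\phi_1\cdot\overline{\nabla\psi_2}| \le |\nabla\phi||\nabla\psi|$ (cf. \eqref{15.6.24e3}) to each integral, these three estimates combine to give
\[
|b_\ep(\phi,\psi) - b(\phi,\psi)| \le C\,\ep\,\|\nabla\phi\|_{L^2(\square;\CC^2)}\,\|\nabla\psi\|_{L^2(\square;\CC^2)}
\]
for some constant $C$ depending only on $\gamma$ and $\epsilon_0$. Finally, the coercivity estimate \eqref{epformlowerbound} gives $\|\nabla\phi\|^2_{L^2(\square;\CC^2)} \le \nu^{-1}(\ep^{-1}a[\phi] + b_\ep[\phi])$, while \eqref{16.5.24e10} gives $\|\nabla\psi\|^2_{L^2(\square;\CC^2)} \le \nu_1^{-1}(a[\psi] + b[\psi])$. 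Substituting these into the previous display yields \eqref{rbineq} with $L_b := C/\sqrt{\nu\nu_1}$.

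There is no real obstacle here: the argument is essentially algebra plus Cauchy--Schwarz, with the only slightly delicate point being the rationalisation that exposes the $\ep$-factor in $\kappa_1$. The asymmetry between the $\phi$- and $\psi$-factors in \eqref{rbineq} corresponds exactly to the asymmetry that $\phi$ is controlled by the larger (singular in $\ep$) form $\ep^{-1}a+b_\ep$ while $\psi$ is only controlled by the ``limit'' form $a+b$; both, however, control the full $H^1$-seminorm, which is all that is needed for the estimate above.
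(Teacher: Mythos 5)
Your proposal is correct and follows essentially the same route as the paper: the same term-by-term decomposition of $b_\ep-b$ into three coefficient functions on $\square\setminus R$ and on $R$, the same elementary $O(\ep)$ bounds on those coefficients (the paper exposes the $\ep/2$ bound on your $\kappa_1$ via the inequality $|\sqrt{1-\ep}-(1-\ep/2)|\le\ep^2/2$, which is just a different way of writing your rationalisation), then Cauchy--Schwarz to land on $C\ep\|\nabla\phi\|\,\|\nabla\psi\|$, and finally the coercivity bounds \eqref{epformlowerbound} and \eqref{16.5.24e10} to convert to the asymmetric form-norm factors in \eqref{rbineq}.
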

From this `relative' bound of forms,  we can readily approximate $U_{\ep,\t}$ by $W_{\ep,\t}$.
Indeed, by \eqref{15.6.24e4}, \eqref{15.6.24e6} and \eqref{rbineq},  $r : = U_{\ep,\t} -W_{\ep,\t}$ satisfies
\begin{flalign*}
	\ep^{-1}a[r] + b[r]+c[r] = b(U_{\ep,\t},r)- b_{\ep}(U_{\ep,\t},r)   \le \ep L_b \sqrt{\ep^{-1}a[U_{\ep,\t}]+b_\ep[U_{\ep,\t}]} \sqrt{a[r]+b[r]}
\end{flalign*}
which, along with the standard inequality \[\ep^{-1}a[U_{\ep,\t}]+b_\ep[U_{\ep,\t}] + c[U_{\ep,\t}]\le C_1^2 \| F\|^2_{H^{-1}(\square;\CC^2)}\] for $C_1^2 = \max\{1,\nu^{-1}\}$ (cf. \eqref{15.6.24e4}, \eqref{cform} and \eqref{epformlowerbound}), gives 
\[
 a[r] + b[r]+c[r]  \le   \ep^2 L_b^2 C_1^2  \| F\|^2_{H^{-1}(\square;\CC^2)}.
\]
This,  along with \eqref{cform} and\eqref{16.5.24e10}, implies that $r = U_{\ep,\t}-W_{\ep,\t}$ satisfies
\begin{equation*}
%	\label{15.6.24e7}
\|U_{\ep,\t}-W_{\ep,\t}\|_{H^1(\square;\CC^2)}  \le   \ep \frac{L_b C_1}{C_2} \|F \|_{H^{-1}(\square;\CC^2)},
\end{equation*}
for $C_2^2 = \min\{1,\nu_1\}$.

To complete the proof of Lemma \ref{prethm.resolventasymp}, it remains to show that there exists a $\nu_\star >0$ independent of $\ep$, $\t$ and $f$ such that
\begin{equation}\label{15.6.24e8}
 a[W_{\ep,\t}-V_{\t}] + b[W_{\ep,\t}-V_{\t}] + c[W_{\ep,\t}-V_{\t}] \le   \ep^2 \nu_\star^{-2} \|F \|^2_{H^{-1}(\square;\CC^2)}.
\end{equation}
for  some $\nu_*$ independent of $\ep, \t$ and $F$.  To do this, we shall utilise \cite[Theorem 3.1]{SCIKVPS23}; setting,
\[
a_\t(\phi,\psi) : = a(e^{\i \t \cdot y}\phi ,e^{\i \t \cdot y}\psi), \qquad  \text{and} \quad \ b_{\t}(\phi,\psi) : = b(e^{\i \t \cdot y}\phi ,e^{\i \t \cdot y}\psi) + c(\phi, \psi),
\]
then, we note that $W_{\ep,\t} = e^{\i \t \cdot y} u_{\ep,\t}$ for $u_{\ep,\t} \in H^1_{per}(\square;\CC^2)$ the solution to 
\begin{equation*}
\ep^{-1} a_\t(u_{\ep,\t}, \phi) + b_{\t}(u_{\ep,\t}, \phi)   = \langle f, \phi\rangle, \quad \forall \phi \in H^1_{per}(\square;\CC^2);
\end{equation*}
and that  $V_\t = e^{\i \t \cdot y} v_\t$ where $v_\t \in \mathcal{V}(\t)$ is the solution to
\[
b_\t(v_\t, \tilde{v})  = \langle f, \tilde{v}\rangle, \quad \forall \tilde{v} \in \mathcal{V}(\t).
\]
Here, 
\[
\mathcal{V}(\t) : = \{ v\in H^1_{per}(\square;\CC^2)\, | \, \div{v} + \i \t \cdot v = 0, \text{ and } \curl{v} + \i \t^\perp \cdot v =0 \text{ in $\square \backslash R$} \}.
\]
We now  argue as in \cite[Section 3]{SCIKVPS23}. First note that the closed subspace  $\mathcal{V}(\t)$ is the kernel of the form $a_\t$, precisely
\[
\mathcal{V}(\t) ;= \{ v\in H^1_{per}(\square;\CC^2) \, : \, a_\t[v]=0 \}.
\]
Furthermore, it can be seen that $a_\t$ is coercive on $(\mathcal{V}(\t))^\perp$, the orthogonal complement of $\mathcal{V}(\t)$ in $H^1_{per}(\square;\CC^2)$, with coercivity constant $\nu(\t)>0$. In fact, we can say much more; the form $a_\t$ is \textit{uniformly} coercive on $(\mathcal{V}(\t))^\perp$, i.e. $\nu_* := \inf_{\t} \nu(\t)>0$. Indeed, it was shown in \cite[Theorem 3.4]{SCIKVPS23} that the uniform coercivity of $a_\t$ on the orthogonal complement to its kernel is equivalent to the kernel being continuous with respect to $\t$. We demonstrated in Theorem \ref{lem:spcom1} that $V(\t)$ was Lipschitz continuous in $\t$, thus clearly $\mathcal{V}(\t)$ is Lipschitz in $\t$ and consequently Theorem 3.2 of \cite{SCIKVPS23} holds, which, in this setting, implies:
\begin{equation*}
%	\label{16624b2}
a_\t[u_{\ep,\t} - v_{\ep,\t}] +b_\t[u_{\ep,\t} - v_{\ep,\t}] \le \nu_\star^{-2} \ep^2 \| F\|_{H^{-1}(\square;\CC^2)}.   
\end{equation*}
That is \eqref{15.6.24e8} holds and the proof of Lemma \ref{thm.resolventasymp} is complete. 

\begin{proof}[Proof of Proposition \ref{relboundb}]
	By \eqref{epformlowerbound} and \eqref{16.5.24e10}, it is sufficient to prove 
	\begin{equation}\label{15.6.24.e5}
		\left| b_\ep(\phi,\psi) - b(\phi,\psi) \right| \le \ep  L_0  \| \nabla \phi \|_{L^2(\square;\CC^2)} \| \nabla \psi \|_{L^2(\square;\CC^2)}\quad \forall \phi,\psi \in H^1(\square;\CC^2),
	\end{equation}
	for some $L_0>0$. Now, from the inequalities
	\[
|	\sqrt{1-\ep} - 1| \le \ep, \quad \text{and} \quad |	\sqrt{1-\ep} - (1-\tfrac{\ep}{2})| \le \frac	{\ep^2}{2}, \quad \ep \in (0,1), 
	\]
we find (cf. \eqref{prebform} and \eqref{bepform})  that
\begin{flalign*}
&	\left| b_\ep(\phi,\psi) - b(\phi,\psi) \right|   \le  \frac{\ep}{2}\int_{\square \backslash R} | \nabla \phi_1^\perp \cdot \overline{\nabla \psi_2}+\nabla \phi_2 \cdot \overline{\nabla^\perp \psi_1}| + \\
&+\frac{\ep}{\gamma(\gamma+\ep)}\int_R |\nabla \phi_1 \cdot \overline{\nabla \psi_1} +\epsilon_0\nabla \phi_2 \cdot \overline{\nabla \psi_2} | + \frac{(\gamma+1)\ep}{\gamma(\gamma+\ep)}\int_{R} | \nabla \phi_1^\perp \cdot \overline{\nabla \psi_2}+\nabla \phi_2 \cdot \overline{\nabla^\perp \psi_1}| \\
& \le 
% \frac{\ep}{2}\int_{\square \backslash C} | \nabla \phi| |\nabla \psi| +\frac{\ep}{\gamma^2}\int_C \epsilon_0 |\nabla \phi||\nabla \psi| + \frac{(\gamma+1)\ep}{\gamma^2}\int_{C} |\nabla \phi||\nabla \psi| \\
 \frac{\ep}{2}\int_{\square \backslash R} | \nabla \phi| |\nabla \psi| +\frac{2\epsilon_0\ep}{\gamma^2}\int_R |\nabla \phi||\nabla \psi|,
\end{flalign*}
where the last inequality followed from $ \epsilon_0=\gamma+1>1$ and the Cauchy-Schwarz inequality. Finally, by an application of H\"{o}lder's inequality,  \eqref{15.6.24.e5} holds for $L_0 =\max\{ \frac{1}{2} ,2\epsilon_0\gamma^{-2} \}$. 
\end{proof}

\subsection{Proof of Theorem \ref{thmArrowAsymptotics}} \label{App:Arrow}
 Throughout,  for a set $\mathcal{O}$,   $\langle \cdot, \cdot \rangle_\mathcal{O}$ and $\| \cdot \|_\mathcal{O}$ will denote the standard $L^2(\mathcal{O})$ inner product and norm respectively;  $\chi_{\mathcal{O}}$ shall denote the operation that extends a function by zero outside of $\mathcal{O}$. $B_\alpha$ and $S_\alpha$ will denote the ball and circle, respectively, of radius $\alpha$ centred at the origin. \\

 Fix $\theta \neq 0$, $0<\delta<\pi$. For $i=1,2$, consider $f_i \in L^2(R_\delta)$ where $R_\delta: = \delta \Omega$ and $\Omega \subset B_1$. Recall $ A_\t f_i = u_i \in H^1_\theta(\square)$,  are the solutions to
\[
-\Delta u_i = \chi_{R_\delta}f_i \quad \text{in $\square$}.
\]
Let us introduce the solutions $v_i$ to 
\begin{equation*}
	\begin{aligned}
-\Delta v_i = \chi_{R_\delta} f_i \quad \text{in $\RR^2$ with $v_i(x) = - \frac{d_i}{2\pi}\ln{|x|} + o(1)$ as $|x| \rightarrow + \infty$.}
\end{aligned}
\end{equation*}
Clearly, $d_i = \langle  f_i, 1 \rangle_{R_\delta}$.  

Furthermore, we shall denote $v_\i = A_\delta f_i$. Notice, $A_\delta$ is non-negative and self-adjoint.
%\[
%\langle f, A_\delta f \rangle_{B_R}  = \langle -\Delta p, p \rangle_{B_R} = - \langle \partial_r p, p \rangle_{B_R}  + \langle \nabla p, \nabla p \rangle_{B_R} 
%\]
%\[
%\langle f, A_\delta f \rangle_{\RR^2} = \lim_{R\rightarrow +\infty} \langle f, A_\delta f \rangle_{B_R} 
%\]

Let us introduce the non-negative sesquilinear forms

\begin{equation}\label{20.6.24e3}
	\begin{aligned}
	{b}^{(\delta)}(f, \tilde{f}) : = \gamma^{-1} \int_{R_\delta}  f \cdot \overline{ \tilde{f}}- \int_{R_\delta}\left[ \Big((A_\delta f_1)_{,22} + (A_\delta f_2)_{,12}\Big) \overline{\tilde{f}_1} + \Big((A_\delta f_1)_{,21} + (A_\delta f_2)_{,11}\Big) \overline{\tilde{f}_2} \right],
	\end{aligned}
\end{equation}
and
\begin{multline}\label{cdform}
c_{\delta,\theta}(f,\tilde{f}) := g_\t(0)  \Big(  \langle f_1 ,1\rangle_{R_\delta}\overline{\langle \tilde{f}_1,1 \rangle_{R_\delta}}   + \langle f_2 ,1\rangle_{R_\delta}\overline{\langle \tilde{f}_2,1 \rangle_{R_\delta}} \ \Big)\\+ \int_{R_\delta} \Big( f _1 \overline{A_\delta\tilde{f_1}} +  f_2 \overline{A_\delta\tilde{f}_2} \Big) + \gamma \int_{R_\delta} \Big((A_\delta f_1)_{,2} +(A_\delta f_2)_{,1}\Big) \overline{\Big((A_\delta \tilde{f}_1)_{,2} +(A_\delta \tilde{f}_2)_{,1}\Big)}.
\end{multline}
where we recall $g_\t$, the regular part of the quasi-periodic Green's function of the Laplacian (cf. \eqref{regG}). We shall see below, precisely \eqref{19.3.24e2}, that
\begin{equation}\label{gbound}
g_\t(0) \ge \frac{1}{2\pi}\ln{\pi};
\end{equation}
%this shows that $c_{\delta,\theta}$ is non-negative. 

To prove Theorem \ref{thmArrowAsymptotics}, it is sufficient to prove the following result:

\begin{lem}\label{lem.prob1}
\[
\left| \frac{b_\t[f]}{c_\t[f]} - \frac{b^{(\delta)}[f]}{c_{\delta,\theta}[f]}   \right| \le C_1 \delta \left( \frac{b_\t[f]}{c_\t[f]}  +\frac{b^{(\delta)}[f]}{c_{\delta,\theta}[f]}\right), \qquad \forall f \in L^2(R_\delta;\CC^2),
\]
for some $C_1< 1+3 \gamma$.
\end{lem}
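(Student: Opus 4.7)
The plan is to compare the forms $b_\t, c_\t$ (built from the $\t$-quasi-periodic solution operator $A_\t$) with $b^{(\delta)}, c_{\delta,\t}$ (built from the free-space $A_\delta$) via the decomposition
\[
A_\t f(x) = A_\delta f(x) + R_\t^{(\delta)} f(x), \qquad R_\t^{(\delta)} f(x) := \int_{R_\delta} g_\t(x-y)\, f(y)\, dy,
\]
where $g_\t := G_\t - G_0$ is the smooth difference between the $\t$-quasi-periodic and the free-space Green's functions. Since $R_\delta \subset B_\delta$ with $\delta<\pi$ lies in a neighbourhood of the origin on which $g_\t$ is smooth, Taylor expansion of $g_\t$ about $0$ yields, for $x \in R_\delta$,
\[
R_\t^{(\delta)} f(x) = g_\t(0)\, d_f + \nabla g_\t(0)\cdot(x\, d_f - m_f) + O\!\big(\delta^2\|f\|_{L^1(R_\delta)}\big),
\]
with $d_f := \int f$, $m_f := \int y f(y)\,dy$, and analogous bounds $|\partial^\alpha R_\t^{(\delta)} f(x)| \le C\|f\|_{L^1(R_\delta)}$ for $|\alpha|\ge 1$.

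Writing $b_\t = b^{(\delta)} + \Delta_b$ and $c_\t = c_{\delta,\t} + \Delta_c$, the algebraic identity
\[
\frac{b_\t[f]}{c_\t[f]} - \frac{b^{(\delta)}[f]}{c_{\delta,\t}[f]} \;=\; \frac{\Delta_b[f]\, c_{\delta,\t}[f] - b^{(\delta)}[f]\,\Delta_c[f]}{c_\t[f]\,c_{\delta,\t}[f]}
\]
reduces the lemma to the relative bounds $|\Delta_b[f]| \le C\delta\, b^{(\delta)}[f]$ and $|\Delta_c[f]| \le C\delta\, c_{\delta,\t}[f]$. The first is the easy part: $b_\t - b^{(\delta)}$ involves only second-derivative terms of $R_\t^{(\delta)} f$, so by the derivative estimates above, $|\Delta_b[f]| \le C\delta^2\|f\|_{L^2(R_\delta)}^2 \le C\gamma\delta^2\, b^{(\delta)}[f]$, using the trivial lower bound $b^{(\delta)}[f] \ge \gamma^{-1}\|f\|^2$ coming from the first summand of \eqref{20.6.24e3}.

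The crucial step is the bound on $\Delta_c$. Splitting
\[
\Delta_c[f] = \Big(\!\int f\cdot\overline{R_\t^{(\delta)} f} - g_\t(0)|d_f|^2\Big) + \gamma\Big(2\,\mathrm{Re}\!\int P\,\overline{Q} + \|Q\|_{L^2(R_\delta)}^2\Big),
\]
with $P = (A_\delta f_1)_{,2} + (A_\delta f_2)_{,1}$ and $Q$ its analogue for $R_\t^{(\delta)} f$, the key observation is that the explicit $g_\t(0)|d_f|^2$ summand of $c_{\delta,\t}$ in \eqref{cdform} is designed precisely to absorb the zeroth-order Taylor contribution of $R_\t^{(\delta)} f$. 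The surviving first-order piece, $\overline{\nabla g_\t(0)}\cdot(\overline{d_f}m_f - \overline{m_f}d_f)$, has size $O(\delta\,|d_f|\,|m_f|) = O(\delta\cdot g_\t(0)|d_f|^2)$ once the lower bound \eqref{gbound} on $g_\t(0)$ and $|m_f|\le \delta\|f\|_{L^1}$ are used; this contribution is thus $\le C\delta\, c_{\delta,\t}[f]$. The cross term is handled by AM--GM, $2\gamma|\!\int\! P\,\overline Q| \le \delta\gamma\|P\|^2 + \delta^{-1}\gamma\|Q\|^2$, with $\gamma\|P\|^2 \le c_{\delta,\t}[f]$ absorbing the first piece into $\delta\, c_{\delta,\t}[f]$. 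The main technical obstacle is the residual terms (of absolute size $O(\gamma\delta^3\|f\|^2)$ coming from $\delta^{-1}\gamma\|Q\|^2$ and from the quadratic Taylor remainder), which must be controlled \emph{relative} to $c_{\delta,\t}[f]$; since $c_{\delta,\t}[f]/\|f\|^2$ can be arbitrarily small for highly oscillatory $f$, no absolute lower bound on $c_{\delta,\t}$ in terms of $\|f\|_{L^2}^2$ suffices. One must instead exploit the fact that both $R_\t^{(\delta)} f$ and $A_\delta f$ are smoothing operators applied to the same $f$, so the relevant moments $m_f$ and the operator-norm bounds for $Q$ are controlled by $\int f\overline{A_\delta f}$ and $\gamma\|P\|^2$ (both ingredients of $c_{\delta,\t}$) with $O(\delta^2)$ constants. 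The detailed moment-style bookkeeping of these relative estimates is the heart of the proof, and produces the explicit constant $C_1 < 1 + 3\gamma$ stated in the lemma.
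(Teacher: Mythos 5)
Your high-level skeleton matches the paper's: write $b_\t = b^{(\delta)} + \Delta_b$, $c_\t = c_{\delta,\t} + \Delta_c$, reduce via the algebraic identity to the relative bounds $|\Delta_b[f]| \lesssim \delta\, b^{(\delta)}[f]$ and $|\Delta_c[f]| \lesssim \delta\, c_{\delta,\t}[f]$, and observe that the $g_\t(0)|d_f|^2$ term in $c_{\delta,\t}$ is designed to absorb the zeroth-order contribution of $w := A_\t f - A_\delta f$. You also correctly identify that the crux is a \emph{relative} bound on $\Delta_c$ and that $c_{\delta,\t}[f]/\|f\|^2$ has no lower bound. But the proposal does not actually supply that bound, and the one intermediate estimate you do write down is false: the claim that the first-order Taylor piece is $O(\delta\, |d_f|\, |m_f|) = O(\delta\, g_\t(0)|d_f|^2)$ has no basis, since $|m_f|$ is not controlled by $|d_f|$ (e.g. $d_f = 0$, $m_f \neq 0$). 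More fundamentally, Taylor-expanding $g_\t$ only ever produces absolute bounds of the form $C\delta^k\|f\|^2$, and since $c_{\delta,\t}[f]/\|f\|^2 \to 0$ for oscillatory $f$ these can never yield $\lesssim \delta\, c_{\delta,\t}[f]$. The ``moment-style bookkeeping'' you invoke is not a calculation you or anyone can do along this route.

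The paper closes this gap by a genuinely different mechanism (Lemma \ref{mainestimate}). Rather than Taylor-expanding $g_\t$, it uses that $w = u - v$ is \emph{harmonic} in $\square$, so the interior estimate \eqref{harm1}, $\|\nabla h\|_{B_\delta} \le (\delta/\alpha)\|\nabla h\|_{B_\alpha}$, produces the small factor $\delta/\alpha$. The key second ingredient is the Green's-identity computation \eqref{5.3.24e1}--\eqref{27.3.24e1}, which shows
\[
\|\nabla v_j\|^2_{B_\alpha} + \|\nabla w_j\|^2_{B_\alpha} \;\le\; 2\big(\langle f_j, v_j\rangle + |d_j|^2\, g_\t(0)\big),
\]
i.e. both gradients are controlled by exactly the quantity appearing in $c_{\delta,\t}$, not by $\|f\|^2$. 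Combining these two facts gives the central estimate
\[
\big|\langle f_i, u_j\rangle - \langle f_i, v_j\rangle - g_\t(0)\, d_i\overline{d_j}\big| \;\le\; 3\tfrac{\delta}{\alpha}\,\sqrt{\langle f_i, v_i\rangle + g_\t(0)|d_i|^2}\,\sqrt{\langle f_j, v_j\rangle + g_\t(0)|d_j|^2},
\]
and similar estimates on first and second derivatives (Lemma \ref{lem.denom1}), from which the relative bounds on $\Delta_b$ and $\Delta_c$ follow directly. This harmonic-function-plus-Green's-identity argument, not moment bookkeeping, is the heart of the proof and is what your proposal is missing.
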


Indeed, Lemma \ref{lem.prob1} implies 
	\[
| \lambda^{(\delta)}_n(\theta)  - \nu^{(\delta)}_n(\theta) | \le C_1\delta \Big( \lambda^{(\delta)}_n(\theta)  + \Lambda^{(\delta)}_n(\theta) \Big),
\]
for
\[
\nu^{(\delta)}_n(\theta) : =  \min_{\substack{V \subset L^2(\Omega;\CC^2), \\ \dim{V} = n}}\frac{b^{(\delta)}[f]}{c_{\delta,\theta}[f]}.
\]

Furthermore, as $R_\delta = \delta \Omega$, then by the rescaling $x = \delta y$ for $y \in \Omega$, we find
\[
\Lambda^{(\delta)}_n(\t)  =  \delta^{-2} \mu^{(\delta)}_n(\t),
\]
where $\mu^{(\delta)}_n(\t)$ is given in \eqref{minmax2}. That is, Theorem \ref{thmArrowAsymptotics} holds.

For ease of exposition we shall use, henceforth, $\langle \cdot, \cdot \rangle$ to denote  $\langle \cdot, \cdot \rangle_{R_\delta}$. Let us prove Lemma \ref{lem.prob1}. The following, key estimate, is instrumental to the proof: 
\begin{lem}\label{mainestimate} For any $ \alpha \in(\delta,\pi)$, one has
\[
\big| \langle f_i , u_j \rangle - \langle f_i, v_j \rangle - g_\t(0) d_i\overline{d_j} \big| \le 3 \frac{\delta}{\alpha} \sqrt{ \langle f_i, v_i \rangle + g_\t(0) |d_i|^2} \sqrt{\langle f_j,v_j \rangle + g_\t(0) |d_j|^2}.
\]
\end{lem}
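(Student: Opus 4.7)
The plan is to express the LHS as a bilinear form against the smooth regular part of the $\theta$-quasi-periodic Green's function, extract the $\delta/\alpha$ factor from a harmonic-function estimate on $B_\alpha$, and upgrade the resulting crude $L^1$-bound to the quadratic energy bound via Green's identity on $B_\alpha$.

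First, I would decompose the $\theta$-quasi-periodic Green's function as $G_\theta(x, y) = \Phi(x - y) + r(x, y)$, where $\Phi(z) := -(2\pi)^{-1}\ln|z|$ and the regular part $r$ is smooth on $\square \times \square$, harmonic in each variable separately (since both $G_\theta$ and $\Phi$ satisfy $-\Delta = \delta(x - y)$ in their respective senses), symmetric in its arguments, and satisfies $r(x, 0) = g_\theta(x)$, so that $r(0, 0) = g_\theta(0)$. The representations of $u_j = A_\theta f_j$ and $v_j = A_\delta f_j$ give $u_j(x) - v_j(x) = \int_{R_\delta} r(x, y) f_j(y)\, dy$, and (using $g_\theta(0) \in \RR$ from \eqref{gbound}) the LHS of the lemma rewrites as the bilinear form
\[
\iint_{R_\delta \times R_\delta} f_i(x) \bigl[\overline{r(x, y)} - g_\theta(0)\bigr] \overline{f_j(y)}\, dx\, dy.
\]

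Next, I would extract the $\delta/\alpha$ factor using that $r(\cdot, y)$ and $r(x, \cdot)$ are harmonic on $B_\alpha \supset R_\delta$. Writing the Poisson kernel of $B_\alpha$ as $P_\alpha(x, \xi) := (2\pi\alpha)^{-1}(\alpha^2 - |x|^2)/|x - \xi|^2$, the elementary estimate $|P_\alpha(x, \xi) - (2\pi\alpha)^{-1}| \le C|x|/\alpha^2$ for $|x| \le \delta < \alpha/2$ and $\xi \in S_\alpha$ combined with the Poisson representation gives $|r(x, y) - g_\theta(0)| \le C\delta/\alpha$ uniformly on $R_\delta \times R_\delta$. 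This immediately yields the crude pointwise bound $|\mathrm{LHS}| \le (C\delta/\alpha)\|f_i\|_{L^1(R_\delta)}\|f_j\|_{L^1(R_\delta)}$. This is however insufficient, because $\|f\|_{L^1}^2$ can be much larger than $E := \langle f, v\rangle + g_\theta(0)|d|^2$ for $f$ with cancellations.

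The hard part will be upgrading the crude bound to one controlled by $\sqrt{E_i E_j}$. My plan is to bypass the pointwise kernel route and work directly at the level of Green's identity. Since $u_j - v_j$ is harmonic in $\square$ and $\int_{S_\alpha}\partial_n v_i\, d\sigma = -d_i$ by the divergence theorem on $B_\alpha$, an integration by parts recasts the LHS as the boundary integral
\[
\int_{S_\alpha}\Bigl[v_i\, \overline{\partial_n(u_j - v_j)} - \bigl(\overline{u_j - v_j} - \overline{d_j}\, g_\theta(0)\bigr)\partial_n v_i\Bigr]\, d\sigma,
\]
where the factor $\overline{u_j - v_j} - \overline{d_j}\, g_\theta(0)$ is small on $S_\alpha$ because $u_j - v_j \approx d_j g_\theta$ there (with $g_\theta(\xi) = g_\theta(0) + O(\alpha)$) and the remainder is of size $O(\delta)\|f_j\|_{L^1}$ in $L^\infty(S_\alpha)$ by the smoothness of $r$. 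A careful Cauchy-Schwarz on $S_\alpha$, exploiting the logarithmic asymptotics $v_i(\xi) = -\frac{d_i}{2\pi}\ln|\xi| + O(\delta/|\xi|)\|f_i\|_{L^1}$ and $\partial_n v_i(\xi) = -\frac{d_i}{2\pi|\xi|} + O(\delta/|\xi|^2)\|f_i\|_{L^1}$ valid on $S_\alpha$, together with the quadratic identity $\langle f_i, v_i\rangle = -\frac{1}{2\pi}\iint f_i(x)\ln|x-y|\overline{f_i(y)}\, dx\, dy$ that identifies $E_i$ with a logarithmic energy on $R_\delta$, should deliver the required $\sqrt{E_i E_j}$ control. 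The clean constant $3$ most likely emerges from splitting the boundary integrand into three pieces and optimising the resulting Cauchy-Schwarz constants.
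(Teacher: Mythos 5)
Your opening observations are correct and your Green's-identity rewrite on $S_\alpha$,
\[
\langle f_i,u_j\rangle-\langle f_i,v_j\rangle-g_\t(0)d_i\overline{d_j}
=\int_{S_\alpha}\Bigl[v_i\,\overline{\partial_n w_j}-\bigl(\overline{w_j}-\overline{d_j}g_\t(0)\bigr)\partial_n v_i\Bigr],
\]
with $w_j:=u_j-v_j$, is a valid and natural starting point (the paper in effect derives an equivalent identity by integrating on $B_\delta$ and then passing to $B_\alpha$). The gap is in the ``hard part.'' Your plan re-introduces precisely the quantity you already flagged as useless: the remainder terms in your pointwise expansions $v_i(\xi)=-\tfrac{d_i}{2\pi}\ln|\xi|+O(\delta/|\xi|)\|f_i\|_{L^1}$ and $\partial_n v_i(\xi)=-\tfrac{d_i}{2\pi|\xi|}+O(\delta/|\xi|^2)\|f_i\|_{L^1}$ are stated with $\|f_i\|_{L^1(R_\delta)}$ constants, and after Cauchy--Schwarz on $S_\alpha$ these will produce a bound involving $\|f_i\|_{L^1}\|f_j\|_{L^1}$, which is \emph{not} controlled by $\sqrt{E_iE_j}$. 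Mentioning the logarithmic-energy identity $\langle f_i,v_i\rangle=-\tfrac{1}{2\pi}\iint f_i(x)\ln|x-y|\overline{f_i(y)}$ does not by itself bridge this: one still needs the higher-multipole contributions on $S_\alpha$ measured in $L^2$, not $L^\infty$, and those $L^2$ multipole sums must be tied to the energy.

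Concretely, the missing inputs, which the paper proves separately and which your proposal never produces, are: (a) $\|\nabla v_i\|_{B_\alpha}^2\le\langle f_i,v_i\rangle+|d_i|^2(2\pi)^{-1}\ln\alpha$ (the paper's \eqref{19.3.24e1}); (b) $\|\nabla g_\t\|_{B_\alpha}^2\le g_\t(0)-(2\pi)^{-1}\ln\alpha$ (the paper's \eqref{19.3.24e2}, a non-obvious identity exploiting quasi-periodicity of $G_\t$ on $\square$); and (c) $\|\nabla v_j\|_{B_\alpha}^2+\|\nabla w_j\|_{B_\alpha}^2\le 2\bigl(\langle f_j,v_j\rangle+|d_j|^2 g_\t(0)\bigr)$ (the paper's \eqref{27.3.24e1}, which uses $\langle-\Delta u_j,u_j\rangle_{\square\setminus B_\alpha}=0$ via quasi-periodicity). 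Only with these does the multipole energy on $S_\alpha$ reduce to $E_i$ and $E_j$. The paper's route avoids your boundary integral entirely: integrating by parts on $B_\delta$ and applying a harmonic-function identity \eqref{harm2} twice, it reduces the LHS to $\bigl|2\langle\nabla v_i,\nabla w_j\rangle_{B_\delta}+2d_i\langle\nabla g_\t,\nabla v_j\rangle_{B_\delta}\bigr|$; the factor $\delta/\alpha$ then drops out of the harmonic-growth estimate $\|\nabla h\|_{B_\delta}\le(\delta/\alpha)\|\nabla h\|_{B_\alpha}$ applied to $w_j$ and $g_\t$, and (a)--(c) close the argument, with the constant $3$ being $2\sqrt{2}$ rounded up. Without supplying (a)--(c) and an $L^2(S_\alpha)$ multipole estimate, your boundary-integral plan does not close.
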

\begin{rem}
Notice $ \langle f_i, v_i \rangle + g_\t(0) |d_i|^2 \ge 0$ since $A_\delta$ is non-negative and $g_\t(0)$ is positive (see \eqref{gbound}).
\end{rem}
To prove this key estimate we shall use the following inequalities which directly follow from basic properties of harmonic functions:
\begin{prop}
Let $v_i$ and $g_\t$ be as above, $ \alpha \in [\delta, \pi)$ and  $h$ an harmonic function in $B_\alpha$. Then

\begin{align}
\label{harm1}
\langle \nabla h, \nabla h \rangle_{B_\delta}  & \le \left( \frac{\delta}{\alpha} \right)^2\langle \nabla h, \nabla h \rangle_{B_\alpha}; \\
\label{harm2}
\langle \partial_r v_i, h \rangle_{S_\beta} & = -  	\langle v_i, \partial_r h \rangle_{S_\beta} - d_i \overline{h(0)}; \quad \text{for $\delta \le \beta \le \alpha$;} \\
	\| \nabla v_i \|_{L^2(B_\alpha)} ^2 &\le \langle f_i,  v_i \rangle+ |d_i|^2 \frac{1}{2\pi} \ln{\alpha}; \label{19.3.24e1}\\
	\| \nabla g_\t \|_{L^2(B_\alpha)} ^2 &\le   g_\t(0)  - \frac{1}{2\pi} \ln{\alpha}. \label{19.3.24e2} 	
\end{align}
\end{prop}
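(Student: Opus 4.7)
All four estimates concern harmonic functions on disks, and my plan would be to combine polar-coordinate Fourier expansions with Green's identities. Any complex-valued function harmonic in $B_\alpha$ admits the interior expansion $h(r,\phi) = \sum_{n\ge 0}r^n(a_n\cos n\phi + b_n\sin n\phi)$, while $v_i$---harmonic in $\RR^2\setminus\overline{B_\delta}$ with $v_i(x) = -\tfrac{d_i}{2\pi}\ln|x| + o(1)$ at infinity---admits the exterior expansion $v_i(r,\phi) = -\tfrac{d_i}{2\pi}\ln r + \sum_{n\ge 1}r^{-n}(\gamma_n\cos n\phi + \delta_n\sin n\phi)$ for $r>\delta$, the $o(1)$ hypothesis killing the would-be constant mode.

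For \eqref{harm1}, substituting the interior expansion into $\|\nabla h\|_{B_\rho}^2$ and using mode orthogonality gives a series $\tfrac{\pi}{2}\sum_{n\ge 1}n(|a_n|^2+|b_n|^2)\rho^{2n}$, from which the inequality follows termwise via $(\delta/\alpha)^{2n}\le(\delta/\alpha)^2$ for $n\ge 1$. For \eqref{harm2}, substituting both expansions into $\int_{S_\beta}\partial_r v_i\,\overline{h}$ and $\int_{S_\beta}v_i\,\overline{\partial_r h}$, orthogonality makes the higher-mode contributions cancel pairwise between the two integrals, and only the interaction between the $\ln r$ mode of $v_i$ and the constant mode $a_0 = h(0)$ survives, producing $-d_i\overline{h(0)}$. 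For \eqref{19.3.24e1}, integration by parts on $B_\alpha$ yields $\|\nabla v_i\|_{B_\alpha}^2 = \langle f_i, v_i\rangle + \int_{S_\alpha}\partial_r v_i\,\overline{v_i}$, and evaluating the boundary term via the exterior expansion produces $\tfrac{|d_i|^2}{2\pi}\ln\alpha - \pi\sum_{n\ge 1}n\alpha^{-2n}(|\gamma_n|^2+|\delta_n|^2)$, whence the inequality follows upon dropping the nonnegative sum.

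I expect \eqref{19.3.24e2} to be the main obstacle, since it requires combining Green's identities in two different regions with the Dirac-delta structure of $G$. My plan is to decompose $g_\t = G + L$ with $L(x):= \tfrac{1}{2\pi}\ln|x|$. Green's identity applied to $G$ on $\square\setminus\overline{B_\alpha}$ produces $\|\nabla G\|_{\square\setminus B_\alpha}^2 = -\int_{S_\alpha}\partial_r G\,\overline{G}$, since $\Delta G = 0$ in that region and the $\partial\square$ contribution cancels by $\t$-quasi-periodicity (opposite sides carry opposite outward normals while $\partial_\nu G\,\overline{G}$ picks up $|e^{\i\t_j}|^2 = 1$). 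Green's identity applied to the harmonic function $g_\t$ on $B_\alpha$ gives $\|\nabla g_\t\|_{B_\alpha}^2 = \int_{S_\alpha}\partial_r g_\t\,\overline{g_\t}$. Expanding the right-hand side via $g_\t = G + L$ and using the three ingredients---(i) $\int_{S_\alpha}\partial_r G = -1$, from $-\Delta G = \delta_0$ and the divergence theorem; (ii) the mean-value identity $(2\pi\alpha)^{-1}\int_{S_\alpha} G = g_\t(0) - \tfrac{1}{2\pi}\ln\alpha$, obtained by writing $G = g_\t - L$ and applying the mean-value property to the harmonic function $g_\t$; (iii) the constant value $L|_{S_\alpha} = \tfrac{1}{2\pi}\ln\alpha$---the four cross-terms collapse into
\[
\|\nabla g_\t\|_{B_\alpha}^2 \;=\; g_\t(0) - \tfrac{1}{2\pi}\ln\alpha - \|\nabla G\|_{\square\setminus B_\alpha}^2,
\]
after which \eqref{19.3.24e2} follows by discarding the nonnegative last term. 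The delicate point is the bookkeeping: matching up the $\ln\alpha$-times-constant, Dirac-flux and mean-value contributions with the correct signs so that only the advertised combination survives (this reality check also forces $g_\t(0) \in \RR$).
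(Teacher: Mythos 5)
Your proposal is correct and takes essentially the same route as the paper: polar Fourier expansions for the interior/exterior harmonic functions combined with Green's identities. The only cosmetic differences are that you verify \eqref{harm2} by direct computation at $S_\beta$ rather than first passing to $S_\alpha$ via Green's second identity on the annulus, and you organize \eqref{19.3.24e2} by expanding $g_\theta = G + L$ in $\int_{S_\alpha}\partial_r g_\theta\,\overline{g_\theta}$ rather than expanding $G_\theta = -L + g_\theta$ in $\langle G_\theta,\partial_r G_\theta\rangle_{S_\alpha}$ as the paper does; both bookkeepings collapse to the same identity (and both indeed force $g_\theta(0)\in\RR$, as you note).
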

\begin{proof}
We  have the following polar representations:
\begin{equation}\label{15.4.24e1}
v_i(r,\theta) = - \frac{d_i}{2\pi} \ln{r} + \sum_{n=1}^\infty c_n r^{-n} e^{\mathrm{i}  n \theta}, \qquad \text{in $\RR^2 \backslash B_\delta$},
\end{equation}
and 
 \begin{equation}\label{15.4.24e2}
h(r,\t) =h(0) +  \sum_{n=1}^\infty h_n r^{n} e^{\mathrm{i}  n \theta}, \qquad \text{in $ B_\alpha$},
 \end{equation}
 for some constants $c_n, h_n$. 
 
 \textit{Proof of \eqref{harm1}.} This clearly follows from \eqref{15.4.24e2}. Indeed, direct calculation
 \[
 \langle \nabla h, \nabla h \rangle_{B_\beta} = \langle \partial_r h, h \rangle_{S_\beta} =   2\pi\sum_{n=1}^\infty n \beta^{2n} |h_n|^2 \qquad \text{for $\delta \le \beta \le \alpha$.}
 \]
 \textit{Proof of \eqref{harm2}.} Since $v$ and $h$ are both harmonic in the annulus $B_\alpha \backslash B_\beta$, Green's second identity gives 
 \[
 \langle \partial_r v_i, h \rangle_{S_\beta}  -  	\langle v_i, \partial_r h \rangle_{S_\beta} =  \langle \partial_r v_i, h \rangle_{S_\alpha}  -  	\langle v_i, \partial_r h \rangle_{S_\alpha}.
 \]
 Now, by \eqref{15.4.24e1} and \eqref{15.4.24e2}, direct calculation gives 
 \[
 \langle \partial_r v_i, h \rangle_{S_\alpha} = - d_i\overline{h(0)} -  2\pi \sum_{n=1}^\infty n c_n \overline{h_n} 
 \]
 and 
  \[
 \langle  v_i, \partial_r  h \rangle_{S_\alpha} = 2\pi \sum_{n=1}^\infty n c_n \overline{h_n}.
 \]
 The above assertions imply \eqref{harm2}.

 \textit{Proof of \eqref{19.3.24e1}.} From \eqref{15.4.24e1}, one computes 
 \[
 \langle  \partial_r v_i, v_i \rangle_{S_\alpha} = |d_i|^2 \frac{\ln{\alpha}}{2\pi} - 2\pi \sum_{n=1}^\infty n |c_n|^2 \alpha^{-2n}  \le  |d_i|^2 \frac{\ln{\alpha}}{2\pi}
 \]
 Then
 \[
 \langle f_i,v_i\rangle = \langle \chi_{R_\delta}f_i,v_i\rangle_{B_\alpha} = \langle -\Delta v_i,v_i\rangle_{B_\alpha}  = \langle \nabla v_i, \nabla v_i \rangle_{B_\alpha} - \langle \partial_r v_i, v_i\rangle_{S_\alpha},  
 \]
 and \eqref{19.3.24e1} follows. 
 
 \textit{Proof of \eqref{19.3.24e2}.} By the quasi-periodicity of $G_\t$,  
 \begin{flalign*}
 	0 & = \langle  G_\t, - \Delta G_\t \rangle_{\square \backslash B_\alpha} = \| \nabla G_\t \|_{\square \backslash B_\alpha}^2 + \langle  G_\t,\partial_r G_\t \rangle_{S_\alpha},
 \end{flalign*} 
 and, since $G_\t = - 2\pi^{-1} \ln{r} + g_\t$,  
 \begin{flalign*}
 	\langle G_\t,\partial_r G_\t\rangle_{S_\alpha} &=  \langle   2\pi^{-1} \ln{r},\partial_r 2\pi^{-1} \ln{r} \rangle_{S_\alpha} - \langle   2\pi^{-1} \ln{r}, \partial_r g_\t \rangle_{S_\alpha}-  \langle   g_\t, \partial_r 2\pi^{-1} \ln{r} \rangle_{S_\alpha}+  \langle  g_\t, \partial_r g_\t \rangle_{S_\alpha}.
 \end{flalign*}
 Clearly,
 \[
 \langle   2\pi^{-1} \ln{r}, \partial_r2\pi^{-1} \ln{r} \rangle_{S_\alpha} = 2\pi^{-1} \ln{\alpha},
 \] and, since $g_\t$ is harmonic in $\square$, then 
 \[
 \langle g_\t, \partial_r g_\t \rangle_{S_\alpha} = \| \nabla g_\t\|_{B_\alpha}^2.
 \]
 Furthermore, using representation \eqref{15.4.24e2} (for $h=g_\t$) gives 
 \[
 \langle g_\t,  \partial_r 2\pi^{-1} \ln{r} \rangle_{S_\alpha}=g_\t(0) \quad \text{and} \quad \langle  2\pi^{-1} \ln{r} ,\partial_r g_\t \rangle_{S_\alpha}=0.
 \]
 Therefore, putting the above together gives
 \[
 0 = \| \nabla G_\t \|_{\square \backslash B_\alpha}^2 +  2\pi^{-1} \ln{\alpha} -   g_\t(0) +  \| \nabla g_\t\|_{B_\alpha}^2;
 \]
and   \eqref{19.3.24e2} follows. 
 
\end{proof}
\begin{proof}[Proof of Lemma \ref{mainestimate}] 
The function $w_j := u_j - v_j$ is harmonic in $\square$ and so one has
 \begin{flalign*}
 	\langle f_i , u_j \rangle -  	\langle f_i , v_j \rangle & =  	\langle \chi_{R_\delta}f_i,w_j \rangle_{B_\delta} =  	\langle -\Delta v_i,w_j \rangle_{B_\delta} \\
 	&  =  	\langle \nabla v_i, \nabla w_j  \rangle_{B_\delta}  -   	\langle \partial_r v_i,w_j \rangle_{S_\delta} \\
 	& \stackrel{\eqref{harm2}}{=}  	\langle \nabla v_i, \nabla w_j  \rangle_{B_\delta}  +   	\langle v_i,\partial_r w_j \rangle_{S_\delta} + d_i \overline{w_j(0)} \\
 	& = 2 	\langle \nabla v_i, \nabla w_j  \rangle_{B_\delta}   + d_i \overline{w_j(0)}.
 \end{flalign*}
The last equality is from the divergence theorem and the fact that $w_j$ is harmonic in $B_\delta$. Now,
\begin{flalign}
w_j(0) = u_j(0) - v_j(0)  & =   	\langle \chi_{R_\delta}f_j ,  G_\t \rangle_\square - 	\langle  \chi_{R_\delta}f_j ,  - \tfrac{1}{2 \pi} \ln{r} \rangle_{\mathbb{R}^2}
=   	\langle \chi_{R_\delta}f_j ,  g_\t \rangle_{B_\delta}  = 	\langle - \Delta v_j ,  g_\t \rangle_{B_\delta}  \nonumber  \\
& = 	\langle \nabla v_j,  \nabla g_\t \rangle_{B_\delta} - 	\langle \partial_r v_j ,  g_\t \rangle_{S_\delta}  \nonumber  \\
&\stackrel{\eqref{harm2}}{=}	\langle \nabla v_j,  \nabla g_\t \rangle_{B_\delta} + 	\langle v_j , \partial_r g_\t \rangle_{S_\delta} + d_j \overline{g_\t(0)}   \nonumber \\
& = 2 \langle \nabla v_j,  \nabla g_\t \rangle_{B_\delta} + d_j \overline{g_\t(0)}. \label{15.4.24e3}
\end{flalign}
 Therefore
\begin{flalign*}
| \langle f_i , u_j \rangle - \langle f_i, v_j \rangle - g_\t(0) d_i\overline{d_j}| & = |2 	\langle \nabla v_i, \nabla w_j  \rangle_{B_\delta}  + 2d_i  \langle \nabla g_\t,  \nabla v_j \rangle_{B_\delta}| \\
& \le 2 	\| \nabla v_i\|_{B_\delta}\| \nabla w_j  \|_{B_\delta}  + 2|d_i|  \| \nabla g_\t\|_{B_\delta}  \|\nabla v_j \|_{B_\delta} \\
&\le 2 	\| \nabla v_i\|_{B_\alpha}\| \nabla w_j  \|_{B_\delta}  + 2|d_i|  \| \nabla g_\t\|_{B_\delta}  \|\nabla v_j \|_{B_\alpha} \\
&\stackrel{\eqref{harm1}}{\le} 2 \frac{\delta}{\alpha}\Big(\| \nabla v_i\|_{B_\alpha}\| \nabla w_j  \|_{B_\alpha}  + |d_i|  \| \nabla g_\t\|_{B_\alpha}  \|\nabla v_j \|_{B_\alpha} \Big)\\
&\le 2 \frac{\delta}{\alpha} \sqrt{ \|\nabla v_i \|_{B_\alpha}^2+ |d_i|^2\|\nabla g_\t \|_{B_\alpha}^2}\sqrt{ \|\nabla v_j \|_{B_\alpha}^2+ \|\nabla w_j \|_{B_\alpha}^2}.
%\\ & \le 2 \sqrt{\| \nabla v_i \|^2_{B_\delta} + |d_i|^2\| \nabla g_\t \|^2_{B_\delta}} \sqrt{\| \nabla w_j \|^2_{B_\delta} + \| \nabla v_j \|^2_{B_\delta}}.
\end{flalign*}
Now, by \eqref{19.3.24e1} and \eqref{19.3.24e2} one has 
\begin{equation}\label{15.4.24e4}
\| \nabla v_i \|_{L^2(B_\alpha)} ^2 + |d_i|^2 \| \nabla g_\t \|_{L^2(B_\alpha)} ^2 \le\langle f_i,  v_i \rangle+ |d_i|^2  g_\t(0).	
\end{equation}
Therefore, 
to prove the lemma, it suffices to show 
\begin{equation}
	\| \nabla v_j \|^2_{B_\alpha} + \| \nabla w_j \|^2_{B_\alpha} \le 2 \Big( \langle f_j,v_j\rangle + |d_j|^2 g_\t(0) \Big).\label{27.3.24e1}
\end{equation}

To this end, note  that
 \begin{flalign}
0 &= \langle -\Delta u_j, u_j\rangle_{\square \backslash B_\alpha} = \| \nabla u_j \|^2_{\square \backslash B_\alpha} + \langle \partial_r u_j, u_j\rangle_{S_\alpha} \nonumber \\
& = \| \nabla u_j \|^2_{\square \backslash B_\alpha} + \langle \partial_r v_j, v_j\rangle_{S_\alpha} + \langle \partial_r w_j, w_j\rangle_{S_\alpha} + \langle \partial_r v_j, w_j\rangle_{S_\alpha} + \langle \partial_r w_j,v_j \rangle_{S_\alpha}      \nonumber\\
& = \| \nabla u_j \|^2_{\square \backslash B_\alpha} +  \| \nabla v_j \|^2_{B_\alpha} - \langle f_j,  v_j \rangle + \| \nabla w_j \|^2_{ B_\alpha} +  \langle \partial_r v_j, w_j\rangle_{S_\alpha} + \langle \partial_r w_j,v_j \rangle_{S_\alpha} .  \label{5.3.24e1}
 \end{flalign}
 Now, by \eqref{harm2}
 \[
 \langle \partial_r v_j, w_j\rangle_{S_\alpha} + \langle \partial_r w_j,v_j\rangle_{S_\alpha}  = 2 \mathrm{i}\, \mathrm{Im} \langle  \partial_r w_j, v_j\rangle_{S_\alpha} - d_j \overline{w_j(0)}.  
 \]
 Therefore,  taking the real part of \eqref{5.3.24e1}, and noting that $ \langle f_j,  v_j \rangle$ is real (cf. \eqref{19.3.24e1}), gives
 \[
 0 = \| \nabla u_j \|^2_{\square \backslash B_\alpha} +  \| \nabla v_j \|^2_{B_\alpha} - \langle f_j,  v_j \rangle + \| \nabla w_j \|^2_{ B_\alpha} - \mathrm{Re} (d_j \overline{w_j(0)} );
 \]
 i.e.
 \begin{flalign*}
\| \nabla v_j \|^2_{B_\alpha}  + \| \nabla w_j \|^2_{ B_\alpha} & \le  \langle f_j,  v_j \rangle  + \mathrm{Re} (d_j \overline{w_j(0)} )\nonumber \\
& \stackrel{\eqref{15.4.24e3}}{=} \langle f_j,  v_j \rangle  +  |d_j|^2 g_\t(0) + \mathrm{Re} (2d_j  \langle \nabla g_\t, \nabla v_j \rangle_{B_\delta} ) \nonumber \\
&  \le  \langle f_j,  v_j \rangle  +  |d_j|^2 g_\t(0) + |d_j|^2  \| \nabla g_\t\|_{B_\delta}^2 + \| \nabla v_j \|_{B_\delta}^2  \nonumber \\
& \stackrel{\eqref{15.4.24e4}}{\le} 2 \Big( \langle f_j,  v_j \rangle + |d_j|^2 g_\t(0) \Big). %\label{5.3.24e4}
 \end{flalign*}
That is, \eqref{27.3.24e1} holds and the proof is complete.
\end{proof}
Lemma \ref{mainestimate} implies the following inequalities:
\begin{lem}\label{lem.denom1}
For any $ \alpha \in(\delta,\pi)$, $i,j,k,l \in \{ 1, 2\}$, one has
\begin{align}\label{15.4.24e6}
&\big| \langle u_{i,j} , u_{k,l} \rangle - \langle v_{i,j} , v_{k,l} \rangle \big| \le 2 \sqrt{3}  \frac{\delta}{\alpha} \sqrt{\langle f_i, v_i \rangle + g_\t(0) |d_i|^2  }\sqrt{\langle f_k, v_k \rangle + g_\t(0) |d_k|^2  }; \\ 
\label{15.4.24e7}
&	| \langle u_{k,ij} , \tilde{f}_l \rangle - \langle v_{k,ij} , \tilde{f}_l \rangle  | \le 3 \frac{\delta}{\alpha} \| f_k\|_{L^2(R_\delta)} \| \tilde{f}_l \|_{L^2(R_\delta)}.
\end{align}
\end{lem}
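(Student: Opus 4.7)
The plan is to introduce the auxiliary function $w_i := u_i - v_i$, which is harmonic on all of $\square$ because both $u_i$ and $v_i$ satisfy $-\Delta u = \chi_{R_\delta} f_i$ in $\square$. This is exactly the decomposition already used in the proof of Lemma \ref{mainestimate}, and it places the problem within the scope of the harmonic-function estimates \eqref{harm1}, \eqref{19.3.24e1}, \eqref{27.3.24e1}, and \eqref{15.4.24e4}.

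For the first inequality \eqref{15.4.24e6}, I would expand
\[
\langle u_{i,j}, u_{k,l}\rangle_{R_\delta} - \langle v_{i,j}, v_{k,l}\rangle_{R_\delta} = \langle v_{i,j}, w_{k,l}\rangle_{R_\delta} + \langle w_{i,j}, v_{k,l}\rangle_{R_\delta} + \langle w_{i,j}, w_{k,l}\rangle_{R_\delta},
\]
and bound each term by Cauchy--Schwarz on $R_\delta \subset B_\delta$. The $v$-factors are controlled via
\[
\|\nabla v_j\|_{L^2(B_\delta)} \le \|\nabla v_j\|_{L^2(B_\alpha)} \le \sqrt{\langle f_j, v_j\rangle + |d_j|^2 g_\t(0)},
\]
a consequence of \eqref{15.4.24e4} combined with the lower bound $g_\t(0) \ge (2\pi)^{-1}\ln\pi$ and the hypothesis $\delta<\pi$. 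The $w$-factors are controlled via the harmonic contraction
\[
\|\nabla w_j\|_{L^2(B_\delta)} \le (\delta/\alpha)\|\nabla w_j\|_{L^2(B_\alpha)} \le \sqrt{2}\,(\delta/\alpha)\sqrt{\langle f_j, v_j\rangle + |d_j|^2 g_\t(0)},
\]
which comes from \eqref{harm1} applied to the harmonic $w_j$ coupled with \eqref{27.3.24e1}. Summing the three contributions, and using $(\delta/\alpha)^2 \le \delta/\alpha$ since $\delta<\alpha$, yields the desired bound after careful bookkeeping of the constants.

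For the second inequality \eqref{15.4.24e7}, the same decomposition reduces the difference to $\langle w_{k,ij}, \tilde f_l\rangle_{R_\delta}$, in which $w_{k,ij}$ is harmonic in $\square$ as a second derivative of the harmonic $w_k$. My plan is to substitute $\tilde f_l = -\Delta(A_\delta \tilde f_l)$ in $R_\delta$ and extend the integration to $B_\alpha$, where $A_\delta\tilde f_l$ is harmonic outside $R_\delta \subset B_\delta$. Integration by parts twice, using $\Delta w_{k,ij}=0$ to kill the bulk integral, produces a circle integral on $S_\alpha$ (after an application of Green's second identity across the annulus $B_\alpha\setminus B_\delta$, which is permissible since both integrands are harmonic there). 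Since $w_{k,ij}$ and $A_\delta\tilde f_l$ admit explicit polar Fourier expansions (the former around the origin, the latter as a sum of inverse powers of $r$ plus a logarithmic term), the circle integral can be evaluated mode-by-mode; the $(\delta/\alpha)$ factor emerges from the algebraic decay in the radius ratio, and the $L^2(R_\delta)$ norms of $f_k$ and $\tilde f_l$ reconstitute themselves through elliptic regularity of the Newtonian potential $A_\delta$.

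The main obstacle is the second inequality. Its right-hand side involves the bare $L^2$-norms of $f_k$ and $\tilde f_l$, not the potential-weighted quantity $\sqrt{\langle f_k, v_k\rangle + |d_k|^2 g_\t(0)}$ appearing in \eqref{15.4.24e6}, so \eqref{27.3.24e1} is of no direct use. Obtaining the $(\delta/\alpha)$-contraction for the \emph{second} derivatives of the harmonic remainder $w_k$, with the $L^2$ norms of the data on the right, requires the Fourier/boundary analysis sketched above, which is a quantitative strengthening of the first-derivative estimate \eqref{harm1}.
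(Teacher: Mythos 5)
Your treatment of \eqref{15.4.24e6} matches the paper's: both write $w_i = u_i - v_i$, expand, and combine Cauchy--Schwarz on $B_\delta$ with \eqref{harm1} and \eqref{27.3.24e1}. The paper groups the expansion as $\langle v_{i,j},w_{k,l}\rangle + \langle w_{i,j},u_{k,l}\rangle$ rather than the fully symmetric three-term split you propose, and then uses $\|u_{k,l}\|^2 \le 2(\|v_{k,l}\|^2+\|w_{k,l}\|^2)$, which is what produces the $\sqrt3$; your bookkeeping would give a constant of the same order. That part is fine.

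For \eqref{15.4.24e7}, you correctly identify the obstacle --- the right side is the \emph{bare} $L^2$ norms, so \eqref{27.3.24e1} cannot be used --- but you are missing the key reduction that makes this estimate come out cleanly. The paper does \emph{not} attack the second derivatives of the harmonic remainder $w_k$ directly by a Fourier/boundary-layer refinement of \eqref{harm1}. Instead it works by density with $f_k,\tilde f_l\in C^\infty_0(R_\delta)$ and observes that differentiation commutes with the potential: $u_{k,i}=A_\theta(f_{k,i})$ and $v_{k,i}=A_\delta(f_{k,i})$. A single integration by parts then pushes the $j$-derivative onto the test function, converting $\langle u_{k,ij}-v_{k,ij},\tilde f_l\rangle$ into $\langle A_\theta(f_{k,i})-A_\delta(f_{k,i}),-\tilde f_{l,j}\rangle$, which is exactly the quantity estimated by Lemma~\ref{mainestimate}, and moreover with \emph{zero means} $d=0$ since $\int f_{k,i}=\int\tilde f_{l,j}=0$ for compactly supported data. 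Lemma~\ref{mainestimate} then gives $3(\delta/\alpha)\sqrt{\langle f_{k,i},(A_\delta f_k)_{,i}\rangle}\sqrt{\langle\tilde f_{l,j},(A_\delta\tilde f_l)_{,j}\rangle}$, and one finishes with the elementary observation $\langle g_{,i},(A_\delta g)_{,i}\rangle\le\langle g,g\rangle$. Your proposal instead tries to reprove from scratch, via polar expansions of $w_{k,ij}$ on $S_\alpha$, a decay estimate that Lemma~\ref{mainestimate} already encodes --- and it would still face the problem of converting the resulting boundary-circle quantities into $\|f_k\|_{L^2}\|\tilde f_l\|_{L^2}$ without an extra factor of the Newtonian potential. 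Concretely: the constant mode $w_{k,ij}(0)$ does not contract under $r\mapsto\delta/\alpha$, and bounding it by $\|f_k\|_{L^2(R_\delta)}$ alone (rather than by a potential-weighted norm) is exactly the step your sketch does not supply. The derivative-shuffling trick is the missing idea.
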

\begin{proof}

\text{Proof of \eqref{15.4.24e6}.} As $u_i = v_i + w_i$, then
\[
\langle u_{i,j} , u_{k,l} \rangle    =  \langle v_{i,j} , v_{k,l} \rangle +   \langle v_{i,j} , w_{k,l} \rangle +  \langle w_{i,j} , u_{k,l} \rangle,
\]
so
\begin{flalign*}
| \langle u_{i,j} , u_{k,l} \rangle - \langle v_{i,j} , v_{k,l} \rangle | & \le   | \langle v_{i,j} , w_{k,l} \rangle| + | \langle w_{i,j} , u_{k,l} \rangle| \\
& \le \| v_{i,j}\|_{B_\alpha} \| w_{k,l}\|_{B_\delta} + \| w_{i,j}\|_{B_\delta} \| u_{k,l}\|_{B_\alpha} \\
& \stackrel{\eqref{harm1}}{\le} \frac{\delta}{\alpha}\big(\| v_{i,j}\|_{B_\alpha} \| w_{k,l}\|_{B_\alpha} + \| w_{i,j}\|_{B_\alpha} \| u_{k,l}\|_{B_\alpha} \big)\\
& \le \frac{\delta}{\alpha} \sqrt{ \| v_{i,j}\|_{B_\alpha}^2 +  \| w_{i,j}\|_{B_\alpha}^2} \sqrt{ \| w_{k,j}\|_{B_\alpha}^2+ \| u_{k,j}\|_{B_\alpha}^2} \\
& \le \sqrt{3} \frac{\delta}{\alpha}  \sqrt{ \| v_{i,j}\|_{B_\alpha}^2 +  \| w_{i,j}\|_{B_\alpha}^2} \sqrt{ \| v_{k,j}\|_{B_\alpha}^2+ \| w_{k,j}\|_{B_\alpha}^2},
\end{flalign*}
since $\| u_{k,l} \|^2 \le 2(\|v_{k,l}\|^2 + \|w_{k,l}\|^2)$.
Now, \eqref{27.3.24e1} states
\[
\| \nabla v_{i} \|_{B_\alpha}^2 +\| \nabla w_{i} \|_{B_\alpha}^2 \le 2 \Big( \langle f_i, v_i \rangle + g_\t(0) |d_i|^2  \Big) .
\]
Therefore,
\[
| \langle u_{i,j} , u_{k,l} \rangle - \langle v_{i,j} , v_{k,l} \rangle |  \le 2 \sqrt{3}  \frac{\delta}{\alpha}\sqrt{\langle f_i, v_i \rangle + g_\t(0) |d_i|^2  }\sqrt{\langle f_k, v_k \rangle + g_\t(0) |d_k|^2  }
\].

\textit{Proof of \eqref{15.4.24e7}.} It is enough to prove the result for $f_k, \tilde{f}_l \in C^\infty_0(R_\delta)$. Here,   $u_{k,i} = A_\t (f_{k,i})$  and $v_{k,i} = A_\delta(f_{k,i})$ for $i=1,2$. This observation, and integration by parts, gives
\[
 \langle u_{k,ij},\tilde{f}_l \rangle  =   \langle A_\t (f_{k,i}), -\tilde{f}_{l,,j} \rangle,
\]
and 
\[
\langle v_{k,ij},\tilde{f}_l \rangle  =\langle A_\delta(f_{k,i}), -\tilde{f}_{l,,j} \rangle.
\]
 Now, we shall utilise Lemma \ref{mainestimate} for $f_1 = f_{k,i}$ and $f_2 = -\tilde{f}_{l,,j}$. Notice that, for such $f_i$, one has $d_i=0$.

 Therefore, Lemma \ref{mainestimate} and the above considerations lead to
 \begin{equation*}
| \langle u_{k,ij},\tilde{f}_l \rangle - \langle v_{k,ij}, \tilde{f}_l \rangle  | \le 3 \frac{\delta}{\alpha} \sqrt{\langle f_{k,i}, (A_\delta f_{k})_{,i}\rangle} \sqrt{\langle \tilde{f}_{l,,j}, (A_\delta \tilde{f}_l)_{,j}\rangle}.
 \end{equation*}
 It remains to note
 \[
 \langle g_{,i}, (A_\delta g)_{,i}\rangle \le \langle g, g\rangle, \quad g \in C^\infty_0(R_\delta), \ i=1,2.
 \]
Indeed, since $ \langle g_{,i}, (A_\delta g)_{,i}\rangle \ge 0$, then 
\[
\langle g_{,i}, (A_\delta g)_{,i}\rangle \le  \langle g_{,1}, (A_\delta g)_{,1}\rangle +  \langle g_{,2}, (A_\delta g)_{,2}\rangle   = -\langle g,  \Delta (A_\delta g)\rangle =  \langle g,  g\rangle.
\]
The desired result now follows.
\end{proof}

We are now in a position to prove   Lemma \ref{lem.prob1}.
\begin{lem}\label{lem.numerator2}
For any $f \in L^2(R_\delta;\CC^2) $, one has
\begin{align}
\label{numerator2} 
|\mathfrak{b}_\t[f] - b^{(\delta)}[f]| & \le \gamma \delta \,  b^{(\delta)}[f] ; \\ 
\label{denom2}| c_\t[f] - c_{\delta,\t}[f]	 | &\le\left(1 + 3 \gamma \right) \delta c_{\delta,\t}[f].
\end{align}

\end{lem}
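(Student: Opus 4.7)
The plan is to prove both estimates by rewriting the differences $\mathfrak{b}_\t[f]-b^{(\delta)}[f]$ and $c_\t[f]-c_{\delta,\t}[f]$ as sums of exactly the quantities already estimated in Lemma \ref{mainestimate} and Lemma \ref{lem.denom1}, then invoking the positivity of $b^{(\delta)}$ and $c_{\delta,\t}$ to absorb the resulting $L^2$/derivative norms into the right-hand sides. Throughout, set $u_i:=A_\t f_i$, $v_i:=A_\delta f_i$, and send $\alpha\to\pi$ in the key estimates.

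For \eqref{numerator2}, I would observe that the $\gamma^{-1}\int_{R_\delta} f\cdot\overline{f}$ terms in \eqref{20.6.24e3} and in the definition of $\mathfrak{b}_\t$ (cf. the analogous definition with $A_\t$ in place of $A_\delta$) cancel exactly. Hence
\[
\mathfrak{b}_\t[f]-b^{(\delta)}[f] = -\sum_{i,j,k,l} \langle (u_i - v_i)_{,jk},\, f_l\rangle_{R_\delta}
\]
(summed over the four index combinations appearing in \eqref{20.6.24e3}). Each of these four terms is controlled by \eqref{15.4.24e7} with $\alpha=\pi$, giving $|\mathfrak{b}_\t[f]-b^{(\delta)}[f]|\le C\delta\|f\|^2_{L^2(R_\delta;\CC^2)}$. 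Since $b^{(\delta)}[f]\ge \gamma^{-1}\|f\|^2_{L^2(R_\delta;\CC^2)}$ directly from \eqref{20.6.24e3}, this yields the bound with a constant proportional to $\gamma\delta$; careful bookkeeping of the four terms (each contributing $3/\pi<1$) suffices for the stated constant.

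For \eqref{denom2}, I would split the difference $c_\t[f]-c_{\delta,\t}[f]$ into two parts: (i) the inner-product part,
\[
\sum_{i=1}^{2}\bigl(\langle f_i, u_i\rangle - \langle f_i, v_i\rangle - g_\t(0)|d_i|^2\bigr),
\]
and (ii) the derivative part
\[
\gamma\int_{R_\delta}\!\bigl|(u_{1,2}+u_{2,1})\bigr|^2 - \gamma\int_{R_\delta}\!\bigl|(v_{1,2}+v_{2,1})\bigr|^2.
\]
Part (i) is exactly the quantity estimated in Lemma \ref{mainestimate}; summing over $i$ with $\alpha=\pi$ bounds it by $\tfrac{3\delta}{\pi}\sum_i\bigl(\langle f_i,v_i\rangle+g_\t(0)|d_i|^2\bigr)$, which is dominated by $\delta\, c_{\delta,\t}[f]$ since these are all non-negative summands of $c_{\delta,\t}[f]$. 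Part (ii) expands into terms $\langle u_{i,j},u_{k,l}\rangle-\langle v_{i,j},v_{k,l}\rangle$ controlled by \eqref{15.4.24e6}; Cauchy--Schwarz plus another application of Lemma \ref{mainestimate} and the positivity of the $\gamma$-derivative term in $c_{\delta,\t}$ lets us absorb the result into $3\gamma\delta\,c_{\delta,\t}[f]$. Adding the two parts gives the claimed constant $1+3\gamma$.

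The main technical nuisance (not a deep obstacle) is the bookkeeping in part (ii): the quantity $\int|v_{1,2}+v_{2,1}|^2$ appearing in $c_{\delta,\t}$ is a sum of four cross-terms, so the expansion $u_i=v_i+w_i$ produces both pure-$w$ contributions (bounded by $(\delta/\pi)^2$ via \eqref{harm1}) and mixed $v$--$w$ contributions (bounded linearly in $\delta/\pi$). One must check that the mixed terms can be paired against $\sqrt{\langle f_i,v_i\rangle + g_\t(0)|d_i|^2}$ via \eqref{15.4.24e6}, so that Cauchy--Schwarz returns a multiple of the full $c_{\delta,\t}[f]$ rather than a cross-term of indeterminate sign. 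Provided this is done uniformly in $i,j,k,l$, the constants line up to give precisely $(1+3\gamma)\delta$.
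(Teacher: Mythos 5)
Your proposal follows essentially the same route as the paper's proof: for \eqref{numerator2}, subtract the two forms (the $\gamma^{-1}\int f\cdot\bar f$ terms cancel since they are common to both), apply \eqref{15.4.24e7} to each of the remaining second-derivative pairings, and use the coercivity $b^{(\delta)}[f]\ge\gamma^{-1}\|f\|^2_{L^2(R_\delta;\CC^2)}$; for \eqref{denom2}, split into the inner-product part handled by Lemma \ref{mainestimate} and the gradient part handled by \eqref{15.4.24e6}, then absorb everything into $c_{\delta,\theta}[f]$ since the terms $\langle f_i,A_\delta f_i\rangle+g_\t(0)|d_i|^2$ are nonnegative summands of it. One small misconception worth flagging: in part (ii) you describe absorbing into ``the positivity of the $\gamma$-derivative term in $c_{\delta,\t}$,'' but \eqref{15.4.24e6} actually returns the quantities $\langle f_i,A_\delta f_i\rangle+g_\t(0)|d_i|^2$ (the non-derivative parts of $c_{\delta,\t}$), and it is their nonnegativity that lets the bound close against the full $c_{\delta,\t}[f]$; the separate $v$--$w$/$w$--$w$ bookkeeping you worry about in your last paragraph is already done inside the proof of \eqref{15.4.24e6} and need not be reopened here.
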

\begin{proof} \textit{Proof of \eqref{numerator2}.}
%\textbf{Step 1.} Notice that, 
By \eqref{20.6.24e1} (for $R=R_\delta$), \eqref{20.6.24e3} and \eqref{15.4.24e7}, one has 
\[
| b_\t[f]  - b^{(\delta)}[f] |  \le 3\frac{\delta}{\alpha} \big( \| f_1 \|^2 + 2 \| f_1 \| \| f_2 \| + \|f_2 \|^2 \big) = 3\frac{\delta}{\alpha} \| f\|^2 \le 3\frac{\delta}{\alpha}  \gamma b^{(\delta)}[f],
\]
for any $\alpha\in (\delta,\pi)$. 

\textit{Proof of \eqref{denom2}.} By Lemma \ref{mainestimate} and \eqref{cdform}, one has 
\begin{equation*}
	\begin{aligned}
\Big| \sum_{i=1}^2\left( \langle f_i, A_\t f_i \rangle  - \big(\langle f_i, A_\delta f_i \rangle + g_\t(0)|d_i|^2 \big) \right)\Big|& \le 3 \frac{\delta}{\alpha} \sum_{i=1}^2  \Big( \langle f_i, A_\delta f_i \rangle  + g_\t(0) |d_i|^2\Big) \\& \le 3 \frac{\delta}{\alpha} c_{\delta,\theta}[f].
	\end{aligned}
\end{equation*}
Inequality \eqref{15.4.24e6} gives
\[
\begin{aligned}
	\big| \langle (A_\t f_i)_{,j} , (A_\t f_i)_{,j} \rangle -  \langle (A_\delta f_i)_{,j} , (A_\delta f_i)_{,j} \rangle\big| &\le 2 \sqrt{3}  \frac{\delta}{\alpha}  \big(\langle f_i, A_\delta f_i \rangle + g_\t(0) |d_i|^2\big),
\end{aligned}
\]
and
\[
\begin{aligned}
\big| \langle (A_\t f_1)_{,i} , (A_\t f_2)_{,j} \rangle -  \langle (A_\delta f_1)_{,i} , (A_\delta f_2)_{,j} \rangle\big| &\le 2 \sqrt{3} \frac{\delta}{\alpha}  \sqrt{\langle f_1, A_\delta f_1 \rangle + g_\t(0) |d_1|^2  }\sqrt{\langle f_2, A_\delta f_2 \rangle + g_\t(0) |d_2|^2  }\\
& \le \sqrt{3}  \ \frac{\delta}{\alpha}   c_{\delta,\theta}[f].
\end{aligned}
\]
So, the above assertions, \eqref{20.6.24e2} and \eqref{cdform}, give
\[
| c_\t[f] - c_{\delta,\theta}[f] | \le \left(3 + 4\sqrt{3} \gamma \right) \left(\frac{\delta}{\alpha}\right) c_{\delta,\theta}[f],
\]
for $\alpha \in (\delta, \pi)$. 
\end{proof}

\begin{proof}[Proof of Lemma \ref{lem.prob1}] 
\begin{flalign*}
	\left| \frac{b_\t[f]}{c_\t[f]} - \frac{b^{(\delta)}[f]}{c_{\delta,\theta}[f]} \right| & \le \left| \frac{b_\t[f]}{c_\t[f]} - \frac{b_\t[f]}{c_{\delta,\theta}[f]} \right| +  \left| \frac{b_\t[f]}{c_{\delta,\theta}[f]} - \frac{b^{(\delta)}[f]}{c_{\delta,\theta}[f]} \right|  \\
	 & =\left( \frac{b_\t[f]}{c_\t[f]c_{\delta,\theta}[f]} \right) \big| c_\t[f] - c_{\delta,\theta}[f] \big| + \left| \frac{b_\t[f] - b^{(\delta)}[f]}{c_{\delta,\theta}[f]} \right|   \\
	& \le\left(1 + 3 \gamma \right) \delta\frac{b_\t[f]}{c_\t[f]}  +\gamma \delta \frac{b^{(\delta)}[f]}{c_{\delta,\theta}[f]}
\end{flalign*}
\end{proof}
\section*{Acknowledgements}
The authors would like to thank Professor Valery Smyshlyaev for interesting discussions and useful suggestions related to this problem. 
\section*{References}

\end{document}